\renewcommand{\le}{\leqslant}
\renewcommand{\ge}{\geqslant}
\newcommand{\ol}{\overline}
\newcommand{\eps}{\varepsilon}
\newcommand{\emp}{\emptyset}
\newcommand{\Sig}{\Sigma}
\newcommand{\noin}{\noindent}
\newcommand{\bi}{\begin{itemize}}
\newcommand{\ei}{\end{itemize}}
\newcommand{\be}{\begin{enumerate}}
\newcommand{\ee}{\end{enumerate}}
\newcommand{\bd}{\begin{description}}
\newcommand{\ed}{\end{description}}
\newcommand{\bq}{\begin{quote}}
\newcommand{\eq}{\end{quote}}
\newcommand{\cD}{{\mathcal D}}
\newcommand{\cN}{{\mathcal N}}
\newcommand{\cP}{{\mathcal P}}
\newcommand{\cU}{{\mathcal U}}
\newcommand{\lraL}{{\mathbin{\approx_L}}}
\begin{document}

\title{Unrestricted State Complexity of Binary Operations on  Regular and Ideal Languages}

\runningtitle{Unrestricted State Complexity}
\runningauthors{ \textsc{J.~Brzozowski}, \textsc{C.~Sinnamon} }
  
\author[WAT,NSERC]{Janusz~Brzozowski}
\address[WAT]{David R. Cheriton School of Computer Science, University of Waterloo, \\
Waterloo, ON, Canada N2L 3G1\\
\email[J.~Brzozowski]{brzozo@uwaterloo.ca}
\email[C.~Sinnamon]{sinncore@gmail.com}
}
\thanks[NSERC]{This work was supported by the Natural Sciences and Engineering Research Council of Canada 
grant No.~OGP0000871.}

\author[WAT,NSERC]{Corwin Sinnamon}
%\begin{document}

\maketitle

\begin{abstract}
We study the state complexity of binary operations on regular languages over different alphabets. It is  known that if $L'_m$ and $L_n$ are languages of state complexities  $m$ and $n$, respectively, and restricted to the same alphabet,   the state complexity of any binary boolean operation on $L'_m$ and $L_n$ is $mn$, and that of  product (concatenation) is $m 2^n - 2^{n-1}$. 
In contrast to this, we show that if $L'_m$ and $L_n$ are over different alphabets, 
the state complexity of union and symmetric difference is $(m+1)(n+1)$, 
that of difference is $mn+m$, 
that of intersection  is $mn$,
and that of product is  $m2^n+2^{n-1}$.
We also study unrestricted complexity of binary operations in the classes of regular right, left, and two-sided ideals, and derive tight upper bounds. 
The bounds for product of the unrestricted cases (with the bounds for the restricted cases in parentheses) are as follows: 
right ideals $m+2^{n-2}+2^{n-1}$ ($m+2^{n-2}$); 
left ideals $mn+m+n$ ($m+n-1$);
two-sided ideals $m+2n$ ($m+n-1$).
The state complexities of boolean operations on all three types of ideals are the same as those of arbitrary regular languages, whereas that is not the case if the alphabets of the arguments are the same.
Finally, we update the known results about most complex regular, right-ideal, left-ideal, and two-sided-ideal languages to include the unrestricted cases.
\medskip

\noin
{\bf Keywords:}
boolean operation, concatenation, different alphabets, left ideal,  most complex language, product, quotient complexity, regular language, right ideal, state complexity, stream, two-sided ideal, unrestricted complexity
\end{abstract}

\section{Motivation}

Formal definitions are postponed until Section~\ref{sec:basics}. 
\smallskip

The first comprehensive  paper on state complexity was published in 1970 by A.~N.~Maslov~\cite{Mas70}, but this work was unknown in the West for many years. Maslov wrote:  
 \begin{quote} {\it An important measure of the complexity of [sets of words representable in finite automata] is the number of states in the minimal representing automaton.
... if $T(A) \cup T(B)$ are representable in automata $A$ and $B$ with $m$ and $n$ states respectively ..., then:
	\be 
	\item
	$T(A) \cup T(B)$ is representable in an automaton with $m\cdot n$ states;
	\item
	$T(A).T(B)$ is representable in an automaton with $(m-1)\cdot2^n + 2^{n-1}$ states.
	\ee}
\end{quote}

In this formulation these statements are false: we will show that union may require $(m+1)(n+1)$ states and product (concatenation), $m2^n+2^{n-1}$ states. However, Maslov must have had in mind languages over the same alphabet, in which case the statements are correct.

The second comprehensive paper on state complexity was published by S.~Yu, Q.~Zhuang and K.~Salomaa~\cite{YZS94} in 1994. Here the authors wrote:
\begin{quote}{\it
\be
\item
... for any pair of complete $m$-state DFA $A$ and $n$-state DFA $B$ defined on the same alphabet $\Sigma$, there exists a DFA with at most $m2^n-2^{n-1}$ states which accepts $L(A)L(B)$.
\item
... $m\cdot n$ states are ... sufficient ... for a DFA to accept the intersection (union) of an $m$-state DFA language and an $n$-state DFA language.
\ee}
\end{quote}

 The first statement includes the same-alphabet restriction, but the second omits it (presumably it is implied by the context). 
Here DFA stands for \emph{deterministic finite automaton}, and \emph{complete} means that there is a transition from every state under every input letter.

After these two papers appeared many authors studied the state complexity of various operations in various classes of regular languages, always using witnesses restricted to the same alphabet.
However, we point out that the  same-alphabet restriction is unnecessary: 
there is no reason why we should not  compute  the union or product of two languages over different alphabets.
In fact, the software package \emph{Grail},  for instance,  ({\small \tt http://www.csit.upei.ca/theory/}) allows the user to calculate the result of these operations. 

As an example, let us consider the union of 
languages $L'_2=\{a,b\}^*b$ and $L_2=\{a,c\}^*c$  accepted by the minimal complete two-state automata $\cD'_2$ and $\cD_2$ of Figure~\ref{fig:example}, where an incoming arrow denotes the initial state and a double circle represents a final state.

\begin{figure}[ht]
\unitlength 9pt
\begin{center}\begin{picture}(30,4.3)(0,6.5)
\gasset{Nh=2,2,Nw=2.2,Nmr=1.25,ELdist=0.4,loopdiam=1.5}
{\small
\node(0')(1,7){$0'$}\imark(0')
\node(1')(8,7){$1'$}\rmark(1')
\node(0)(22,7){0}\imark(0)
\node(1)(29,7){1}\rmark(1)
\drawloop(0'){$a$}
\drawloop(1'){$b$}
\drawedge[curvedepth= .8,ELdist=.4](0',1'){$b$}
\drawedge[curvedepth= .8,ELdist=.4](1',0'){$a$}
\drawloop(0){$a$}
\drawloop(1){$c$}
\drawedge[curvedepth= .8,ELdist=.4](0,1){$c$}
\drawedge[curvedepth= .8,ELdist=.4](1,0){$a$}
}
\end{picture}\end{center}
\caption{Two minimal complete DFAs $\cD'_2$ and $\cD_2$.}
\label{fig:example}
\end{figure}
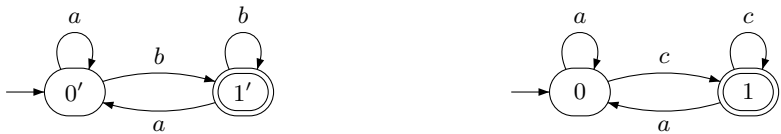
The union of $L'_2$ and $L_2$ is a language over three letters. To find the DFA for $L'_2 \cup L_2$, we view $\cD'_2$ and $\cD_2$ as incomplete DFAs, the first missing all transitions under $c$, and the second, under $b$. 
After adding the missing transitions we obtain DFAs $\cD'_3$ and $\cD_3$ shown in Figure~\ref{fig:complete}. Now we can proceed as is usually done in the same-alphabet approach, and  use the direct product of 
$\cD'_3$ and $\cD_3$ to find $L_2'\cup L_2$. Here it turns out that six states are necessary to represent $L'_2\cup L_2$, but the state complexity of  union is actually 
$(m+1)(n+1)$. 

\begin{figure}[ht]
\unitlength 9pt
\begin{center}\begin{picture}(30,9)(0,1.6)
\gasset{Nh=2.2,Nw=2,2,Nmr=1.25,ELdist=0.4,loopdiam=1.5}
{\small
\node(0')(1,7){$0'$}\imark(0')
\node(1')(8,7){$1'$}\rmark(1')
\node(2')(4.5,3){$2'$}
\node(0)(22,7){0}\imark(0)
\node(1)(29,7){1}\rmark(1)
\node(2)(25.5,3){2}
\drawloop(0'){$a$}
\drawloop(1'){$b$}
\drawedge[curvedepth= .8,ELdist=.4](0',1'){$b$}
\drawedge[curvedepth= .8,ELdist=.4](1',0'){$a$}
\drawloop[loopangle=270,ELpos=25](2'){$a,b,c$}
\drawloop(0){$a$}
\drawloop(1){$c$}
\drawedge[ELdist=-1.1](0',2'){$c$}
\drawedge[ELdist=.3](1',2'){$c$}
\drawedge[curvedepth= .8,ELdist=.4](0,1){$c$}
\drawedge[curvedepth= .8,ELdist=.4](1,0){$a$}
\drawedge[ELdist=-1.1](0,2){$b$}
\drawedge[ELdist=.3](1,2){$b$}
\drawloop[loopangle=270,ELpos=25](2){$a,b,c$}
}
\end{picture}\end{center}
\caption{DFAs $\cD'_3$ and $\cD_3$ over three letters.}
\label{fig:complete}
\end{figure}
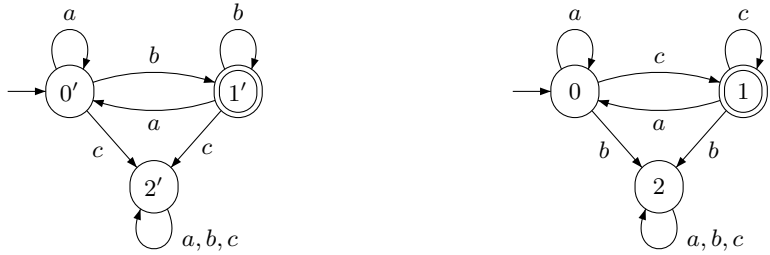

In general, when calculating the result of a binary operation on regular languages with different alphabets, we deal with special incomplete DFAs that are only missing some letters and all the transitions caused by these letters. The complexity of incomplete DFAs has been studied previously by Gao, K. Salomaa, and Yu~\cite{GSY11} and by Maia, Moreira and Reis~\cite{MMR15}. However, the objects studied there are \emph{arbitrary} incomplete DFAs, whereas we are interested only in \emph{complete DFAs with some missing letters}. 
Secondly, we study \emph{state} complexity, whereas the above-mentioned papers deal mainly with \emph{transition} complexity.
Nevertheless, there is some overlap. It was shown in~\cite[Corollary 3.2]{GSY11} that the incomplete state complexity of union is less than or equal to $mn+m+n$, and that this bound is tight in some special cases. In~\cite[Theorem 2]{MMR15}, witnesses that work in all cases were found. These complexities correspond to our result for union in Theorem~\ref{thm:boolean}.
Also in~\cite[Theorem 5]{MMR15}, the incomplete state complexity of product is shown to be $m2^n+2^{n-1}-1$, and this corresponds to our result for product in Theorem~\ref{thm:product}.

In this paper we remove the restriction of equal alphabets of the two operands. 
We prove that
the complexity of union and symmetric difference is $(m+1)(n+1)$, 
that of difference is $mn+m$, and that of intersection is $mn$,                                                                                                     
and that of the product is  $m2^n+2^{n-1}$, if each language's own alphabet is used. 
We exhibit a new most complex regular language that meets the complexity bounds for  restricted and unrestricted boolean operations, restricted and unrestricted products, star, and reversal, has a maximal syntactic semigroup and most complex atoms.  All the witnesses used here are derived from that one most complex language.

A much shorter version of this paper appeared in~\cite{Brz16}. 
That paper dealt only with  unrestricted product and binary boolean operations on regular languages.
Here we include a shorter proof of the theorem about unrestricted product of regular languages, and  establish the unrestricted complexities of product and binary boolean operations on right, left  and two-sided ideals.

\section{Terminology and Notation}
\label{sec:basics}

If $\Sig$ is a finite alphabet and $L\subseteq \Sig^*$, the \emph{alphabet of $L$} is the set
$\Sig_L=\{a \in \Sig\mid uav\in L \text{ for some } u,v\in\Sig^*\}$.
%
%We say that the \emph{alphabet of a regular language} $L$ is $\Sig$ \rt\lt
%%(or that \emph{$L$ is a language over $\Sig$}) 
%if $L\subseteq \Sig^*$ and for each letter $a\in \Sig$ there is a word $uav$ in $L$.
A basic complexity measure of $L$ with alphabet $\Sig_L$ is the number $n$ of distinct (left) quotients of $L$ by words in $\Sig_L^*$, where a \emph{(left) quotient} of $L$ by a word $w\in\Sig_L^*$ is $w^{-1}L=\{x\mid wx\in L\}$. 
The number of quotients of $L$ is its \emph{quotient complexity}~\cite{Brz10a}, $\kappa(L)$.

Unless otherwise specified, for a regular language $L$ with alphabet $\Sigma_L$ we define the complement of $L$ by $\ol{L} = \Sigma_L^* \setminus L$.
%Recall that the alphabet of a regular language consists of those letters that appear in at least one word in the language;
%hence the complement of any language is determined entirely by the language itself and not by context.
With this definition it is not always true that $\kappa(L)=\kappa(\ol{L})$ as $L$ and $\ol{L}$ may have different alphabets.
For example, if $L = \{a,b\}^* \setminus a^*$, then $\kappa(L) = 2$ while $\kappa(\ol{L}) = \kappa(a^*) = 1$, for the alphabet of $\ol{L}$ is $\{a\}$ instead of $\{a,b\}$.
There is only one way for this to occur:
In order for their alphabets to be different, there must be a letter in the alphabet of $L$ such that every word containing the letter is in the language, so that the letter is not present in $\ol{L}$.
Hence we have $\kappa(\ol{L}) \in \{ \kappa(L), \kappa(L)-1\}$, and it is usually easy to determine the complexity of $\ol{L}$ when presented with a specific language $L$.

Let $L_n$ be regular language with quotient complexity $n$, let $\circ$ be a unary operation on languages, and let $L_{ n }^\circ$ be the result of the operation.
The quotient complexity of the operation $\circ$ is the maximal value of $\kappa(L_{ n}^\circ)$ as a function of $n$, as $L_n$ ranges over all regular languages with quotient complexity $n$. 

Let $L'_m$ and $L_n$  be regular languages of  quotient complexities $m$ and $n$ that have alphabets $\Sig'$ and $\Sig$, respectively, let $\circ$ be a binary operation on languages, and let $L'_m \circ L_n$  be the result of the operation.
The \emph{quotient complexity of  $\circ$} 
is  the maximal value of $\kappa(L'_m \circ L_n)$ as a function of $m$ and $n$,
as $L'_m$ and $L_n$ range over all regular languages of quotient complexities $m$ and $n$, respectively.

%\smallskip

%

A \emph{deterministic finite automaton (DFA)} is a quintuple
$\cD=(Q, \Sigma, \delta, q_0,F)$, where
$Q$ is a finite non-empty set of \emph{states},
$\Sig$ is a finite non-empty \emph{alphabet},
$\delta\colon Q\times \Sig\to Q$ is the \emph{transition function},
$q_0\in Q$ is the \emph{initial} state, and
$F\subseteq Q$ is the set of \emph{final} states.
We extend $\delta$ to a function $\delta\colon Q\times \Sig^*\to Q$ as usual.
A~DFA $\cD$ \emph{accepts} a word $w \in \Sigma^*$ if ${\delta}(q_0,w)\in F$. The language accepted by $\cD$ is denoted by $L(\cD)$. If $q$ is a state of $\cD$, then the language $L^q$ of $q$ is the language accepted by the DFA $(Q,\Sigma,\delta,q,F)$. 
A state is \emph{empty} (or \emph{dead} or a \emph{sink state}) if its language is empty. Two states $p$ and $q$ of $\cD$ are \emph{equivalent} if $L^p = L^q$. 
A state $q$ is \emph{reachable} if there exists $w\in\Sig^*$ such that $\delta(q_0,w)=q$.
A DFA  is \emph{minimal} if all of its states are reachable and no two states are equivalent.
Usually DFAs are used to establish upper bounds on the complexity of operations, and also as witnesses that meet these bounds.

The \emph{state complexity}~\cite{YZS94}  of a regular language $L$ is the number of states in a complete minimal DFA with alphabet $\Sig_L$ which recognizes the language.
This  concept is equivalent to quotient complexity of $L$.
For example, the state complexity of the language $a^*$ is one. 
There is a two-state minimal DFA with alphabet $\{a,b\}$ accepting $a^*$, but its alphabet is not $\Sig_L$.

Since we do not use any other measures of complexity in this paper (with the exception of one mention of time and space complexity in this paragraph), we refer to quotient/state complexity simply as \emph{complexity}.
The quotient/state complexity of an operation gives a worst-case lower bound on the time and space complexities of the operation. For this reason it has been studied extensively; see~\cite{Brz10a,Brz13,Yu01,YZS94} for additional references.

If $\delta(q,a)=p$ for a state  $q\in Q$ and a letter $a\in \Sig$, we say there is a \emph{transition} under $a$ from $q$ to $p$ in $\cD$.
The DFAs defined above are \emph{complete} in the sense that there is \emph{exactly one} transition for each state  $q\in Q$ and each letter $a\in \Sig$. If there is \emph{at most one transition}  for each   $q\in Q$ and $a\in \Sig$, the automaton is an \emph{incomplete} DFA.

A \emph{nondeterministic finite automaton (NFA)} is a 5-tuple
$\cD=(Q, \Sigma, \delta, I,F)$, where
$Q$,
$\Sig$ and $F$ are defined as in a DFA, 
$\delta\colon Q\times \Sig\to 2^Q$ is the \emph{transition function}, and
$I\subseteq Q$ is the \emph{set of initial states}. 
An \emph{$\eps$-NFA} is an NFA in which transitions under the empty word $\eps$ are also permitted.

To simplify the notation, without loss of generality we use $Q_n=\{0,\dots,n-1\}$ as our basic set of $n$ elements.
A \emph{transformation} of $Q_n$ is a mapping $t\colon Q_n\to Q_n$.
The \emph{image} of $q\in Q_n$ under $t$ is denoted by $qt$.
For $k\ge 2$, a transformation (permutation) $t$ of a set $P=\{q_0,q_1,\ldots,q_{k-1}\} \subseteq Q_n$ is a \emph{$k$-cycle}
if $q_0t=q_1, q_1t=q_2,\ldots,q_{k-2}t=q_{k-1},q_{k-1}t=q_0$.
This $k$-cycle is denoted by $(q_0,q_1,\ldots,q_{k-1})$, and acts as the identity on the states in $Q_n\setminus P$.
A~2-cycle $(q_0,q_1)$ is called a \emph{transposition}.
 A transformation that changes only one state $p$ to a state $q\neq p$ and acts as the identity for the other states is denoted by $(p\to q)$.
 The identity transformation is denoted by $\mathbbm 1$.
If $s,t$ are transformations of $Q_n$, their composition  when applied to $q \in Q_n$ is defined by
$(qs)t$.
The set of all $n^n$ transformations of $Q_n$ is a monoid under composition. 

We use $Q_n$ as the set of states of every DFA with $n$ states, and 0 as the initial state.
In any DFA $\cD_n=(Q_n, \Sigma, \delta, 0,F)$  each $a\in \Sig$ induces a transformation $\delta_a$ of  $Q_n$ defined by $q\delta_a=\delta(q,a)$; we denote this by $a\colon \delta_a$. 
For example, when defining the transition function of a DFA, we write $a\colon (0,1)$ to mean that
$\delta(q,a)=q(0,1)$, where the transformation $(0,1)$ acts on state $q$ as follows: if $q$ is 0  it maps it to 1, if $q$ is 1 it maps it to 0,
and it acts as the identity on the remaining states.

By a slight abuse of notation we use the letter $a$ to denote the transformation it induces; thus we write $qa$ instead of $q\delta_a$.
We extend the notation to sets of states: if $P\subseteq Q_n$, then $Pa=\{pa\mid p\in P\}$.
We also find it convenient to write $P\stackrel{a}{\longrightarrow} Pa$ to indicate that the image of $P$ under $a$ is $Pa$.

We extend these notions to arbitrary words. For each word $w \in \Sig^*$, the transition function induces a transformation $\delta_w$ of $Q_n$ by  $w$: for all $q \in Q_n$, 
$q\delta_w = \delta(q, w).$ 
The set $T_{\cD_{n}}$ of all such transformations by non-empty words is the \emph{transition semigroup} of $\cD_{n}$ under composition~\cite{Pin97}. 

The \emph{Myhill congruence} $\lraL_n$~\cite{Myh57} (also known as the \emph{syntactic congruence})  of a language $L_n \subseteq \Sig^*$ is defined on $\Sig^+$ as follows:
For $x, y \in \Sig^+,  x \, \lraL_n \, y $  if and only if  $wxz\in L_n  \Leftrightarrow wyz\in L_n$ for all  $w,z \in\Sig^*.
$
The quotient set $\Sig^+/ \lraL_n$ of equivalence classes of  $\lraL_n$ is a semigroup, the \emph{syntactic semigroup} $T_{L_n}$ of $L_n$.

If  $\cD_n$ is a minimal DFA of $L_n$, then $T_{\cD_n}$ is isomorphic to the syntactic semigroup $T_{L_n}$ of $L_n$~\cite{Pin97}, and we represent elements of $T_{L_n}$ by transformations in~$T_{\cD_n}$. 
The size of this semigroup has been used as a measure of complexity~\cite{Brz13,BrYe11,HoKo04,KLS05}.

The \emph{atom congruence} is a left congruence defined as follows: two words $x$ and $y$ are equivalent if 
 $ux\in L$ if and only if  $uy\in L$ for all $u\in \Sig^*$. 
 Thus $x$ and $y$ are equivalent if
 $x\in u^{-1}L$ if and only if $y\in u^{-1}L$.
 An equivalence class of this relation is called an \emph{atom} of $L$~\cite{BrTa14,Iva16}. 
It follows that an atom is a non-empty intersection of complemented and uncomplemented quotients of $L$. 
The number of atoms and their quotient complexities are possible measures of complexity of regular languages~\cite{Brz13}.
For more information about atoms and their complexity, see~\cite{BrTa13,BrTa14,Iva16}.

A sequence $(L_n, n\ge k)=(L_k,L_{k+1},\dots)$, of regular languages is called a \emph{stream}; here $k$ is usually some small integer, and the languages in the stream usually have the same form and differ only in the parameter $n$. For example, $(\{a,b\}^*a^n\{a,b\}^* \mid n\ge 2)$ is a stream.
To find the complexity of a binary operation $\circ$ we need to find an upper bound on this complexity and two streams 
 $(L'_m, m \ge h)$ and $(L_n, n\ge k)$ of languages meeting this bound.
 In general, the two streams are different, but there are many examples where $L'_n$ ``differs only slightly'' from $L_n$; such a language $L'_n$ is called a \emph{dialect}~\cite{Brz13} of $L_n$, and is defined below. 

Let $\Sig=\{a_1,\dots,a_k\}$ be an alphabet; we assume that its elements are ordered as shown.
Let $\pi$ be a \emph{partial permutation} of $\Sig$, that is, a partial function $\pi \colon \Sig \rightarrow \Gamma$ where $\Gamma \subseteq \Sig$, for which there exists $\Delta \subseteq \Sig$ such that $\pi$ is bijective when restricted to $\Delta$ and  undefined on $\Sig \setminus \Delta$. 
We denote undefined values of $\pi$ by  ``$-$'', that is, we write $\pi(a)=-$, if $\pi$ is undefined at $a$.

If $L\subseteq \Sig^*$, we denote it by $L(a_1,\dots,a_k)$ to stress its dependence on $\Sig$.
If $\pi$ is a partial permutation, let $s_\pi(L(a_1,\dots,a_k))$ be the language obtained from $L(a_1,\dots,a_k)$ by the substitution  $s_\pi$ defined as follows: 
 for $a\in \Sig$, 
$a \mapsto \{\pi(a)\}$ if $\pi(a)$ is defined, and $a \mapsto \emp$ otherwise.
The \emph{permutational dialect}, or simply \emph{dialect},  of $L(a_1,\dots,a_k)$ defined by $\pi$ is the  language 
$L(\pi(a_1),\dotsc,\pi(a_k))= s_\pi(L(a_1,\dots,a_k))$.

Similarly,
let $\cD = (Q_n,\Sig,\delta,0,F)$ be a DFA; we denote it by
$\cD(a_1,\dots,a_k)$ to stress its dependence on $\Sig$.
If $\pi$ is a partial permutation, then the \emph{permutational dialect}, or simply \emph{dialect}, 
$\cD(\pi(a_1),\dotsc,\pi(a_k))$ of
$\cD(a_1,\dots,a_k)$ is obtained by changing the alphabet of $\cD$ from $\Sig$ to $\pi(\Sig)$, and modifying $\delta$ so that in the modified DFA 
$\pi(a_i)$ induces the transformation induced by $a_i$  in the original DFA.
One verifies that if the language $L(a_1,\dots,a_k)$ is accepted by DFA $\cD(a_1,\dots,a_k)$, then
$L(\pi(a_1),\dotsc,\pi(a_k))$ is accepted by $\cD(\pi(a_1),\dotsc,\pi(a_k))$.

If the letters for which $\pi$ is undefined are at the end of the alphabet $\Sig$, then they are omitted. For example,
if $\Sig=\{a,b,c,d\}$ and $\pi(a)=b$, $\pi(b)=a$, and $\pi(c)=\pi(d)=-$, then we write $L_n(b,a)$ for $L_n(b,a,-,-)$, etc.

A \emph{most complex stream } of regular language is one that, together with some dialect streams, meets 
 the complexity bounds for all boolean operations, product, star, and reversal, and has the largest syntactic semigroup and most complex atoms.
In looking for a most complex stream we try to use the smallest possible alphabet sufficient to meet all the bounds.
Most complex streams are useful in systems dealing with regular languages and finite automata. One would like to know the maximal sizes of automata that can be handled by the system. In view of the existence of most complex streams, one stream can be used to test all the operations.
     \medskip

\section{Regular Languages}

The DFA of Definition~\ref{def:regular} will be used for both product and boolean operations on regular languages; this DFA is the 4-input DFA called $\cU_n(a,b,c,d)$ in~\cite{Brz13},
where it was shown that  $\cU_n(a,b,c)$ is a ``universal witness'', that is, 
$(\cU_n(a,b,c)\mid n\ge 3)$ is a most complex regular stream  for all common restricted operations.
We now prove that $\cU_n(a,b,c,d)$ (renamed $\cD_n(a,b,c,d)$ below), together with some of its permutational dialects, is most complex  for both restricted and unrestricted operations.

\begin{definition}
\label{def:regular}
For $n\ge 3$, let $\cD_n=\cD_n(a,b,c,d)=(Q_n,\Sig,\delta_n, 0, \{n-1\})$, where 
$\Sig=\{a,b,c,d\}$, 
and $\delta_n$ is defined by the transformations
$a\colon (0,\dots,n-1)$,
$b\colon(0,1)$,
$c\colon(n-1 \rightarrow 0)$, and
$d\colon {\mathbbm 1}$.
Let $L_n=L_n(a,b,c,d)$ be the language accepted by~$\cD_n$.
The structure of $\cD_n(a,b,c,d)$ is shown in Figure~\ref{fig:RegWit}. 
\end{definition}

\begin{figure}[ht]
\unitlength 8.5pt
\begin{center}\begin{picture}(37,10)(0,2)
\gasset{Nh=1.9,Nw=3.6,Nmr=1.25,ELdist=0.4,loopdiam=1.5}
	{\small
\node(0)(1,7){0}\imark(0)
\node(1)(8,7){1}
\node(2)(15,7){2}
\node[Nframe=n](3dots)(22,7){$\dots$}
\node(n-2)(29,7){$n-2$}
\node(n-1)(36,7){$n-1$}\rmark(n-1)
\drawloop(0){$c,d$}
\drawedge[curvedepth= .8,ELdist=.1](0,1){$a,b$}
\drawedge[curvedepth= .8,ELdist=-1.2](1,0){$b$}
\drawedge(1,2){$a$}
\drawloop(2){$b,c,d$}
\drawedge(2,3dots){$a$}
\drawedge(3dots,n-2){$a$}
\drawloop(n-2){$b,c,d$}
\drawedge(n-2,n-1){$a$}
\drawedge[curvedepth= 4.0,ELdist=-1.0](n-1,0){$a,c$}
\drawloop(n-1){$b,d$}
\drawloop(1){$c,d$}
}
\end{picture}\end{center}
\caption{ DFA  of  Definition~\ref{def:regular}.}
\label{fig:RegWit}
\end{figure}
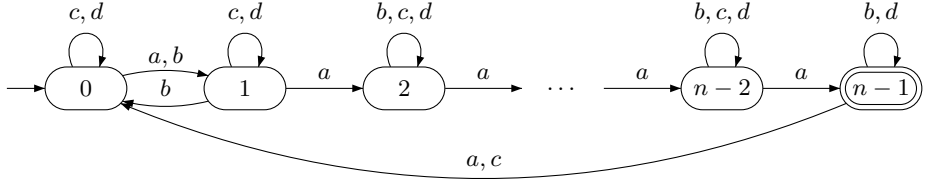

\subsection{Product of Regular Languages}

\begin{theorem}[(Product of Regular Languages)]
\label{thm:product}
For $m,n \ge 3$, let $L'_m$ (respectively, $L_n$) be a regular language with $m$ (respectively, $n$) quotients over an alphabet $\Sig'$, (respectively, $\Sig$). 
Then $\kappa(L'_m L_n) \le m2^n+2^{n-1}$, and this bound is met by $L'_m(a,b,-,c)$ and $L_n(b,a,-,d)$ of Definition~\ref{def:regular}.
\end{theorem}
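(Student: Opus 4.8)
The plan is to prove the upper bound by a subset construction on the standard $\eps$-NFA for the product, and then to show tightness by establishing reachability and pairwise distinguishability of the conjectured reachable set for the witnesses $L'_m(a,b,-,c)$ and $L_n(b,a,-,d)$. For the upper bound, I would form the product NFA from minimal DFAs $\cD'$ of $L'_m$ and $\cD$ of $L_n$ by adding an $\eps$-transition from each final state of $\cD'$ to the initial state $0$ of $\cD$. Since the operands are over different alphabets, I first complete both machines over $\Sig'\cup\Sig$: every letter of $\Sig\setminus\Sig'$ sends every state of $\cD'$ to a new sink (empty) state $\bot$, and every letter of $\Sig'\setminus\Sig$ sends every state of $\cD$ to its sink, which I absorb into the empty subset. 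Determinizing, the reachable states have the form $(p,S)$ with $p\in\{0,\dots,m-1,\bot\}$ and $S\subseteq Q_n$, subject only to the constraint that $0\in S$ whenever $p$ is a final state of $\cD'$ (forced by the $\eps$-transition), and $(p,S)$ is accepting iff $n-1\in S$. With exactly one final state there are $m\cdot 2^n$ choices when $p$ is non-final (the $m-1$ non-final states together with $\bot$) and $2^{n-1}$ when $p$ is final, giving $m2^n+2^{n-1}$; checking that $k$ final states yield $(m-k+1)2^n+k2^{n-1}$, which is maximized at $k=1$, confirms the bound for all $L'_m,L_n$. The extra $2^n$ over the same-alphabet bound comes precisely from the sink state $\bot$ paired with all $2^n$ subsets.

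For the witnesses, the letters $a,b$ induce on $\cD$ the transposition $(0,1)$ and the $n$-cycle $(0,1,\dots,n-1)$, hence generate all of $S_n$; letter $d$ is the identity on $\cD$ and the missing letter $c$ resets $S$ to $\emptyset$. On $\cD'$, $a$ is the $m$-cycle, $b$ the transposition, $c$ the identity, and the missing letter $d$ sends everything to $\bot$. The decisive features are that $d$ kills the first component while leaving $S$ fixed, and that $c$ clears $S$. I would reduce everything to reaching $(0,S)$ for all $S\subseteq Q_n$: from $(0,S)$, reading $a^{m-1}$ reaches $(m-1,\,S(0,1)^{m-1}\cup\{0\})$ and so yields every $(m-1,T)$ with $0\in T$; reading $a^{p}$ for $1\le p\le m-2$ yields every $(p,T)$; and reading $d$ yields every $(\bot,T)$.

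The crux is reaching all $(0,S)$, and here the obstacle is the coupling of the two components: injecting a state into $S$ forces the first component through the final state $m-1$, and the permutations of $\cD$ drag the first component along as well. I would resolve this with \emph{injection-free loops} at state $0$: reading $b^2$ returns the first component to $0$ (passing only through state $1\ne m-1$) and applies $(0,1,\dots,n-1)^2$ to $S$, while reading $ab$ returns to $0$ and applies $(0,1)(0,1,\dots,n-1)$. These generate a subgroup $G\le S_n$ which I would verify acts transitively on $Q_n$ for every $n\ge 3$. Together with one injecting excursion $a^m$ at state $0$ (which returns to $0$ and sends $S\mapsto S(0,1)^m\cup\{1\}$), transitivity of $G$ lets me add a single prescribed element to $S$ while fixing the rest, by conjugating the injection with a suitable $g\in G$; an induction on $|S|$ then reaches every $(0,S)$.

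For distinguishability, recall $(p,S)$ is accepting iff $n-1\in S$. If $S_1\ne S_2$, pick $j$ in the symmetric difference, read $d$ to reach $(\bot,S_i)$ where $a,b$ now act as pure permutations, and apply a word realizing $\sigma\in S_n$ with $j\sigma=n-1$, so exactly one of $S_1\sigma,S_2\sigma$ contains $n-1$. If $S_1=S_2$ but $p_1\ne p_2$, I would use $c$ to clear $S$ after a short $a$-power chosen (using $m\ge 3$) so that neither component sits at $m-1$, then read a power of $a$ driving exactly one component to $m-1$; this injects $0$ into only one copy, producing distinct subsets, and reduces to the previous case. I expect the reachability of all $(0,S)$—specifically the transitivity of $G$ and the organization of the single-element insertion against the coupling—to be the main technical obstacle; the distinguishability argument and the upper-bound count are comparatively routine once the roles of $c$ (reset) and $d$ (kill-and-preserve) are isolated.
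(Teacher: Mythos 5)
Your proposal is correct in substance and follows the same overall skeleton as the paper (standard $\eps$-NFA plus subset construction; upper bound by counting states according to the first component, with the extra $2^n$ coming from the sink/empty first component; then reachability and distinguishability for the same witnesses), but the heart of the argument --- reachability --- is done by a genuinely different route. The paper invokes a lemma (attributed to S.~Davies, stated without proof) asserting that, when both transition semigroups are groups, reachability of the singletons $\{p'\}$ and of the pairs $\{0',q\}$ already implies reachability of all sets $\{p'\}\cup S$ and $\{f',0\}\cup S$; it then only has to exhibit the pairs $\{0',q\}$, which it does with short words in $a$ and $(bb)^*$. You instead build all states $\{0'\}\cup S$ directly by induction on $|S|$, using injection-free loops $b^2$ and $ab$ at $0'$ (which realize $(0,\dots,n-1)^2$ and $(0,1)(0,\dots,n-1)$ on $Q_n$ without passing through the final state $(m-1)'$) together with the single injecting excursion $a^m$, and then derive all other reachable states from $(0',S)$ by $a^p$, $a^{m-1}$ and $d$. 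Your argument is self-contained where the paper's relies on an external lemma; the paper's is shorter and isolates a reusable reduction. Both give exactly the $(m-1)2^n+2^n+2^{n-1}$ reachable classes needed.

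Two points need tightening. First, the transitivity of your subgroup $G=\langle (0,\dots,n-1)^2,\,(0,1)(0,\dots,n-1)\rangle$ is asserted but deferred; it is true (for $n$ odd the square of the $n$-cycle is already an $n$-cycle, and for $n$ even the second generator is the $(n-1)$-cycle $(0,2,3,\dots,n-1)$ whose orbit is completed to $Q_n$ by the first generator), but the proof is incomplete without this check. Second, in the distinguishability step for $S_1=S_2$, $p_1\ne p_2$, the phrase ``a power of $a$ driving exactly one component to $m-1$'' is not sufficient as stated: if you drive the component that is \emph{farther} from $m-1$, the other component passes through $m-1$ en route, also acquires an injected $0$, and after an even number of residual steps the two subsets can coincide (e.g.\ $m=4$: from $(0,\emp)$ and $(2,\emp)$ the word $a^3$ yields $(3,\{0\})$ and $(1,\{0\})$). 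You must take the \emph{smaller} of the two powers $(m-1-p_i)\bmod m$, which is exactly the choice the paper makes with its word $ca^{m-1-q}$ for $p<q$. With these two repairs the proposal is a complete proof.
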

\begin{proof}
First we derive the upper bound.
Let $\cD'_m=( Q'_m, \Sig', \delta', 0',F')$  and $\cD_n=(Q_n,\Sig,\delta,0,F) $ be minimal DFAs of arbitrary regular languages $L'_m$  and $L_n$, respectively.
We use the normal construction of an $\eps$-NFA $\cN$ to recognize $L'_mL_n$, by introducing an $\eps$-transition from each final state of $\cD'_m$ to the initial state of $\cD_n$, and changing all final states of $\cD'_m$ to non-final. This is illustrated in Figure~\ref{fig:product}, where $(m-1)'$ is the only final state of $\cD'_m$.
We then determinize $\cN$ using the subset construction to get the DFA $\cD$ for $L'_m L_n$.

Suppose $\cD'_m$ has $k$ final states, where $1 \le k \le m-1$.
We will show that $\cD$ can have only the following types of states: (a)  
at most $(m-k)2^n$ states $\{p'\} \cup S$, where $p'\in Q'_m\setminus F'$, and $S\subseteq Q_n$, 
(b) at most $k2^{n-1}$ states  
$\{ p' , 0\} \cup S$, where $p'\in F'$ and $S\subseteq Q_n\setminus \{0\}$,
and (c) at most $2^n$ states $S\subseteq Q_n$.
Because $\cD'_m$ is deterministic, there can be at most one state $p'$ of $Q'_m$ in any reachable subset. 
If $p' \notin F'$, it may be possible to reach any subset of states of $Q_n$ along with $p'$, and this accounts for (a).
If $p' \in F'$, then the set must contain $0$ and possibly any subset of $Q_n\setminus \{0\}$, giving (b).
It may also be possible to have any subset $S$ of $Q_n$ by applying an input that is not in $\Sig'$ to $\{0'\} \cup S$ to get $S$, and so we have~(c).
Altogether, there are at most $(m-k)2^n +k2^{n-1}+2^n = (2m-k)2^{n-1} + 2^n$ reachable subsets. This expression reaches its maximum when $k=1$, and so we have at most $m2^n+2^{n-1}$ states in $\cD$.

\begin{figure}[ht]
\unitlength 7.5pt
\begin{center}\begin{picture}(37,18)(-4,2)
\gasset{Nh=2.2,Nw=5,Nmr=1.25,ELdist=0.4,loopdiam=1.5}
	{\small
\node(0')(-4,14){$0'$}\imark(0')
\node(1')(3,14){$1'$}
\node(2')(10,14){$2'$}
\node[Nframe=n](3dots')(17,14){$\dots$}
\node(m-1')(24,14){$(m-1)'$}
	
\drawedge[curvedepth= 1.4,ELdist=-1.3](0',1'){$a,b$}
\drawedge[curvedepth= 1,ELdist=.3](1',0'){$b$}
\drawedge(1',2'){$a$}
\drawedge(2',3dots'){$a$}
\drawedge(3dots',m-1'){$a$}
\drawedge[curvedepth= -5.2,ELdist=-1](m-1',0'){$a$}
\drawloop(0'){$c$}
\drawloop(1'){$c$}
\drawloop(2'){$b,c$}
\drawloop(m-1'){$b,c$}
\gasset{Nh=2.2,Nw=5,Nmr=1.25,ELdist=0.4,loopdiam=1.5}
\node(0)(7,7){0}\imark(0)
\node(1)(14,7){1}
\node(2)(21,7){2}
\node[Nframe=n](3dots)(28,7){$\dots$}
\node(n-1)(35,7){$n-1$}\rmark(n-1)
	
\drawloop(0){$d$}
\drawloop(1){$d$}
\drawloop(2){$a,d$}
\drawloop(n-1){$a,d$}
\drawedge[curvedepth= 1.2,ELdist=-1.3](0,1){$a,b$}
\drawedge[curvedepth= .8,ELdist=.25](1,0){$a$}
\drawedge(1,2){$b$}
\drawedge(2,3dots){$b$}
\drawedge(3dots,n-1){$b$}
\drawedge[curvedepth= 4.4,ELdist=-1.2](n-1,0){$b$}
\drawedge[curvedepth= -1.5,ELdist=-1](m-1',0){$\eps$}
	}
\end{picture}\end{center}
\caption{An NFA  for the product of  $L'_m(a,b,-,c)$ and $L_n(b,a,-,d)$. }
\label{fig:product}
\end{figure}
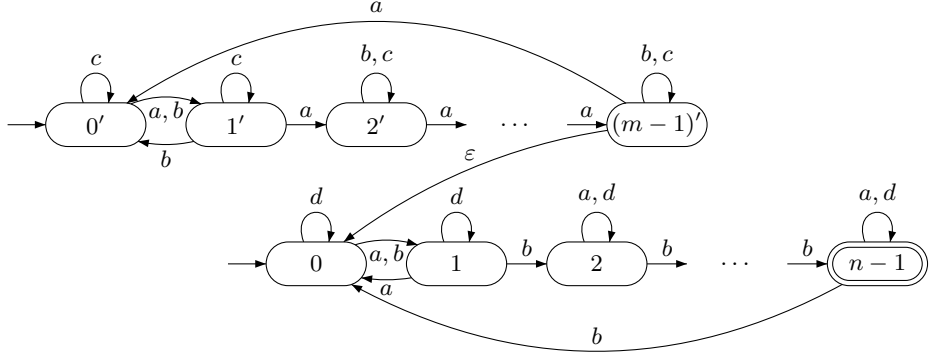

We prove that the bound is met by the  witnesses of Figure~\ref{fig:product}. We use the following result to show that all the states in the subset construction are reachable.

Suppose $\cD'_m=( Q'_m, \Sig, \delta', 0', \{f'\} )$ is a minimal DFA of $L'_m$, $f'\neq 0'$, and 
$\cD_n=(Q_n,\Sig,\delta,0,F) $ is a minimal DFA of  $L_n$.
Moreover, assume that 
the transition semigroups of $\cD'_m$ and $\cD_n$ are groups.

\begin{lemma}[(Sylvie Davies, personal communication)]
If all the sets of the form
$ \{p'\},\; p'\in Q_m' \setminus \{f'\},  \text{ and  } \{0',q\}, \; q \in Q_n $
are reachable, then so are all sets of the form 
$$ \{p'\} \cup S,\; p'\in Q_m' \setminus \{f'\}, 
\; S\subseteq Q_n \text{ and  } \{f',0\} \cup S, \; S \subseteq Q_n\setminus \{0\}. \quad \quad (*) $$
\end{lemma} 

We now prove that the conditions of the lemma apply to our case.
The initial state in the subset automaton is $\{0'\}$, state $\{p'\}$ is reached by $a^p$ if $p<m-1$, and $\{(m-1)', 0\}$ is reached by $a^{n-1}$.
Also, $\{0',1\}$ is reached by $a^m$.

\bi
\item
If $n$ is odd, from $\{0',1\}$ we reach $\{0',q\}$, $q\in Q_n$, by words in $(bb)^*$.
\item
If $n$ is even, from $\{0',1\}$ we reach $\{0',q\}$ with $q$ odd by words in $(bb)^*$.
\item
From $\{0',n-1\}$ we reach $\{0',0\}$ by $ab$. 
\item
From $\{0',0\}$ we reach  $\{0',q\}$ with $q$ even by $(bb)^*$.
\ei
Since  $ \{p'\},\; p'\in Q_m' \setminus \{f'\}$ and $\{0',q\}, \; q\in Q_n$ are reachable, so are all the sets of form $(*)$ by the Lemma.
\smallskip

For distinguishability, note that only state $q$ accepts $w_q = b^{n-1-q}$ in $\cD_n$. Hence, if two states of the product have different  sets $S$ and $S'$ and $q\in S\oplus S'$, then they can be distinguished by $w_q$. State $\{p'\} \cup S$ is distinguished from $S$ by $ca^{m-1-p}b^{n-1}$. If $p <q$, states $\{p'\} \cup S$ and $\{q'\} \cup S$ are distinguished as follows:Use $ca^{m-1-q}$ to reach $\{(p+m-1-q)'\}$ from $p'$ and $\{(m-1)', 0\}$ from $q'$. The reached states are distinguishable since they differ in their subsets of $Q_n$. 
\end{proof}

\subsection{Boolean Operations on Regular Languages}

Suppose $A, B \subseteq U$, where $U$ is some universal set.
A binary operation $\circ \colon \mathcal{P}(U) \times \mathcal{P}(U) \to \mathcal{P}(U)$ is \emph{boolean} if, for any $x \in U$, whether $x$ is included in $A \circ B$ depends only on the membership of $x$ in $A$ and $B$.
Thus there are sixteen binary boolean operations, corresponding to the number of ways of filling out the truth table below.
\begin{center}
\begin{tabular}{c|c|c}
$x \in A$ & $x \in B$ & $x \in A \circ B$\\
\hline
T & T & -\\
T & F & -\\
F & T & -\\
F & F & -
\end{tabular}
\end{center}
A boolean operation is \emph{proper} if it is not constant and does not depend on only one variable.
There are ten proper boolean operations, given below.
\begin{align*}
L'_m &\cup L_n                  &      \ol{L'_m} &\cap \ol{L_n} & &\\
\ol{L'_m} &\cup L_n          &       L'_m \cap \ol{L_n} &= L'_m \setminus L_n & L'_m&\oplus L_n\\
L'_m &\cup \ol{L_n}          &      \ol{L'_m} \cap L_n &= L_n \setminus L'_m  & L'_m &\oplus \ol{L_n}\\
\ol{L'_m} &\cup \ol{L_n}  &      L'_m &\cap L_n
\end{align*}
Although the complement of a regular language $L_n$ is usually taken with respect to $\Sigma^*$, where $\Sigma$ is the alphabet of $L_n$,
the list above requires that $\ol{L_n}$ denotes the complement of $L_n$ in a specific universal set $U$ which contains both $L'_m$ and $L_n$.
We wish for $U$ to be the set of all strings over some alphabet, and it is most natural to have $U = (\Sigma' \cup \Sigma)^*$, where $\Sigma'$ is the alphabet of $L'_m$.
Thus, contrary to its usual meaning, every use of complement in the list of operations above is taken with respect to $(\Sigma' \cup \Sigma)^*$.
%This choice is mostly consistent with the conventions of regular languages and finite automata: Binary operations on languages (and the corresponding constructions on automata) usually assume that the languages use the same alphabet.

We  study the complexities of four proper boolean operations only: union ($L'_m \cup L_n$), symmetric difference ($L'_m \oplus L_n$),
difference ($L'_m \setminus L_n$), and intersection ($L'_m \cap L_n$).
From these four it is generally a straightforward exercise to deduce the complexity of any other operation:
The complexity of $L_n \setminus L'_m$ is determined by symmetry with $L'_m \setminus L_n$, and
from De Morgan's laws we have
$\ol{L'_m \cup L_n} = \ol{L'_m} \cap \ol{L_n}$,
$\ol{L'_m \cap L_n} = \ol{L'_m} \cup \ol{L_n}$,
$\ol{L'_m \setminus L_n} = \ol{L'_m} \cup L_n$, 
$\ol{L_n \setminus L'_m} = L'_m \cup \ol{L_n}$, and
$\ol{L'_m \oplus L_n} = L'_m \oplus \ol{L_n}$.
As discussed in Terminology and Notation, $\kappa(L)$ and $\kappa(\ol{L})$ differ by at most 1 for any regular language $L$, and for any specific witness one can easily determine the discrepancy; for this reason we leave it as an exercise to verify that our witnesses meet the upper bounds for all ten proper operations based on the four operations that we address explicitly.

It turns out that the witnesses that we used for unrestricted product also work for unrestricted boolean operations.

\begin{theorem}[(Boolean Operations on Regular Languages)]
\label{thm:boolean}
For $m,n \ge 3$, let $L'_m$ (respectively, $L_n$) be a regular language with $m$ (respectively, $n$) quotients over an alphabet $\Sig'$, (respectively, $\Sig$). 
Then the complexity of union and symmetric difference is $(m+1)(n+1)$ 
and this bound is met by $L'_m(a,b,-,c)$ and $L_n(b,a,-,d)$; 
the complexity of difference is $mn+m$, and this bound is met by $L'_m(a,b,-,c)$ and
$L_n(b,a)$; the complexity of intersection is $mn$ and this bound is met by  
$L'_m(a,b)$ and $L_n(b,a)$.

\end{theorem}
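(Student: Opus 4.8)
The plan is to prove each bound by a uniform upper‑bound argument followed by a reachability/distinguishability analysis of the stated witnesses, exploiting the fact that all four witness pairs share the same algebraic structure on the direct product.

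\textbf{Upper bounds.} For intersection, $L'_m \cap L_n \subseteq (\Sig' \cap \Sig)^*$, so the direct product of $\cD'_m$ and $\cD_n$ over the common alphabet $\Sig' \cap \Sig$ already recognizes $L'_m \cap L_n$; it has $mn$ states, giving $\kappa(L'_m \cap L_n) \le mn$. For difference, $L'_m \setminus L_n \subseteq {\Sig'}^*$, so only words over $\Sig'$ matter: I would complete $\cD_n$ over $\Sig'$ by routing every letter of $\Sig' \setminus \Sig$ into a new empty state (at most $n+1$ states) and take the direct product with the unchanged $\cD'_m$, for at most $m(n+1) = mn+m$ states. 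For union and symmetric difference I would complete both machines over $\Sig' \cup \Sig$, adding an empty state to each (at most $m+1$ and $n+1$ states), so the direct product has at most $(m+1)(n+1)$ states; unlike intersection, the states with a single empty coordinate need not collapse, since one accepting coordinate already forces acceptance.

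\textbf{Reachability.} The structural fact common to all four witness pairs is that on the direct product the letter $a$ induces an $m$-cycle on the first coordinate together with the transposition $(0,1)$ on the second, while $b$ induces the transposition $(0',1')$ on the first coordinate together with an $n$-cycle on the second. Projecting to each coordinate, $a$ and $b$ give a full cycle and an adjacent transposition, hence generate $S_m$ on the first coordinate and $S_n$ on the second. By Goursat's lemma $\langle a,b\rangle \le S_m \times S_n$ is a fibre product over some common quotient; because $m,n \ge 3$ every point‑stabilizer still surjects onto that quotient, so $\langle a,b\rangle$ acts \emph{transitively} on $Q'_m \times Q_n$ and all $mn$ ``real'' product states are reachable from $(0',0)$. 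To reach the states carrying an empty coordinate I would apply $c$, which fixes the first coordinate and sends the second to its empty state, and $d$, which does the reverse: applying $c$ to each $(p',0)$ and $d$ to each $(0',q)$, plus one more application, supplies the remaining $m$, $n$, and single states in the union/symmetric‑difference count, and the $m$ extra states in the difference count.

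\textbf{Distinguishability.} Every word in $\{a,b\}^*$ acts as a bijection of $Q'_m \times Q_n$, while $c$ and $d$ let me freeze one coordinate in a non‑accepting empty state and test the other. For intersection, with unique accepting state $((m-1)',n-1)$, transitivity sends one of two distinct states exactly there, and bijectivity forces the other elsewhere. For union and symmetric difference, if two states differ in the second coordinate I apply $d$ to kill the first and then a power of $b$ to put exactly one second coordinate into $n-1$, and symmetrically with $c$ and a power of $a$ if they differ in the first; freezing a coordinate turns the disjunctive (or exclusive‑or) acceptance condition into a test of the surviving coordinate. For difference, whose acceptance is the conjunction ``first coordinate $=(m-1)'$ and second $\ne n-1$'', I instead use transitivity to send one state to $((m-1)',n-1)$, which rejects, while the other lands in $((m-1)',q)$ with $q \ne n-1$, which accepts, handling empty second coordinates by first applying $c$.

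\textbf{Main obstacle.} The load‑bearing step is the transitivity of $\langle a,b\rangle$ on $Q'_m \times Q_n$, which underlies both the reachability of the real states and every ``send‑to‑the‑accepting‑state'' distinguishing argument. The Goursat computation makes this clean, but it genuinely needs $m,n \ge 3$: for any source and target point in a coordinate, permutations of both parities must realize the move, and for two‑element coordinates the parity would be pinned down so that transitivity on the product could fail. The remaining work is bookkeeping — tracking which coordinate each witness pair leaves complete (hence whether one, both, or neither empty state appears) and, for difference, respecting the conjunctive acceptance condition throughout.
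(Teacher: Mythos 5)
Your upper-bound derivations are fine and essentially match the paper's (the paper completes both DFAs over $\Sig'\cup\Sig$ and then collapses empty rows/columns; your per-operation constructions give the same numbers more directly). Your reachability/distinguishability plan also departs from the paper in an interesting way: the paper does not prove transitivity of $\langle a,b\rangle$ on $Q'_m\times Q_n$ itself, but imports it (in the stronger form of reachability \emph{and} pairwise distinguishability for all proper boolean operations) from \cite[Theorem 1]{BBMR14}, explicitly noting that the cases $(m,n)\in\{(3,4),(4,3),(4,4)\}$ fall outside that theorem and were checked by computer. That is a warning that the group theory here is delicate, and it is exactly where your argument has a gap.

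The gap is in the Goursat step. You write that ``because $m,n\ge 3$ every point-stabilizer still surjects onto that quotient.'' This is false when the common quotient $S_m/N_1\cong S_n/N_2$ is large: if $N_1=1$ the quotient is $S_m$ itself and the point stabilizer $S_{m-1}$ certainly does not surject onto it (the graph-of-an-isomorphism case, where $G$ acts essentially diagonally and is \emph{not} transitive); likewise for $m=3$ the stabilizer $S_2$ does not surject onto the quotient $S_3$, which can also arise from $S_4/V_4$ when $n=4$. Transitivity does in fact hold for your generators, but to prove it you must first bound the common quotient — e.g., rule out the graph case by observing that $(0,\dots,m-1)$ and $(0',1')$ have different cycle types so no isomorphism $S_m\to S_n/N_2$ can match the generators, and then check that for the surviving quotients ($\mathbf 1$, $Z_2$, and $S_4/V_4\cong S_3$) at least one point stabilizer does surject. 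As written, the load-bearing lemma is unproved. Two smaller points: for difference, your claim that the second state ``lands in $((m-1)',q)$ with $q\ne n-1$'' only holds when the two states agree in the first coordinate — when they differ there you should instead apply $c$ to freeze the second coordinate at $\emp$ and use a power of $a$, as the paper does; and for union you never separate a state of $H=\{(\emp',q)\}$ from a state of $V=\{(p',\emp)\}$ with $p'=(m-1)'$, where the paper's device is that states of $V$ accept words containing $c$ while states of $H$ accept none.
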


\begin{proof}
Let $\cD'_m = ( Q'_m, \Sig', \delta', 0',F')$  and 
$\cD_n = (Q_n, \Sig,\delta, 0, F)$ be minimal DFAs for arbitrary regular languages $L'_m$ and $L_n$  with $m$ and $n$ quotients, respectively.
To calculate an upper bound for the boolean operations assume that $\Sig'\setminus \Sig $ and 
$\Sig \setminus \Sig' $ are non-empty; this assumption results in the largest upper bound. 
We  add an empty state $\emp'$ to $\cD'_m$ to send all transitions under the letters from 
$\Sig \setminus \Sig' $ to that state; thus we get an $(m+1)$-state DFA  $\cD'_{m,\emp'}$. Similarly, we add an empty state $\emp$  to $\cD_n$ to get $\cD_{n,\emp}$.
Now we have two DFAs over the same alphabet, and an ordinary problem of finding an upper bound for the boolean operations on two languages over the same alphabet, \emph{except that these languages both have empty quotients}. 
It is clear that $(m+1)(n+1)$ is an upper bound for all four operations, but it can be improved for difference and intersection.
Consider the direct product $\cP_{m,n}$ of $\cD'_{m,\emp'}$ and $\cD_{n,\emp}$. 

For difference, all $n+1$ states of $\cP_{m,n}$ that have the form $(\emp', q)$, where $q\in Q_n\cup \{\emp\}$ are empty. Hence the bound can be reduced by $n$ states to $mn+m+1$. 
However, the empty states can only be reached by words in $\Sig\setminus \Sig'$ and the alphabet of $L'_m\setminus L_n$ is a subset of $\Sig'$; hence  the bound is reduced futher to $mn+m$.

For intersection, all $n$ states $(\emp',q)$, $q\in Q_n$, and all $m$ states $(p',\emp)$, $p'\in Q'_m$, are equivalent to the empty state $(\emp',\emp)$, thus reducing the upper bound to
$mn+1$. Since the alphabet of $L'_m\cap L_n$ is a subset of $\Sig'\cap \Sig$, these empty states cannot be reached and the bound is reduced to $mn$.

To prove that the bounds are tight, we start with $\cD_n(a,b,c,d)$ of Definition~\ref{def:regular}.
For $m,n\ge 3$,  let $D'_m(a,b,-,c)$ be the dialect of $\cD'_m(a,b,c,d)$ where $c$ plays the role of $d$ and the alphabet is restricted to $\{a,b,c\}$, and let
$\cD_n(b,a,-,d)$ be the dialect of $\cD_n(a,b,c,d)$ in which $a$ and $b$ are permuted, and the alphabet is  restricted to $\{a,b,d\}$;
see Figure~\ref{fig:boolean}. 
\begin{figure}[ht]
\unitlength 7.5pt
\begin{center}\begin{picture}(37,19)(-3.5,2)
\gasset{Nh=2.5,Nw=5.6,Nmr=1.25,ELdist=0.4,loopdiam=1.5}
	{\small
\node(0')(-2,14){$0'$}\imark(0')
\node(1')(7,14){$1'$}
\node(2')(16,14){$2'$}
\node[Nframe=n](3dots')(25,14){$\dots$}
\node(m-1')(34,14){$(m-1)'$}\rmark(m-1')
\drawedge[curvedepth= 1.4,ELdist=-1.3](0',1'){$a,b$}
\drawedge[curvedepth= 1,ELdist=.3](1',0'){$b$}
\drawedge(1',2'){$a$}
\drawedge(2',3dots'){$a$}
\drawedge(3dots',m-1'){$a$}
\drawedge[curvedepth= -5.2,ELdist=-1](m-1',0'){$a$}
\drawloop(0'){$c$}
\drawloop(1'){$c$}
\drawloop(2'){$b,c$}
\drawloop(m-1'){$b,c$}

\gasset{Nh=2.5,Nw=5.6,Nmr=1.25,ELdist=0.4,loopdiam=1.5}

\node(0)(-2,7){0}\imark(0)
\node(1)(7,7){1}
\node(2)(16,7){2}
\node[Nframe=n](3dots)(25,7){$\dots$}
\node(n-1)(34,7){$n-1$}\rmark(n-1)
\drawloop(0){$d$}
\drawloop(1){$d$}
\drawloop(2){$a,d$}
\drawloop(n-1){$a,d$}
\drawedge[curvedepth= 1.2,ELdist=-1.3](0,1){$a,b$}
\drawedge[curvedepth= .8,ELdist=.25](1,0){$a$}
\drawedge(1,2){$b$}
\drawedge(2,3dots){$b$}
\drawedge(3dots,n-1){$b$}
\drawedge[curvedepth= 5.0,ELdist=-1.5](n-1,0){$b$}
}
\end{picture}\end{center}
\caption{Witnesses $D'_m(a,b,-,c)$ and $\cD_n(b,a,-,d)$ for boolean operations. }
\label{fig:boolean}
\end{figure}
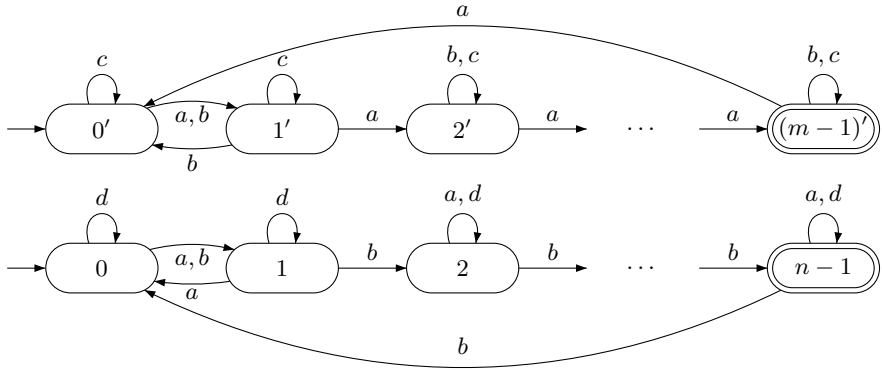
We complete the two DFAs by adding empty states, and then construct the direct product of the new DFAs as illustrated in Figure~\ref{fig:cross}.

If we restrict both DFAs to the alphabet $\{a,b\}$,  we have the usual problem of determining the complexity of two DFAs over the same alphabet. 
By \cite[Theorem 1]{BBMR14}, all $mn$ states of the form $(p',q)$, $p'\in Q'_m$, $q\in Q_n$, are reachable and pairwise distinguishable by words in $\{a,b\}^*$ for all proper boolean operations if $(m,n)\notin \{(3,4),(4,3),(4,4)\}$.
For our application, the three exceptional cases were verified by computation.

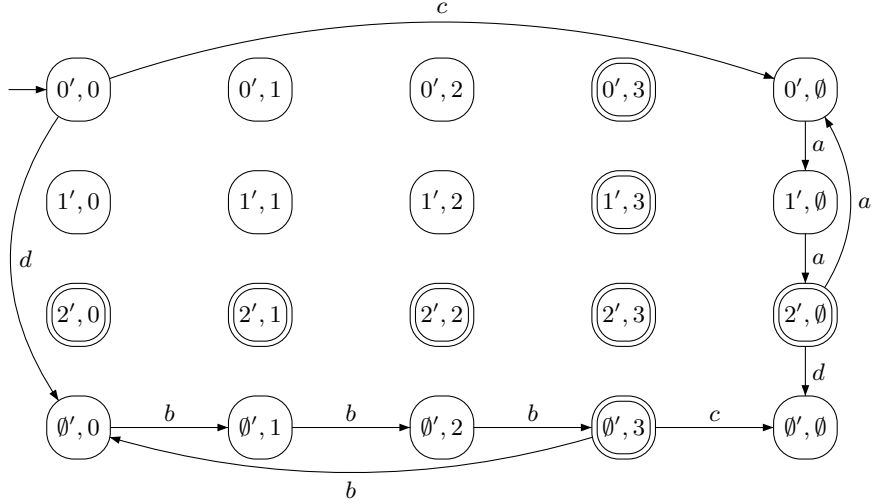
\begin{figure}[ht]
\unitlength 8.5pt
\begin{center}\begin{picture}(35,22)(0,-2)
\gasset{Nh=2.8,Nw=2.8,Nmr=1.2,ELdist=0.3,loopdiam=1.2}
	{\small
\node(0'0)(2,15){$0',0$}\imark(0'0)
\node(1'0)(2,10){$1',0$}
\node(2'0)(2,5){$2',0$}\rmark(2'0)
\node(3'0)(2,0){$\emp',0$}

\node(0'1)(10,15){$0',1$}
\node(1'1)(10,10){$1',1$}
\node(2'1)(10,5){$2',1$}\rmark(2'1)
\node(3'1)(10,0){$\emp',1$}

\node(0'2)(18,15){$0',2$}
\node(1'2)(18,10){$1',2$}
\node(2'2)(18,5){$2',2$}\rmark(2'2)
\node(3'2)(18,0){$\emp',2$}

\node(0'3)(26,15){$0',3$}\rmark(0'3)
\node(1'3)(26,10){$1',3$}\rmark(1'3)
\node(2'3)(26,5){$2',3$}\rmark(2'3)
\node(3'3)(26,0){$\emp',3$}\rmark(3'3)

\node(0'4)(34,15){$0',\emp$}
\node(1'4)(34,10){$1',\emp$}
\node(2'4)(34,5){$2',\emp$}\rmark(2'4)
\node(3'4)(34,0){$\emp',\emp$}
	
\drawedge(3'0,3'1){$b$}
\drawedge(3'1,3'2){$b$}
\drawedge(3'2,3'3){$b$}
\drawedge[curvedepth=2,ELdist=.4](3'3,3'0){$b$}

\drawedge(0'4,1'4){$a$}
\drawedge(1'4,2'4){$a$}
\drawedge[curvedepth=-2,ELdist=-.9](2'4,0'4){$a$}
\drawedge(3'3,3'4){$c$}
\drawedge(2'4,3'4){$d$}

\drawedge[curvedepth=-3,ELdist=.4](0'0,3'0){$d$}
\drawedge[curvedepth=3,ELdist=.4](0'0,0'4){$c$}
}
\end{picture}\end{center}
\caption{Direct product for union shown partially.}
\label{fig:cross}
\end{figure}

To prove that the remaining states are reachable, observe that 
$(0',0) \stackrel{d} {\longrightarrow} (\emp',0)$ and
$(\emp',0) \stackrel{b^q} {\longrightarrow} (\emp',q)$, for $q\in Q_n$.
Symmetrically, 
$(0',0) \stackrel{c} {\longrightarrow} (0',\emp)$ and
$(0',\emp) \stackrel{a^p} {\longrightarrow} (p',\emp)$, for $p'\in Q'_m$.
Finally, $(\emp',n-1) \stackrel{c} {\longrightarrow} (\emp',\emp)$, and  all $(m+1)(n+1)$ states of the direct product are reachable.

It remains to verify that the appropriate states are pairwise distinguishable.
From \cite[Theorem 1]{BBMR14}, we know that all states in $Q'_m\times Q_n$ are distinguishable.
Let $H= \{(\emp',q) \mid q\in Q_n  \}$, and
$V= \{ (p',\emp) \mid p'\in Q'_m \}$.
For the operations consider four cases:
\bd

\item[\bf Union]
The final states of $\cP_{m,n}$ are $\{((m-1)',q) \mid q\in Q_n\cup \{\emp\}  \}$,  and
$\{ (p',n-1) \mid p'\in Q'_m \cup \{\emp'\} \}$.
Every state in $V$ accepts a word with a $c$, whereas  no state in $H$ accepts such words.
Similarly, every state in $H$ accepts a word with a $d$, whereas  no state in $V$ accepts such words.
Every state in $Q'_m \times Q_n$ accepts a word with a $c$ and a word with    a $d$. State $(\emp',\emp)$ accepts no words at all.
Hence any two states chosen from different sets (the sets being $Q'_m\times Q_n$, $H$, $V$,  and $\{(\emp',\emp)\}$) are distinguishable.
States in $H$ are distinguishable by words in $b^*$ and those in $V$, by words in $a^*$.
Therefore all $mn+m+n+1$ states are pairwise distinguishable.

\item[\bf Symmetric Difference]
The final states here are all the final states for union except $( (m-1)',n-1 )$. The rest of the argument is the same as for union.

\item[\bf Difference]
Here the final states are $\{((m-1)', q) \mid q\neq n-1\}$.
The $n$ states of the form $(\emp',q)$, $q \in Q_n$, are now equivalent to the empty state $(\emp',\emp)$. 
The remaining states are non-empty as each accepts a word in $ca^*$.
The states of $V$ are pairwise distinguishable by words in $a^*$.
A state $(p', q) \in Q'_m \times Q_n$ is distinguished from $(r', \emp) \in V$ by $a^{m-1-r}$, unless $p' = r'$.
If $p' = r'$, they are distinguished by a word in $\{a, b\}^*$ that maps $(p', q)$ to $((m-1)', n-1)$, for this word must send $(r', \emp)$ to $((m-1)', \emp)$.
Hence we have $mn+m+1$ distinguishable states.
However, the alphabet of $L'_m \setminus L_n$ is $\{a,b,c\}$, and  the empty state can only be reached by $d$. Since this empty state is not needed, neither is $d$, and the final bound is $mn+m$; it is reached by $L'_m(a,b,-,c)$ and $L_n(b,a)$.
%Here the final states are $\{((m-1)', q) \mid q\neq n-1\}$.
%The $n$ states of the form $(\emp',q)$, $q \in Q_n$, are now equivalent to the empty state $(\emp',\emp)$. 
%The remaining states are pairwise distinguishable by the arguments used for union. Hence we have $mn+m+1$ distinguishable states.
%However, the alphabet of $L'_m \setminus L_n$ is $\{a,b,c\}$, and  the empty state can only be reached by $d$. Since this empty state is not needed, neither is $d$, and  the final bound is $mn+m$; it is reached by $L'_m(a,b,-,c)$ and $L_n(b,a)$.

\item[\bf Intersection]
Here only $((m-1)', n-1 )$ is final and all states $(p', \emp)$, $p' \in Q'_m$, and $(\emp',q)$, $q\in Q_n$ are equivalent to $(\emp',\emp)$, leaving $mn+1$ distinguishable states.
 However, the alphabet of $L'_m \cap L_n$ is $\{a,b\}$, and so the empty state cannot be reached. This gives the final bound of $mn$ states, and this bound is met by $L'_m(a,b)$ and $L_n(b,a)$ as was already known in~\cite{Brz13}.\vspace*{-\baselineskip}
\ed
\end{proof}
\vspace*{-0.15cm}

\begin{remark}
 In the restricted case the  complexity of every one of the ten binary boolean functions in $mn$. 
In the unrestricted case one verifies that we have
$$\kappa(L'_m \cup L_n) = \kappa(\ol{L'_m} \cap \ol{L_n})=
\kappa(L'_m\oplus L_n)= \kappa(L'_m \oplus \ol{L_n})=(m+1)(n+1),$$
$$\kappa(\ol{L'_m} \cup L_n)= mn+m+1, 
\kappa(L'_m \cup \ol{L_n})= mn+n+1,$$
$$\kappa(L'_m\cap \ol{L_n}) =mn+m,
\kappa(\ol{L_m} \cap L_n) = mn +n, $$
$$\kappa(\ol{L'_m} \cup \ol{L_n})= mn+1,
\kappa(L'_m \cap L_n) =mn.$$

As before, complement is taken with respect to $(\Sigma' \cup \Sigma)^*$ for boolean operations.

\label{rem:boolean}
\end{remark}

\subsection{Most Complex Regular Languages}

We now update the result of~\cite{Brz13} to include the unrestricted case.

\begin{theorem}[(Most Complex Regular Languages)]
\label{thm:main}
For each $n\ge 3$, the DFA of Definition~\ref{def:regular} is minimal and its 
language $L_n(a,b,c,d)$ has complexity $n$.
The stream $(L_m(a,b,c,d) \mid m \ge 3)$  with dialect streams
$(L_n(a,b,-,c) \mid n \ge 3)$ and $(L_n(b,a,-,d) \mid n \ge 3)$
is most complex in the class of regular languages.
In particular, it meets all the complexity bounds below, which are maximal for regular languages.
In several cases the bounds can be met with a reduced alphabet.
\begin{enumerate}
\item
The syntactic semigroup of $L_n(a,b,c)$ has cardinality $n^n$.  
\item
Each quotient of $L_n(a)$ has complexity $n$.
\item
The reverse of $L_n(a,b,c)$ has complexity $2^n$, and $L_n(a,b,c)$ has $2^n$ atoms.
\item
For each atom $A_S$ of $L_n(a,b,c)$, the complexity $\kappa(A_S)$ satisfies:
\begin{equation*}
	\kappa(A_S) =
	\begin{cases}
		2^n-1, 			& \text{if $S\in \{\emp,Q_n\}$;}\\
		1+ \sum_{x=1}^{|S|}\sum_{y=1}^{n-|S|} \binom{n}{x}\binom{n-x}{y},
		 			& \text{if $\emp \subsetneq S \subsetneq Q_n$.}
		\end{cases}
\end{equation*}
\item
The star of $L_n(a,b)$ has complexity $2^{n-1}+2^{n-2}$.
\item Product
	\be
	\item { Restricted case:}\\
	The product $L'_m(a,b,c) L_n(a,b,c)$ has complexity $m2^n-2^{n-1}$.
	\item Unrestricted case:\\
	The product $L'_m(a,b,-,c) L_n(b,a,-,d)$ has complexity $m2^n+2^{n-1}$.
	\ee
\item Boolean operations
	\be
	\item
	Restricted case:\\
	The complexity of $L'_m(a,b) \circ L_n(b,a)$ is $mn$.
	\item
	Unrestricted case:\\
	The complexity of $L'_m(a,b,-,c) \circ L_n(b,a,-,d)$ 
	is $(m+1)(n+1)$ if $\circ\in \{\cup,\oplus\}$,
	that of  $L'_m(a,b,-,c) \setminus L_n(b,a)$ is 
 	$mn+m$, and 
 	that of $L'_m(a,b) \cap L_n(b,a)$ is $mn$. 
 	\ee

\end{enumerate}
\end{theorem}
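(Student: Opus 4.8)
The plan is to read this theorem as an assembly of two kinds of facts: almost everything is already known for the universal witness of~\cite{Brz13}, and the remaining two items are exactly the unrestricted bounds established above. The crucial observation is that $\cD_n(a,b,c,d)$ of Definition~\ref{def:regular} is the four-input witness $\cU_n(a,b,c,d)$ of~\cite{Brz13} (renamed), in which $d$ induces the identity transformation $\mathbbm{1}$. Hence deleting $d$ recovers precisely the three-input universal witness $\cU_n(a,b,c)$, for which the restricted claims were proved in~\cite{Brz13}. So I would first settle minimality and $\kappa(L_n)=n$, then cite~\cite{Brz13} for items (1)--(5), (6a), (7a), and finally invoke Theorems~\ref{thm:product} and~\ref{thm:boolean} for the genuinely new items (6b) and (7b).

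For minimality I would argue directly from the transformations. Since $a\colon(0,\dots,n-1)$ is a full cycle, every state is reachable from $0$ by a power of $a$; and since $q\,a^{n-1-q}=n-1$ is final while $p\,a^{n-1-q}\neq n-1$ for $p\neq q$, the word $a^{n-1-q}$ distinguishes $q$ from every other state. Thus $\cD_n$ is minimal and $\kappa(L_n)=n$. Each of the classical items is stated over a sub-alphabet of $\{a,b,c\}$ on which $d$ plays no role, so each follows from the corresponding result of~\cite{Brz13}: for (1), $\{(0,\dots,n-1),(0,1),(n-1\to 0)\}$ is the standard generating set of the full transformation monoid of $Q_n$, of size $n^n$; (2) uses the single-letter dialect $L_n(a)$, a pure $n$-cycle whose quotients are all distinct; and (3)--(5), (6a), (7a) are the known reverse/atom counts, the star bound, the restricted product $m2^n-2^{n-1}$, and the restricted boolean bound $mn$. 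Adjoining $d$ as an identity letter alters none of these, since $d$ contributes no new quotient, no new transition, and no new behaviour under reversal, star, or the same-alphabet product and boolean constructions.

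The only new content is items (6b) and (7b), and these are exactly the statements proved above. Item (6b), that $L'_m(a,b,-,c)\,L_n(b,a,-,d)$ meets $m2^n+2^{n-1}$, is the conclusion of Theorem~\ref{thm:product}. Item (7b), listing the unrestricted boolean bounds $(m+1)(n+1)$ for union and symmetric difference with the dialects $L'_m(a,b,-,c)$ and $L_n(b,a,-,d)$, $mn+m$ for difference with $L'_m(a,b,-,c)$ and $L_n(b,a)$, and $mn$ for intersection with $L'_m(a,b)$ and $L_n(b,a)$, is precisely the content of Theorem~\ref{thm:boolean}. Thus (6b) and (7b) require no further work beyond matching the dialect names, all of which are alphabet restrictions and permutations of the single DFA of Definition~\ref{def:regular}.

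I expect the main obstacle to be bookkeeping rather than mathematics. One must verify that the single stream $(L_n(a,b,c,d))$, together with the two fixed dialect streams $(L_n(a,b,-,c))$ and $(L_n(b,a,-,d))$ and their further alphabet restrictions, simultaneously realises every listed bound: that the sub-alphabet and permutation demanded by each item is consistent with these streams, and that introducing $d$ as an identity letter genuinely leaves the restricted results of~\cite{Brz13} intact. Both checks are routine, so the proof is essentially a confirmation that the already-established pieces---the classical universal-witness properties and the two unrestricted theorems---fit together under one family of dialects.
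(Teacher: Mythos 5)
Your proposal is correct and takes essentially the same approach as the paper, whose entire proof consists of citing~\cite{Brz13} for the restricted items and Theorems~\ref{thm:product} and~\ref{thm:boolean} for the unrestricted ones. Your additional explicit checks (minimality via the $a$-cycle, and the observation that the identity letter $d$ does not disturb the restricted bounds) are sound and merely make explicit what the paper leaves implicit.
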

\begin{proof}
The proofs for the restricted case can be found in~\cite{Brz13}, and
the claims for the unrestricted case were proved in the present paper, in Theorems~\ref{thm:product} and~\ref{thm:boolean}.
\end{proof}

\begin{proposition}[(Marek Szyku{\l}a, personal communication)]
At least four letters are required for a most complex regular language. In particular, four letters are needed for union: two letters  to reach all pairs of states in $Q'_m \times Q_n$, one in $\Sigma' \setminus \Sigma$ for pairs $(p',\emp)$ with $p'\in Q'_m$, and one in $\Sigma \setminus \Sigma'$ for pairs $(\emp',q)$ with $q \in  Q_n$.
\end{proposition}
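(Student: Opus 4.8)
The plan is to derive the lower bound from the union operation alone: a most complex stream must, in particular, provide two dialect witnesses $L'_m$ and $L_n$ (over alphabets $\Sigma'$ and $\Sigma$) whose union attains the bound $(m+1)(n+1)$ of Theorem~\ref{thm:boolean} for all $m,n\ge 3$. Since a permutational dialect maps the underlying alphabet into itself, both $\Sigma'$ and $\Sigma$ are subsets of the alphabet $\Sigma_L$ of the most complex stream, so $|\Sigma_L|\ge|\Sigma'\cup\Sigma|$. It therefore suffices to prove $|\Sigma'\cup\Sigma|\ge 4$ by accounting separately for the three disjoint sets $\Sigma'\cap\Sigma$, $\Sigma'\setminus\Sigma$, and $\Sigma\setminus\Sigma'$ named in the statement.

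First I would record the relevant structure of the direct product $\cP_{m,n}$ of the completed DFAs: the added states $\emp'$ and $\emp$ are sinks, so once a coordinate becomes empty it stays empty. A letter of $\Sigma'\cap\Sigma$ keeps both coordinates inside $Q'_m$ and $Q_n$; a letter of $\Sigma'\setminus\Sigma$ sends the second coordinate to $\emp$; and a letter of $\Sigma\setminus\Sigma'$ sends the first to $\emp'$. From this I read off two necessary conditions for all $(m+1)(n+1)$ product states to be reachable (which is required to meet the bound): to reach a state $(p',\emp)$ with $p'\in Q'_m$ one must empty the second coordinate while the first stays alive, which is possible only with a letter of $\Sigma'\setminus\Sigma$; hence $\Sigma'\setminus\Sigma\ne\emp$, and symmetrically $\Sigma\setminus\Sigma'\ne\emp$.

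The crux, which I expect to be the main obstacle, is showing $|\Sigma'\cap\Sigma|\ge 2$: that a single common letter cannot reach all $mn$ states of $Q'_m\times Q_n$ uniformly in $m$ and $n$. The sink property shows that any state of $Q'_m\times Q_n$ is reachable only along a path that uses common letters exclusively, so with a single common letter $a$ inducing transformations $t'$ on $Q'_m$ and $t$ on $Q_n$, the reachable subset of $Q'_m\times Q_n$ is exactly the orbit $\{(0'(t')^k,\,0\,t^k)\mid k\ge 0\}$ of $(0',0)$ under $t'\times t$. Such an orbit is a tail followed by a cycle, and its cardinality is at most $\max(p',p)+\operatorname{lcm}(c',c)$, where $(p',c')$ and $(p,c)$ are the tail and cycle lengths of $0'$ under $t'$ and of $0$ under $t$. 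I would then specialize to $m=n$: subject to $p'+c'\le n$ and $p+c\le n$ this quantity is at most $n^2-n+1<n^2=mn$, so at least one product state is missed and the union bound cannot hold for $m=n$. Consequently one common letter never suffices, forcing $|\Sigma'\cap\Sigma|\ge 2$.

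Putting the three disjoint pieces together gives $|\Sigma'\cup\Sigma|=|\Sigma'\cap\Sigma|+|\Sigma'\setminus\Sigma|+|\Sigma\setminus\Sigma'|\ge 2+1+1=4$, whence $|\Sigma_L|\ge 4$. The point to state with care is that the orbit bound must hold for every pair $t',t$ and not only for full cycles; specializing to $m=n$ makes this uniform, since there $\operatorname{lcm}(c',c)$ is largest, namely $n(n-1)$, when the two cycles have the coprime lengths $n$ and $n-1$, and even then $\max(p',p)+\operatorname{lcm}(c',c)\le 1+n(n-1)=n^2-n+1<n^2$.
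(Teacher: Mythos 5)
Your proof is correct and follows exactly the decomposition the paper uses (the paper states the proposition with only the ``$2+1+1$'' sketch embedded in the statement and gives no further argument): one letter in $\Sigma'\setminus\Sigma$ to reach the states $(p',\emp)$, one in $\Sigma\setminus\Sigma'$ to reach the states $(\emp',q)$, and at least two shared letters to reach all of $Q'_m\times Q_n$. Your orbit-counting argument --- that with a single common letter the reachable part of $Q'_m\times Q_n$ is the forward orbit of $(0',0)$, of size $\max(p',p)+\operatorname{lcm}(c',c)\le n^2-n+1<n^2$ when $m=n$ --- correctly supplies the one detail the paper leaves entirely unjustified, namely why a single shared letter cannot suffice.
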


\begin{center} $\ast \ast \ast$ \end{center}

A non-empty language $L$ is a right ideal (left ideal, two-sided ideal, respectively) if $L=L\Sig^*$
($L=\Sig^*L$, $L=\Sig^*L\Sig^*$, respectively).
Ideals are fundamental objects in semigroup theory.
We study regular ideals only.

%~\cite{LuVa99,Shy01}. 
Ideals appear in the area of pattern matching~\cite{CrHa90}.
 For this application, a \emph{text} is represented by a word $w$ over some alphabet $\Sig$. 
A \emph{pattern} can be an arbitrary language $L$ over $\Sig$ described by a regular expression.
An occurrence of a pattern represented by $L$ in text $w$ is a triple $(u,x,v)$ such that $w=uxv$ and $x$ is in~$L$.
Searching text $w$ for words in $L$ is equivalent to looking for prefixes of $w$ that belong to the language $\Sig^*L$, which is the left ideal generated by $L$.
%~\cite{CrHa90}.

Algorithms such as that of Aho and Corasick~\cite{AhCo75} can be used to determine all possible occurrences of words from a finite set $L$ in a given input $w$.
For example, in  a  Unix-style editor such as \emph{sed}, one can find all the words ending in $x$ (that~is,~all the words 
of the left ideal $\Sig^*x$) that occur in $w$; all the words beginning with $x$ (that is, all the words of the right ideal $x\Sig^*$) that occur in $w$; and
all the words that have $x$ as a factor (that~is,~all the words of the two-sided ideal $\Sig^*x\Sig^*$)  that occur in $w$.

The complexities of restricted basic operations on ideals were studied in~\cite{BJL13}. 
The sizes of transition semigroups of minimal DFAs accepting ideal languages were determined in~\cite{BrSz14a,BrYe11}. Atoms of ideals were analyzed in~\cite{BrDa15}.
Most complex right ideals for restricted operations were studied in~\cite{BrDa14}, and left and two-sided ideals were added in~\cite{BDL15}.
In this paper we add the results for unrestricted binary operations.

\section{Right Ideals}\
A stream of right ideals that have the largest syntactic semigroups was introduced in~\cite{BrYe11}. A different stream was used in~\cite{BrDa14} and shown to be
most complex for restricted operations; it was also studied in~\cite{BrDa15,BDL15}. Here we modify the stream of~\cite{BrDa14} by adding an input $e$ that induces the identity transformation.  We find the unrestricted state complexity of product and boolean operations of this stream together with some of its permutational dialects.

\begin{definition}
\label{def:RWit}
For $n\ge 3$, let $\cD_n=\cD_n(a,b,c,d,e)=(Q_n,\Sig,\delta_n, 1, \{n\})$, where 
$\Sig=\{a,b,c,d,e\}$, 
and $\delta_n$ is defined by 
%the transformations
$a\colon (0,\dots,n-2)$,
$b\colon(1,\ldots,n-2)$,
${c\colon(n-2\rightarrow 0)}$,
${d\colon(n-2\rightarrow n-1)}$,
and $e \colon \mathbbm{1}$.
Let $L_n=L_n(a,b,c,d,e)$ be the language accepted by~$\cD_n$.
The structure of $\cD_n$ is shown in Figure~\ref{fig:RWit}. 
\end{definition}

\begin{figure}[h]
\unitlength 10.8pt
\begin{center}\begin{picture}(31,7.5)(-2,0)
\gasset{Nh=2,Nw=2.8,Nmr=1.2,ELdist=0.3,loopdiam=1.2}
{\small
\node(1)(2,4){$0$}\imark(1)
\node(2)(6,4){$1$}
\node(3)(10,4){$2$}
\node[Nframe=n](qdots)(14,4){$\dots$}
\node(n-2)(18,4){{\small $n-3$}}
\node(n-1)(22,4){{\small $n-2$}}
\node(n)(26,4){{\small $n-1$}}\rmark(n)

\drawedge(1,2){$a$}
\drawedge(2,3){$a,b$}
\drawedge(3,qdots){$a,b$}
\drawedge(qdots,n-2){$a,b$}
\drawedge(n-2,n-1){$a,b$}
\drawedge(n-1,n){$d$}
\drawedge[curvedepth=2.1](n-1,2){$b$}
\drawedge[curvedepth=3.8](n-1,1){$a,c$}

\drawloop(1){$b,c,d,e$}
\drawloop(2){$c,d,e$}
\drawloop(3){$c,d,e$}
\drawloop(n-2){$c,d,e$}
\drawloop(n-1){$e$}
\drawloop(n){$a,b,c,d,e$}
}
\end{picture}\end{center}
\caption{Minimal DFA $\cD_n(a,b,c,d,e)$  of Definition~\ref{def:RWit}.}
\label{fig:RWit}
\end{figure}
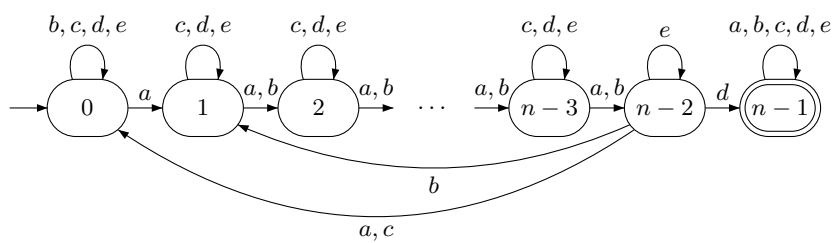

\subsection{Product of Right Ideals}

\begin{theorem}[Product of Right Ideals)]
\label{thm:RWitProduct}
For $m,n \ge 3$, let $L'_m$ (respectively, $L_n$) be an arbitrary right ideal with $m$ (respectively, $n$) quotients over an alphabet $\Sig'$, (respectively, $\Sig$). 
Then $\kappa(L'_mL_n) \le m + 2^{n-2} + 2^{n-1} + 1$, and this bound is met by 
the right ideals $L'_m(a,b,-,d, e)$ and $L_n(a,b,-,d,c)$ of Definition~\ref{def:RWit}.
\end{theorem}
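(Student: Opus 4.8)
The plan is to prove the upper bound for \emph{arbitrary} right ideals and then show the stated witnesses meet it. First I would exploit the defining property of a right ideal: in its minimal DFA every final state accepts $\Sig^*$, so there is a single final state, which is an accepting sink. Thus $\cD'_m$ has one accepting sink $f'$ and $\cD_n$ one accepting sink $g$. I would form the usual $\eps$-NFA for $L'_mL_n$ (an $\eps$-transition $f'\to 0$, with $f'$ made non-final) over $\Omega=\Sig'\cup\Sig$ and determinize it, assuming $\Sig'\setminus\Sig$ and $\Sig\setminus\Sig'$ are both non-empty to obtain the largest bound. Because $f'$ is a sink it can be left only by a letter of $\Sig\setminus\Sig'$, which deletes the entire $\cD'_m$-part, and whenever $f'$ is present the $\eps$-closure forces $0$ into the set. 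Hence every reachable subset has exactly one of three shapes: a singleton $\{p'\}$ with $p'$ a non-final state of $\cD'_m$ and no states of $Q_n$; a set $\{f',0\}\cup S$ with $S\subseteq Q_n\setminus\{0\}$; or a subset $T\subseteq Q_n$ with no $\cD'_m$-state.

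The decisive point is that $L'_mL_n$ is a right ideal over $\Sig$ but \emph{not} over $\Omega$, since appending a letter of $\Sig'\setminus\Sig$ to a word of the product need not keep it in the product. Consequently the final subsets (those containing $g$) do not collapse to a single sink. A \emph{pure} final subset $T\ni g$ stays final under every letter of $\Sig$ but is sent to $\emp$ by any letter of $\Sig'\setminus\Sig$, so it accepts exactly $\Sig^*$; all such subsets are therefore equivalent. A final subset $\{f',0\}\cup S$ with $g\in S$ also accepts $\Sig^*$, but a letter of $\Sig'\setminus\Sig$ merely resets its $Q_n$-part to $\{0\}$ while preserving $f'$, so it can reach $g$ again and accepts strictly more; all of these are likewise mutually equivalent. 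This yields exactly two final classes. Counting the non-final subsets, namely $m-1$ singletons, the $2^{n-2}$ sets $\{f',0\}\cup S$ with $g\notin S$, and the $2^{n-1}$ pure subsets $T\subseteq Q_n\setminus\{g\}$ (including $\emp$), and adding the two final classes gives $(m-1)+2^{n-2}+2^{n-1}+2=m+2^{n-2}+2^{n-1}+1$.

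\textbf{Reachability for the witnesses.} For $L'_m(a,b,-,d,e)$ and $L_n(a,b,-,d,c)$ we have $\Sig'\cap\Sig=\{a,b,d\}$, with $\Sig\setminus\Sig'=\{c\}$ acting as the identity on $\cD_n$ and $\Sig'\setminus\Sig=\{e\}$ acting as the identity on $\cD'_m$. The singletons $\{0'\},\dots,\{(m-2)'\}$ are reached by powers of $a$, and $\{f',0\}$ by $a^{m-2}d$. For the sets $\{f',0\}\cup S$ I would argue that, with $0$ re-inserted at every step, the permutations $a$ and $b$ act as the full symmetric group on $\{0,\dots,n-2\}$ while $d$ adjoins the sink $g$; a direct inductive reachability argument then produces all of them. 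Applying $c$ deletes $f'$ without disturbing the $Q_n$-part, yielding every pure set $\{0\}\cup S$, and since the symmetric group is transitive on the $k$-subsets of $\{0,\dots,n-2\}$ one obtains all $2^{n-1}$ pure subsets of $\{0,\dots,n-2\}$, with $\emp$ reached by $e$. Representatives of the two final classes are reached by steering a set into $g$ via $d$.

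\textbf{Distinguishability and the main obstacle.} The two final classes are separated by $e$ followed by a word driving $0$ to $g$: the pure class dies under $e$, whereas the $f'$-class survives and then accepts. The three non-final families are separated by the killing letters: $c$ annihilates a singleton $\{p'\}$ but turns $\{f',0\}\cup S$ into a live pure set, while $e$ annihilates a pure set but leaves $\{f',0\}$ alive. Within a single family, two sets differing in some state $q$ are distinguished by a $\Sig$-word that sends $q$ to the sink $g$ while keeping the other states away from $g$, so that exactly one set becomes final. I expect the main obstacle to be precisely the reachability of the exponential family $\{f',0\}\cup S$: one must show that the transition transformations, together with the forced $0$ and the identity-letters $c,e$, realize every target subset, and the reachability of all pure subsets rests on this. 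A secondary but essential subtlety is the correct identification of the \emph{two} collapsed final classes, since it is exactly this feature that raises the bound above the restricted value $m+2^{n-2}$ by $2^{n-1}+1$.
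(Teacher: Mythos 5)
Your proposal is correct and follows essentially the same route as the paper: the same $\eps$-NFA/subset construction, the same three shapes of reachable subsets forced by the accepting-sink structure of right ideals, the same collapse of the final subsets into exactly two classes (which is precisely what the paper uses to cut $m-1+2^{n-1}+2^n$ down to $m+2^{n-2}+2^{n-1}+1$), and the same use of the alphabet-difference letters $c$ and $e$ as identity/"killing" letters for reachability and for separating the families. The only things left to supply are the routine explicit words (the paper reaches $\{(m-1)',0\}\cup S$ by words of the form $ab^{q_1-1}$ and pure sets by $ca^{q_1}$, and distinguishes two singletons $\{p'\},\{q'\}$ by $a^{m-2-q}da^{n-2}d$, a case your "send $q$ to the sink $g$" rule does not literally cover), but these are exactly the computations your outline anticipates.
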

\begin{proof}
Let $\mathcal{D}'{_m}= (Q'_m,\Sigma', \delta', 0', \{(m-1)'\})$ and $\mathcal{D}{_n}= (Q_n,\Sigma, \delta, 0, \{n-1\})$ be minimal DFAs of arbitrary right ideals $L'_m$ and $L_n$, respectively.
We use the standard  construction of the NFA for the product $L'_mL_n$.
%the final state $(m-1)'$ of $\mathcal{D}'{_m}$ becomes non-final, and an $\varepsilon$-transition is added from that state to the initial state $0$ of $\mathcal{D}{_n}$.
We bound the complexity of the product by counting the reachable and distinguishable sets in the subset construction on this NFA.
The $m-1$  sets $\{p'\}$, where $p'\in Q'_m$ is non-final,   as as well as $\{(m-1)',0\}$ may be reachable.
From  $\{(m-1)',0\}$ we may reach all $2^{n-1}$ sets $\{(m-1)', 0\} \cup S$ for $S \subseteq Q_n \setminus \{0\}$.
The $2^{n}$ sets $S \subseteq Q_n$ may also be reachable by using a letter in $\Sigma \setminus \Sigma'$.
So far, there are  at most  $m-1 + 2^{n-1} + 2^n$ reachable sets. 
However, in the DFA obtained by the subset construction from the NFA, states $\{n-1\} \cup S$ for $S \subseteq Q_{n-1}$ are all equivalent,
since they accept all words in $\Sigma^*$ and go to $\emptyset$ by letters in $\Sigma' \setminus \Sigma$.
Similarly, states $\{(m-1)', 0, n-1\} \cup S$ are all equivalent because they all accept $\Sig^*$ and go to $\{(m-1)',0\}$ by letters in $\Sigma' \setminus \Sigma$. This leaves at most  $m + 2^{n-2} + 2^{n-1} + 1$ distinguishable states.

\begin{figure}[th]
\unitlength 8.5pt
\begin{center}\begin{picture}(38,22)(-4,0)
\gasset{Nh=2.2,Nw=5.,Nmr=1.25,ELdist=0.4,loopdiam=1.5}
{\small
\node(0')(-4,14){$0'$}\imark(0')
\drawloop(0'){$b,d,e$}
\node(1')(3,14){$1'$}
\drawloop(1'){$d,e$}
\node(2')(10,14){$2'$}
\drawloop(2'){$d,e$}
\node[Nframe=n](3dots')(17,14){$\dots$}
\node(m-2')(24,14){$(m-2)'$}
\drawloop(m-2'){$e$}
\node(m-1')(24,10){$(m-1)'$}
\drawloop[loopangle=190,ELdist=.3](m-1'){$a,b,d,e$}

\drawedge(0',1'){$a$}
\drawedge(1',2'){$a,b$}
\drawedge(2',3dots'){$a,b$}
\drawedge(3dots',m-2'){$a,b$}
\drawedge[curvedepth= -5.7,ELside=r](m-2',0'){$a$}
\drawedge[curvedepth= -3,ELside=r](m-2',1'){$b$}
\drawedge(m-2',m-1'){$d$}

\node(0)(3,6){0}
\drawloop[loopangle=270,ELdist=.3](0){$b,c,d$}
\node(1)(10,6){1}
\drawloop[loopangle=270,ELdist=.3](1){$c,d$}
\node(2)(17,6){2}
\drawloop[loopangle=270,ELdist=.3](2){$c,d$}
\node[Nframe=n](3dots)(24,6){$\dots$}
\node(n-2)(31,6){$n-2$}
\drawloop[loopangle=270,ELdist=.3](n-2){$c$}
\node(n-1)(31,10){$n-1$}\rmark(n-1)
\drawloop(n-1){$a,b,c,d$}
\drawedge(0,1){$a$}
\drawedge(1,2){$a,b$}
\drawedge(2,3dots){$a,b$}
\drawedge(3dots,n-2){$a,b$}
\drawedge[curvedepth= 3](n-2,1){$b$}
\drawedge[curvedepth= 5.5](n-2,0){$a$}
\drawedge(n-2,n-1){$d$}
\drawedge[curvedepth= -3,ELside=r](m-1',0){$\varepsilon$}
}
\end{picture}\end{center}
\caption{An NFA  for product of right ideals  $L'_m(a,b,-,d,e)$ and $L_n(a,b,-,d,c)$.}
\label{fig:RIdealProduct}
\end{figure}
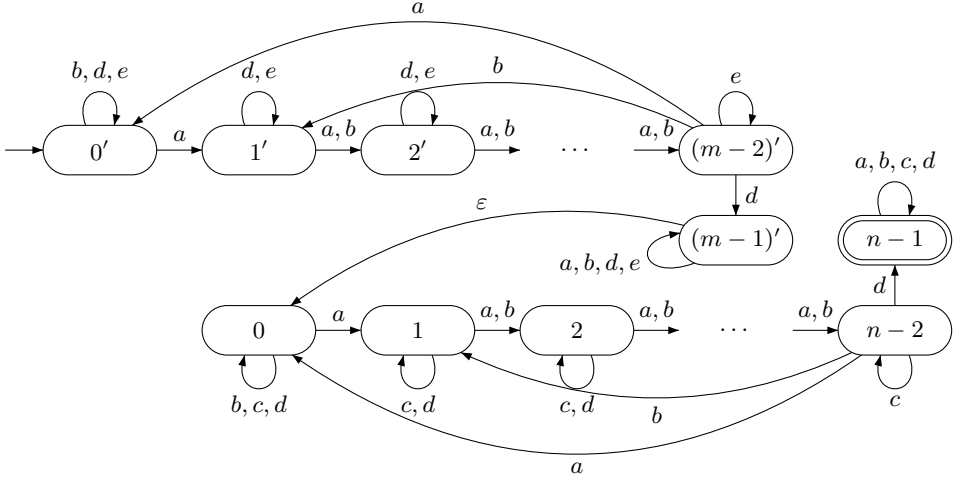
%\newpage

 To show that this bound can be met, consider two dialects of the DFA of Definition~\ref{def:RWit} shown in Figure~\ref{fig:RIdealProduct}.
Here $\Sigma' = \{a,b,d,e\}$ and $\Sigma = \{a,b,c,d\}$.
Set $\{p'\}$ for $p' \in Q'_{m-1}$ is reached by $a^{p}$, and $\{(m-1)', 0\}$ is reached by $a^{m-2}d$.
Any set $\{(m-1)', 0, q_1, q_2, \dots, q_k\}$ with $0 < q_1 < q_2 < \dots < q_k < n-1$ and $k \ge 1$ is reached from 
$\{(m-1)', 0, q_2-q_1 -1, q_3 -q_1-1, \dots, q_k-q_1-1\}$ by $ab^{q_1-1}$;
hence all sets $\{(m-1)', 0\} \cup S$ where $S \subseteq Q_{n-1} \setminus \{0\}$ are reachable.
Set $S = \{q_1, q_2, \dots, q_k\}$ with $0 \le q_1 < q_2 < \dots < q_k < n-1$ and $k \ge 1$  is reachable from 
$\{(m-1)', 0, q_2 - q_1, \dots, q_k - q_1\}$ by $ca^{q_1}$, and $\emptyset$ is reached from $\{0\}$ by $c$; hence all sets $S \subseteq Q_{n-1}$ are reachable.
Sets containing $n-1$ are easily reached from these sets using $a$ and $d$. However, in the DFA obtained from the NFA, the states $S \subseteq Q_n$ that contain $n-1$ all accept $\Sigma^*$ and are sent to the empty state by $e$; hence they are all equivalent.
Similarly, the states $\{(m-1)', 0\} \cup S$ that contain $n-1$ all accept $\{a,b,c,d\}^*$ and are sent to $\{(m-1)', 0\}$ by $e$; hence they are also equivalent.

The remaining states are pairwise distinguishable.
States $\{p'\}$ and $\{q'\}$ with $0\le p<q \le m-2$ are distinguished by $a^{m-2-q}da^{n-2}d$, and 
 $\{p'\}$ is distinguished from $\{(m-1)',0\} \cup S$ 
 or from $S$, where $\emptyset \subsetneq S\subseteq Q_n$, by $(ad)^{n-1}$.
Two states $\{(m-1)',0\} \cup S$ and
$\{(m-1)',0\} \cup T$ with $q\in S\oplus T$ are distinguished by $a^{n-2-q}d$, as are 
two states $S$ and $T$ with $q\in S\oplus T$.
A set $\{(m-1)',0\} \cup S$ is distinguishable from $T$, where $S,T \subseteq Q_n$, by $ea^{n-2}d$.
Every state is distinguishable from $\emptyset$ by a word in $\{a,d\}^*$.
Thus all $m+2^{n-2}+2^{n-1}+1$ states are pairwise distinguishable.
\end{proof}

\subsection{Boolean Operations on Right Ideals}

\begin{theorem}[(Boolean Operations on Right Ideals)]
\label{thm:RWitBoolean}
For $m,n \ge 3$, the unrestricted complexities of boolean operations on right ideals are the same as those for arbitrary regular languages.
In particular, 
 the right ideals $L'_m(a,b,-,d,e)$ and $L_n(e,c,-,d,a)$ of Definition~\ref{def:RWit}
meet the bound $(m+1)(n+1)$ for union and symmetric difference,
$L'_m(a,b,-,d,e)$ and
$L_n(e,-,-,d,a)$ meet the bound $mn+m$ for difference, and $L'_m(a,-,-,d,e)$ and $L_n(e,-,-,d,a)$ meet the bound $mn$ for  intersection.
\end{theorem}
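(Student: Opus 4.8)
The plan is to get the upper bounds for free and spend all the effort on tightness. Since every right ideal is in particular a regular language, the upper bounds $(m+1)(n+1)$, $mn+m$, and $mn$ proved for arbitrary regular languages in Theorem~\ref{thm:boolean} apply verbatim here. Thus the whole content of the theorem is to check that the listed dialects attain these bounds, and I would do this by reusing the machinery of the proof of Theorem~\ref{thm:boolean}: verify that each dialect is a right ideal (its accepting state $n-1$ loops on every letter of its own alphabet, hence is absorbing), complete the two DFAs with empty states $\emp'$ and $\emp$ over the joint alphabet $\Sigma'\cup\Sigma$, form the direct product $\cP_{m,n}$, and count reachable and pairwise distinguishable states.

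For union and symmetric difference the witnesses $L'_m(a,b,-,d,e)$ and $L_n(e,c,-,d,a)$ have common alphabet $\{a,d,e\}$, with private letter $b\in\Sigma'\setminus\Sigma$ and private letter $c\in\Sigma\setminus\Sigma'$. I would organize the states as the grid $Q'_m\times Q_n$, the empty row $H=\{(\emp',q)\}$, the empty column $V=\{(p',\emp)\}$, and the sink $(\emp',\emp)$. In the common alphabet $a$ cycles the first coordinate, $e$ cycles the second, and $d$ pushes a coordinate into its accepting state, so from $(0',0)$ every grid pair is reachable (the absorbing accepting states cause no trouble since once a coordinate is frozen the other is still driven by $a$ or $e$). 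The private letter $c$ sends the first coordinate to $\emp'$, reaching $H$; $b$ sends the second coordinate to $\emp$, reaching $V$; composing the two reaches the sink. For difference the witness pair has $b\in\Sigma'\setminus\Sigma$ but $\Sigma\setminus\Sigma'=\emp$, so only $V$ is reachable, giving the grid plus $V$, i.e.\ $mn+m$; for intersection both witnesses share the alphabet $\{a,d,e\}$, no empty states arise, and only the $mn$ grid states remain.

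The delicate part is distinguishability, and this is exactly where the right-ideal structure resists us. In a same-alphabet setting the absorbing accepting states would make a state such as $((m-1)',q)$ accept every remaining word, so all such states would collapse into one class — which is why right ideals have smaller restricted boolean complexity. The point I would make precise is that the private letters \emph{break} this absorption: applying $c$ at the accepting state $(m-1)'$ of $L'_m$ sends it to $\emp'$, after which the fate of the second coordinate (moved by $e$ and then pushed to $n-1$ by $d$) records the former value of $q$; symmetrically $b$ breaks the absorption of $n-1$. Using this I would separate any two distinct grid states, separate $H$, $V$, the grid, and the sink from one another according to which private letters can still lead them to an accepting state, and separate states within $H$ by words in the $\{c,e\}$-cycle and within $V$ by words in the $a$-cycle. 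Difference and intersection then follow by the same analysis with the accepting set and alphabet adjusted as in Theorem~\ref{thm:boolean}.

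I expect the main obstacle to be precisely this distinguishability bookkeeping: making the ``private letters break absorption'' idea rigorous for \emph{every} pair of grid states, including pairs that differ only in a coordinate that has already been absorbed into an accepting state, and confirming that none of the intended states secretly coincide. Reachability is routine once the roles of $a$, $e$, $d$ and the two private letters are fixed; it is the verification that the full $(m+1)(n+1)$ (respectively $mn+m$, $mn$) states survive the collapsing tendency of the ideal structure that requires genuine care.
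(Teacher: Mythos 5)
Your proposal is correct and follows essentially the same route as the paper: the upper bounds are inherited from Theorem~\ref{thm:boolean}, the two DFAs are completed with empty states and their direct product analysed, the grid $Q'_m\times Q_n$ is reached via the shared letters $a,e,d$ and the border states via the private letters $b$ and $c$, and distinguishability rests on exactly the observation you isolate --- that a private letter collapses one coordinate (escaping the absorbing final state) so that words of the form $ba^*d$ and $ce^*d$ test the surviving coordinate. The paper's proof is just the detailed execution of this same plan, including the same reductions for difference (drop $c$, keep only the grid plus $V$) and intersection (common alphabet, only the $mn$ grid states).
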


\begin{proof}
%The unrestricted bounds for right ideals are the same as those for arbitrary regular languages~\cite{Brz16}.
%Thus we need only verify that these bounds can be met. 
We show that the bounds of Theorem~\ref{thm:boolean} for arbitrary regular languages are met by DFAs  $\mathcal{D}'_m(a,b,-,d,e)$ and $\mathcal{D}_n(e,c,-,d,a)$ of the right ideals of Definition~\ref{def:RWit}.

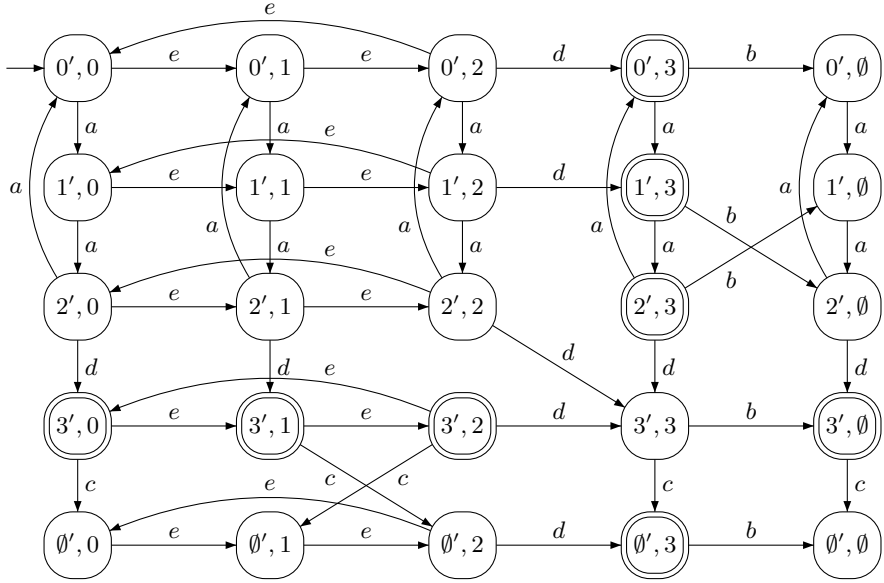
\begin{figure}[th]
\unitlength 9pt
\begin{center}\begin{picture}(35,23)(0,-6)
\gasset{Nh=2.8,Nw=2.8,Nmr=1.2,ELdist=0.3,loopdiam=1.2}
	{\small
\node(0'0)(2,15){$0',0$}\imark(0'0)
\node(1'0)(2,10){$1',0$}
\node(2'0)(2,5){$2',0$}
\node(3'0)(2,0){$3',0$}\rmark(3'0)
\node(4'0)(2,-5){$\emptyset',0$}

\node(0'1)(10,15){$0',1$}
\node(1'1)(10,10){$1',1$}
\node(2'1)(10,5){$2',1$}
\node(3'1)(10,0){$3',1$}\rmark(3'1)
\node(4'1)(10,-5){$\emptyset',1$}

\node(0'2)(18,15){$0',2$}
\node(1'2)(18,10){$1',2$}
\node(2'2)(18,5){$2',2$}
\node(3'2)(18,0){$3',2$}\rmark(3'2)
\node(4'2)(18,-5){$\emptyset',2$}

\node(0'3)(26,15){$0',3$}\rmark(0'3)
\node(1'3)(26,10){$1',3$}\rmark(1'3)
\node(2'3)(26,5){$2',3$}\rmark(2'3)
\node(3'3)(26,0){$3',3$}
\node(4'3)(26,-5){$\emptyset',3$}\rmark(4'3)

\node(0'4)(34,15){$0',\emptyset$}
\node(1'4)(34,10){$1',\emptyset$}
\node(2'4)(34,5){$2',\emptyset$}
\node(3'4)(34,0){$3',\emptyset$}\rmark(3'4)
\node(4'4)(34,-5){$\emptyset',\emptyset$}
	
\drawedge(0'0,1'0){$a$}
\drawedge(1'0,2'0){$a$}
\drawedge[ELside=l,curvedepth=2](2'0,0'0){$a$}
\drawedge(0'1,1'1){$a$}
\drawedge(1'1,2'1){$a$}
\drawedge[ELside=l,ELpos=35,curvedepth=2](2'1,0'1){$a$}
\drawedge(0'2,1'2){$a$}
\drawedge(1'2,2'2){$a$}
\drawedge[ELside=l,ELpos=35,curvedepth=2](2'2,0'2){$a$}
\drawedge(0'3,1'3){$a$}
\drawedge(1'3,2'3){$a$}
\drawedge[ELside=l,ELpos=35,curvedepth=2](2'3,0'3){$a$}
\drawedge(0'4,1'4){$a$}
\drawedge(1'4,2'4){$a$}
\drawedge[ELside=l,curvedepth=2](2'4,0'4){$a$}

\drawedge(0'0,0'1){$e$}
\drawedge(0'1,0'2){$e$}
\drawedge[ELside=r,curvedepth=-2](0'2,0'0){$e$}
\drawedge(1'0,1'1){$e$}
\drawedge(1'1,1'2){$e$}
\drawedge[ELside=r,ELpos=35,curvedepth=-2](1'2,1'0){$e$}
\drawedge(2'0,2'1){$e$}
\drawedge(2'1,2'2){$e$}
\drawedge[ELside=r,ELpos=35,curvedepth=-2](2'2,2'0){$e$}
\drawedge(3'0,3'1){$e$}
\drawedge(3'1,3'2){$e$}
\drawedge[ELside=r,ELpos=35,curvedepth=-2](3'2,3'0){$e$}
\drawedge(4'0,4'1){$e$}
\drawedge(4'1,4'2){$e$}
\drawedge[ELside=r,curvedepth=-2](4'2,4'0){$e$}

\drawedge(2'0,3'0){$d$}
\drawedge(2'1,3'1){$d$}
\drawedge(2'2,3'3){$d$}
\drawedge(2'3,3'3){$d$}
\drawedge(2'4,3'4){$d$}

\drawedge(0'2,0'3){$d$}
\drawedge(1'2,1'3){$d$}
\drawedge(3'2,3'3){$d$}
\drawedge(4'2,4'3){$d$}

\drawedge(0'3,0'4){$b$}
\drawedge[ELpos=35](1'3,2'4){$b$}
\drawedge[ELside=r,ELpos=35](2'3,1'4){$b$}
\drawedge(3'3,3'4){$b$}
\drawedge(4'3,4'4){$b$}

\drawedge(3'0,4'0){$c$}
\drawedge[ELside=r,ELpos=35](3'1,4'2){$c$}
\drawedge[ELpos=35](3'2,4'1){$c$}
\drawedge(3'3,4'3){$c$}
\drawedge(3'4,4'4){$c$}
}
\end{picture}\end{center}
\caption{Partial illustration of the direct product for  $L'_4(a,b,-,d,e) \oplus L_4(e,c,-,d,a)$.}
\label{fig:idealcross}
\end{figure}

To compute $L'_m(a,b,-,d,e) \circ L_n(e,c,-,d,a)$, where $\circ$ is a boolean operation, add an empty state $\emptyset'$ to $\mathcal{D}'_m(a,b,-,d,e)$, and send all the transitions from any state of $Q'_m$ under $c$ to $\emptyset'$.
Similarly, add an empty state $\emptyset$ to  $\mathcal{D}_n(e,c,-,d,a)$  together with appropriate transitions; now the alphabets of the resulting DFAs are the same.
The direct product of $L'_m(a,b,-,d,e)$ and $L_n(e,c,-,d,a)$ is illustrated in Figure~\ref{fig:idealcross} for $m=n=4$.

We first check that all $(m+1)(n+1)$ states of the direct product are reachable.
State $(p',q) \in Q'_{m-1} \times Q_{n-1}$ is reached by $a^pe^q$.
State $((m-1)', q)$ for $q \in Q_{n-1}$ is reached from $((m-2)', 0)$ by $de^q$, state $(p', n-1)$ for $p' \in Q'_{m-1}$ is similarly reached from $(0', n-2)$, and $((m-1)',n-1)$ is reached from $((m-2)', n-2)$ by $d$.
Hence the states of $Q'_m \times Q_n$ are reachable; the remaining states involve $\emptyset'$ or $\emptyset$ and are reached from states of $Q'_m \times Q_n$ using $b$ and $c$ as seen in Figure~\ref{fig:idealcross}.

We now check distinguishability, which depends on the final states of the DFA.
\bd
\item[\textbf{Union}]
The final states of the direct product for union are $((m-1)', q)$ for $q \in Q_n \cup \{\emptyset\}$ and $(p', n-1)$ for $p' \in Q'_m \cup \{\emptyset'\}$.
States that differ in the first coordinate are distinguished by words in $ba^*d$, and states that differ in the second coordinate are distinguished by words in $ce^*d$;
hence all $(m+1)(n+1)$ states are distinguishable.

\item[\textbf{Symmetric Difference}]
The final states for symmetric difference are the same as those for union, except that $((m-1)', n-1)$ is non-final.
All states are pairwise distinguishable as in union.

\item[\textbf{Difference}]
The final states for difference are $((m-1)', q)$ for $q \not= n-1$.
The alphabet of $L'_m(a,b,-,d,e) \setminus L_n(e,c,-,d,a)$ is $\{a,b,d,e\}$; hence we can omit $c$ and delete all states $(\emptyset', q)$ and be left with a DFA recognizing the same language.
The remaining states are distinguished by words in $ba^*d$ if they differ in the first coordinate or by words in $a^* de^*d$ if they differ in the second coordinate.
The bound for difference is also met by $L'_m(a,b,-,d,e) \setminus L_n(e,-,-,d,a)$.

\item[\textbf{Intersection}]
For intersection, the only final state is $((m-1)', n-1)$.
The alphabet of $L'_m(a,b,-,d,e) \cap L_n(e,c,-,d,a)$ is $\{a,d,e\}$; hence we can omit $b$ and $c$ and delete all states $(p', \emptyset)$ and $(\emptyset', q)$.
The remaining $mn$ states are pairwise distinguishable by words in $e^*da^*d$.
The bound for intersection is also met by $L'_m(a,-,-,d,e) \setminus L_n(e,-,-,d,a)$.
%\qedhere
\ed
One verifies that the complexities of all ten boolean functions on right ideals are as given in Remark~\ref{rem:boolean}.
\end{proof}

\subsection{Most Complex Right Ideals}

\begin{theorem}[(Most Complex Right Ideals)]
\label{thm:rightideals}
For each $n\ge 3$, the DFA of Definition~\ref{def:RWit} is minimal and  $L_n(a,b,c,d,e)$ is a right ideal of complexity $n$.
The stream $(L_n(a,b,c,d,e) \mid n \ge 3)$  with some dialect streams 
is most complex in the class of  right ideals.
In particular, it meets all the  bounds  below.
In several cases the bounds can be met with a reduced alphabet.
\begin{enumerate}
\item
The syntactic semigroup of $L_n(a,b,c,d)$ has cardinality $n^{n-1}$.  
\item
The quotients of $L_n(a,-,-,d)$ have complexity $n$ except for the final quotient which has complexity 1.
\item
The reverse of $L_n(a,-,-,d)$ has complexity $2^{n-1}$, and $L_n(a,-,-,d)$ has $2^{n-1}$ atoms.
\item
Each atom $A_S$\ of $L_n(a,b,c,d)$ has maximal complexity:
\begin{equation*}
	\kappa(A_S) =
	\begin{cases}
		2^{n-1}, 			& \text{if $S=Q_n$;}\\
		1 + \sum_{x=1}^{|S|}\sum_{y=1}^{n-|S|}\binom{n-1}{x-1}\binom{n-x}{y},
		 			& \text{if $\emptyset \subsetneq S \subsetneq Q_n$.}
		\end{cases}
\end{equation*}
\item
The star of $L_n(a,-,-,d)$ has complexity $n+1$.
\item Product
\begin{enumerate}
\item 
	Restricted case:\\
%The product 
$L'_m(a,b,-,d) L_n(a,b,-,d)$ has complexity $m+2^{n-2}$.
\item
	Unrestricted case:\\
%The product 
 $L'_m(a,b,-,d,e) L_n(a,b,-,d,c)$ has complexity $m+2^{n-1}+2^{n-2}+1$.
\end{enumerate}
\item Boolean operations
\begin{enumerate}
\item
	Restricted case:\\
The complexity of $\circ$
is $mn$ if $\circ\in \{\cap,\oplus\}$, 
 $mn-(m-1)$ if $\circ=\setminus$, and $mn-(m+n-2)$ if $\circ=\cup$, and 
these bounds are met by $L'_m(a,b,-,d) \circ L_n(b,a,-,d)$.
\item
	Unrestricted case:\\
The complexity of $L'_m(a,b,-,d,e) \circ L_n(e,c,-,d,a)$ is the same as  for arbitrary regular languages:
$(m+1)(n+1)$ if $\circ\in \{\cup,\oplus\}$, 
$mn+m$ if $\circ=\setminus$, and $mn$ if $\circ=\cap$.
The bound for difference is also met by $L'_m(a,b,-,d,e) \setminus L_n(e,-,-,d,a)$ and the bound for intersection  by $L'_m(a,-,-,d,e) \cap L_n(e,-,-,d,a)$.

\end{enumerate}
\end{enumerate}
\label{thm:ideal}
\end{theorem}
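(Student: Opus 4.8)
The plan is to split the statement into its structural part—that the DFA of Definition~\ref{def:RWit} is minimal, accepts a right ideal, and has complexity $n$—and its ``most complex'' part, namely items 1--7, which I would handle by sorting each item into either a \emph{restricted} claim inherited from earlier work or an \emph{unrestricted} claim already established in this paper.

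First I would verify the structural claims directly from the transition table. The state $n-1$ is an accepting sink: every letter $a,b,c,d,e$ loops on it, so any word reaching $n-1$ continues to be accepted, which yields $L_n = L_n\Sigma^*$ and hence that $L_n$ is a right ideal. For minimality, every state is reachable by reading a suitable power of $a$ (to reach $0,\dots,n-2$) followed by $d$ (to reach $n-1$ from $n-2$), and any two distinct states are separated by a word in $\{a,d\}^*$ that steers exactly one of them into the accepting sink while leaving the other non-accepting. The DFA then has $n$ reachable, pairwise distinguishable states, so the complexity is exactly $n$; the presence of the identity input $e$ does not change this, since any word containing $e$ determines the same quotient as the word obtained by deleting its occurrences of $e$.

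The key observation for the ``most complex'' part is that $e$ is used \emph{only} in the unrestricted witnesses. Each witness named in items 1--5, 6(a) and 7(a) is a dialect in which $e$ is omitted, and under the substitution $s_\pi$ with $\pi(e)=-$ such a dialect is literally the corresponding witness of the stream of~\cite{BrDa14}. Consequently items 1--5 (syntactic semigroup, quotient complexities, reversal and atom count, atom complexities, and star) together with the restricted product 6(a) and the restricted boolean bounds 7(a) are exactly the results of~\cite{BrYe11,BrDa14,BrDa15,BDL15}, which I would cite. The unrestricted product bound 6(b) is Theorem~\ref{thm:RWitProduct}, and the unrestricted boolean bounds 7(b) are Theorem~\ref{thm:RWitBoolean}, both proved earlier in this paper with precisely the dialects listed here.

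The only place where this is more than bookkeeping—and thus the main point to argue carefully—is confirming that augmenting the~\cite{BrDa14} stream with the identity input $e$ disturbs none of the previously established restricted bounds, so that a \emph{single} stream with its dialects meets all of them simultaneously. Because $e$ induces the identity transformation, it contributes nothing to reachability, nothing to distinguishability, and at most the identity element to a transition semigroup in which it appears; since none of the restricted witnesses uses $e$ at all, each restricted bound is attained by a dialect identical to one already analysed in the cited papers. Hence the augmented stream is simultaneously most complex for all the listed operations, which completes the argument.
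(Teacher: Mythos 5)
Your proposal is correct and follows essentially the same route as the paper, which simply cites the restricted results from the earlier work on right ideals and invokes Theorems~\ref{thm:RWitProduct} and~\ref{thm:RWitBoolean} for the unrestricted product and boolean bounds. Your additional verification that the identity letter $e$ does not perturb any of the previously established restricted bounds is a point the paper leaves implicit, but it is the right observation and does not change the structure of the argument.
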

\begin{proof}
The restricted complexity results were proved in~\cite{BrDa14}, and the unrestricted results for product and boolean operations were proved in Theorems~\ref{thm:RWitProduct} and~\ref{thm:RWitBoolean}.
\end{proof}

\section{Left Ideals}

The following stream of left ideals was defined in~\cite{BrYe11}, where it was conjectured that its DFAs have maximal transition semigroups. This conjecture was proved in~\cite{BrSz14a}, and  it was shown in~\cite{BDL15} that this stream is most complex for restricted operations.
We prove that it is also most complex in the unrestricted case.

\begin{definition}
\label{def:leftideal}
For $n\ge 4$, let $\cD_n=\cD_n(a,b,c,d,e)=(Q_n,\Sig,\delta_n, 0, \{n-1\})$, where 
$\Sig=\{a,b,c,d,e\}$,
and $\delta_n$ is defined by  transformations
$a\colon (1,\dots,n-1)$,
$b\colon(1,2)$,
${c\colon(n-1 \to 1)}$,
${d\colon(n-1\to 1)}$, 
$e\colon (Q_n\to 1)$.
Let $L_n=L_n(a,b,c,d,e)$ be the language accepted by~$\cD_n$.
The structure of  $\cD_n(a,b,c,d,e)$ is shown in Figure~\ref{fig:LeftIdealWit}. 
\end{definition}

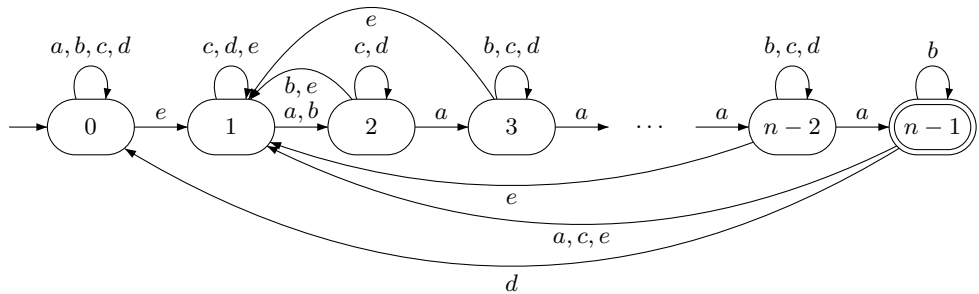
\begin{figure}[h]
\unitlength 10.5pt
\begin{center}\begin{picture}(31,9)(1,3)
\gasset{Nh=2,Nw=3.1,Nmr=1.2,ELdist=0.3,loopdiam=1.2}
{\small
\node(0)(2,8){$0$}\imark(0)
\node(1)(7,8){$1$}
\node(2)(12,8){$2$}
\node(3)(17,8){$3$}
\node[Nframe=n](qdots)(22,8){$\dots$}
{\scriptsize
\node(n-2)(27,8){{\small $n-2$}}
\node(n-1)(32,8){{\small $n-1$}}\rmark(n-1)
}
\drawedge(0,1){$e$}
\drawedge[ELdist=.15](1,2){$a,b$}
\drawedge(2,3){$a$}
\drawedge(3,qdots){$a$}
\drawedge(qdots,n-2){$a$}
\drawedge(n-2,n-1){$a$}
\drawloop(0){$a,b,c,d$}
\drawloop(1){$c,d,e$}
\drawloop(2){$c,d$}
\drawloop(3){$b,c,d$}
\drawloop(n-2){$b,c,d$}
\drawloop(n-1){$b$}
\drawedge[curvedepth=-2.1](2,1){$b,e$}
\drawedge[curvedepth=-4.3](3,1){$e$}
\drawedge[curvedepth=2.1](n-2,1){$e$}
\drawedge[curvedepth=3.5](n-1,1){$a,c,e$}
\drawedge[curvedepth=5](n-1,0){$d$}
}
\end{picture}\end{center}
\caption{Minimal DFA of left ideal of  Definition~\ref{def:leftideal}.}
\label{fig:LeftIdealWit}
\end{figure}

\subsection{Product of Left Ideals}

\begin{theorem}[(Product of Left Ideals)]
\label{thm:leftidealproduct}
For $m,n \ge 4$, let $L'_m$ (respectively, $L_n$) be an arbitrary left ideal with $m$ (respectively, $n$) quotients over an alphabet $\Sig'$, (respectively, $\Sig$). 
Then $\kappa(L'_m L_n) \le mn+m+n$, and this bound is met by $L'_m(a,b,-,d,e)$ and  $L_n(a,d,c,-,e)$ of Definition~\ref{def:leftideal}.
\end{theorem}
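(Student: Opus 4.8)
The plan is to follow the same two-phase strategy as in Theorems~\ref{thm:product} and~\ref{thm:boolean}: first derive the upper bound $mn+m+n$ by analysing the subset construction applied to the standard $\eps$-NFA $\cN$ for $L'_mL_n$, and then verify that the stated witnesses are reachable and pairwise distinguishable.

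For the upper bound, let $\cD'_m=(Q'_m,\Sig',\delta',0',F')$ and $\cD_n=(Q_n,\Sig,\delta,0,F)$ be minimal DFAs of the two left ideals, and build $\cN$ by adding $\eps$-transitions from each state of $F'$ to $0$. I complete both machines over $\Sig'\cup\Sig$ by routing the missing letters to empty states $\emp'$ and $\emp$; since $\cD'_m$ is deterministic, every reachable subset of $\cN$ contains at most one state of $Q'_m\cup\{\emp'\}$ together with a set $S\subseteq Q_n$. The crucial observation, which is what separates left ideals from arbitrary regular languages, is a \emph{collapse} property: because $L_n=\Sig^*L_n$, every quotient of $L_n$ contains $L_n$, so $L_n=L^0\subseteq L^q$ for each $q\in Q_n$. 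State $0$ of $\cD_n$ enters a reachable subset only through the $\eps$-transitions, so every element of $S$ has the form $\delta(0,u)$ for a suffix $u$ of the input read since that element was injected. Taking the element $q^*=\delta(0,u^*)$ coming from the \emph{longest} such suffix, and writing $u^*=zu$ for any other $q=\delta(0,u)$, the inclusion $L^0\subseteq L^{\delta(0,z)}$ gives $L^q=u^{-1}L^0\subseteq u^{-1}L^{\delta(0,z)}=L^{q^*}$; hence $q^*$ dominates every state of $S$, and $S$ is equivalent, in the determinized automaton, to the singleton $\{q^*\}$. Consequently every reachable determinized state is equivalent to a pair $(p',q)$ with $p'\in Q'_m\cup\{\emp'\}$ and $q\in Q_n\cup\{\emp\}$ (the second coordinate recording the dominant $\cD_n$-state, or death), and the product behaves like the direct product of the two completed machines, except that the second coordinate is reset to $0$ exactly when the first coordinate is final. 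Counting these pairs, and noting that a final state $f'$ of $\cD'_m$ always carries a freshly injected $0$ so that the pair $(f',\emp)$ never occurs, yields at most $mn+(m-|F'|)+(n+1)=mn+m+n+1-|F'|$ distinguishable states; this is maximised at $|F'|=1$, giving the bound $mn+m+n$.

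For tightness I use $L'_m(a,b,-,d,e)$ and $L_n(a,d,c,-,e)$, so that $\Sig'=\{a,b,d,e\}$ and $\Sig=\{a,c,d,e\}$: the shared letters $a,d,e$ drive both coordinates, $b\in\Sig'\setminus\Sig$ kills the $\cD_n$-coordinate, and $c\in\Sig\setminus\Sig'$ kills the $\cD'_m$-coordinate. Guided by the direct-product picture above, I would first reach every pair $(p',q)$ with both coordinates alive using the shared rotations and the reset letter $e$ (in the style of the reachability argument of Theorem~\ref{thm:boolean}), then apply $b$ to reach the states $(p',\emp)$ and $c$ to reach the states $(\emp',q)$, and finally combine these to reach the fully dead state; the only pair never produced is $(f',\emp)$, matching the count. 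For distinguishability I would exhibit, for any two of the $mn+m+n$ states, a word carrying exactly one of them into a final configuration, using $b$, $c$, and short words over $\{a,d,e\}$ to separate states differing in a single coordinate, just as the two ``killing'' letters are used in the boolean witnesses.

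The main obstacle is the collapse property and its bookkeeping: one must be sure that the left-ideal domination $L^q\subseteq L^{q^*}$ really makes $S$ equivalent to a single state for \emph{all} future inputs, including the alphabet-specific letters $b$ and $c$ that delete a coordinate rather than permuting it, and that the reset-to-$0$ behaviour at final states of $\cD'_m$ is treated consistently in both the counting and the reachability steps. Once the product is reduced to this direct-product-with-reset form, the remaining reachability and distinguishability computations are routine and parallel those already carried out for boolean operations.
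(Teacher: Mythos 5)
Your upper-bound argument is sound and is essentially the paper's argument transposed from quotients to the subset construction: the key fact in both cases is that for a left ideal $L_n$ over $\Sig$ one has $L_n\subseteq z^{-1}L_n$ for every $z\in\Sig^*$, so the union of quotients of $L_n$ by nested suffixes collapses to the one coming from the longest suffix that avoids $\Sig'\setminus\Sig$; your count $mn+m+n+1-|F'|\le mn+m+n$ agrees with the paper's identification of $K'\cup L_n$ with $K'\cup\emp$ for final $K'$. One caveat: your phrase ``the second coordinate is reset to $0$ exactly when the first coordinate is final'' is wrong as stated --- the freshly injected $0$ is \emph{dominated} by any live state already present (it corresponds to the empty suffix), so the second coordinate only changes from $\emp$ to $0$; your counting implicitly uses the correct behaviour, but the literal claim would give a much smaller and false bound, so it needs to be fixed.

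The genuine gap is in the tightness half. You assert that reachability of all $mn+m+n$ states ``parallels'' the boolean-operations argument and is routine, but it is not: in the product automaton the two coordinates are coupled. The $Q_n$-part of a subset can only be seeded through the $\eps$-injection at the unique final state $(m-1)'$ of $\cD'_m$, and the letter $e$ of $L_n(a,d,c,-,e)$ maps all of $Q_n$ to $1$, so you cannot move the second coordinate freely while holding the first. The paper has to work for this: it reaches $\{1',0,1\}$ by $ea^{(m-1)(n-1)}$ (exploiting that each pass of the first coordinate through $(m-1)'$ injects a $0$, which $a$ then fixes), reaches $\{p',0,2\}$ by $(ad)^{p-2}a$, and only then fills in the remaining $\{p',0,q\}$, $\{0,q\}$ and $\emp$ with $a$, $c$, $d$. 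None of this is produced by the ``direct-product with two killing letters'' template of Theorem~\ref{thm:boolean}, and your proposal gives no words at all for reachability or for distinguishability (where the paper again needs specific words such as $a^{m-1-p_2}ea^{n-2}$ and reductions via $b$ and $c$). Since ``this bound is met by the stated witnesses'' is half of the theorem, the proposal as written does not establish it.
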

\begin{proof}
We first derive an upper bound on the complexity of $L'_mL_n$, where $L'_m$ is any regular language over $\Sigma'$ and $L_n$ is any left ideal over $\Sigma$. 
If $\Sig'\setminus\Sig$ and $\Sig\setminus \Sig'$ are non-empty, we consider $L'_m$ and $L_n$ to be over $\Sig' \cup\Sig$ by adding empty quotients.
%Assume $L'$ has $m$ quotients and $L$ has $n$ quotients.
%In taking their product, we consider the languages over the union of their alphabets.
% If $L'_m$ and $L_n$ use different alphabets then they each have an empty quotient when taken over the combined alphabet. Hence $L'$ and $L$ have up to $m+1$ and $n+1$ quotients respectively, when the empty quotients are added.

We count the number of distinct quotients of $L'_mL_n$.
Consider $w^{-1}(L'_mL_n)$ for a word $w \in (\Sigma' \cup \Sigma)^*$. We decompose $w$ as $u_1u_2 \dots u_kv$ where $u_1\dots u_i \in L'_m$ for $i=1 ,\dots, k$ and no other prefix of $w$ is in $L'_m$. Then $wx \in L'_mL_n$ if and only if $u_{i+1} \dots u_kvx \in L_n$ for some $1 \le i \le k$, or $x=yz$ such that $wy \in L'_m$ and $z \in L_n$; hence 
$$w^{-1}(L'_mL_n) = \bigcup_{i=1}^k (u_{i+1}\dots u_kv)^{-1}L_n \cup (w^{-1}L'_m)L_n.$$

Notice that $v^{-1}L_n \subseteq (u_kv)^{-1}L_n \subseteq  (u_{k-1}u_kv)^{-1}L_n \subseteq \dots \subseteq (u_{i+1}\cdots u_kv)^{-1}L_n$ as long as $u_{i+1}\cdots u_kv \in \Sigma^*$, since $L_n$ is a left ideal with respect to $\Sigma$.
Alternatively if $u_{i+1}\cdots u_kv$ contains a letter from $\Sigma' \setminus \Sigma$ then $(u_{i+1}\cdots u_kv)^{-1}L_n = \emptyset$.
Thus, either $\bigcup_{i=1}^k (u_{i+1}\cdots u_kv)^{-1}L_n = \emptyset$, or there exists a minimal $j \in \{1, \cdots, k\}$ such that $u_{j+1}\cdots u_kv \in \Sigma^*$
in which case $\bigcup_{i=1}^k (u_{i+1}\cdots u_kv)^{-1}L_n =  (u_{j+1}\cdots u_kv)^{-1}L_n$.
Hence $\bigcup_{i=1}^k (u_{i+1}\cdots u_kv)^{-1}L_n$ is one of the $n+1$ quotients of $L_n$.

Each quotient of the product may therefore be written as $K \cup K'L_n$ for quotients $K'$ of $L'_m$ and $K$ of $L_n$.
Since $L'_m$ and $L_n$ have $m+1$ and $n+1$ quotients respectively, their product can have no more than $(m+1)(n+1)$ distinct quotients.
Recall that $L_n$ and $\emptyset$ are both quotients of $L_n$; in the case that $K'$ is final, the sets $L_n \cup K'L_n$ and $\emptyset \cup K'L_n$ are equal.
As $L'_m$ has at least one final quotient, we have the upper bound of $mn+m+n$. By comparison, if $L'_m$ and $L_n$ use the same alphabet then their product has complexity at most $m+n-1$~\cite{BDL15}.

The dialects  $(L'_m(a,b,-,d,e)\mid m \geq 4)$ and $(L_n(a,d,c,-,e) \mid n \ge 4)$ of the left ideal stream of Definition~\ref{def:leftideal}  meet the upper bound for product.
To prove this we apply the usual NFA construction for product.
%: We begin with DFAs $\cD'_m(a,b,-,d,e)$ and $\cD_n(a,d,c,-,e)$, the final state $(m-1)'$ of $\cD'_m$ becomes non-final, and an $\varepsilon$-transition is added from that state to the initial state $0$ of $\cD_n$. 
This NFA is illustrated in Figure~\ref{fig:leftidealprod} for $m=n=4$.

\begin{figure}[h]
\unitlength 10pt
\begin{center}\begin{picture}(31,10.5)(-2,1.8)
\gasset{Nh=2,Nw=2,Nmr=1.2,ELdist=0.3,loopdiam=1.2}
{\small
\node(0')(2,8){$0'$}\imark(0')
\node(1')(6,8){$1'$}
\node(2')(10,8){$2'$}
\node(3')(14,8){$3'$}

\drawedge(0',1'){$e$}
\drawedge[ELdist=0.2](1',2'){$a,b$}
\drawedge(2',3'){$a$}
\drawloop(0'){$a,b,d$}
\drawloop(1'){$d,e$}
\drawloop(2'){$d$}
\drawloop(3'){$b$}
\drawedge[curvedepth=-2.1](2',1'){$b,e$}
\drawedge[curvedepth=2.5, ELside=r](3',1'){$a,e$}
\drawedge[curvedepth=3.8](3',0'){$d$}

\node(0)(14,4){$0$}
\node(1)(18,4){$1$}
\node(2)(22,4){$2$}
\node(3)(26,4){$3$}\rmark(3)

\drawedge(0,1){$e$}
\drawedge[ELdist=0.2](1,2){$a,d$}
\drawedge(2,3){$a$}
\drawloop[loopangle=225](0){$a,c,d$}
\drawloop(1){$c,e$}
\drawloop(2){$c$}
\drawloop(3){$d$}
\drawedge[curvedepth=-2.1](2,1){$d,e$}
\drawedge[curvedepth=2.5, ELside=r](3,1){$a,c,e$}

\drawedge(3',0){$\eps$}
}
\end{picture}\end{center}
\caption{NFA for product of left ideals.}
\label{fig:leftidealprod}
\end{figure}
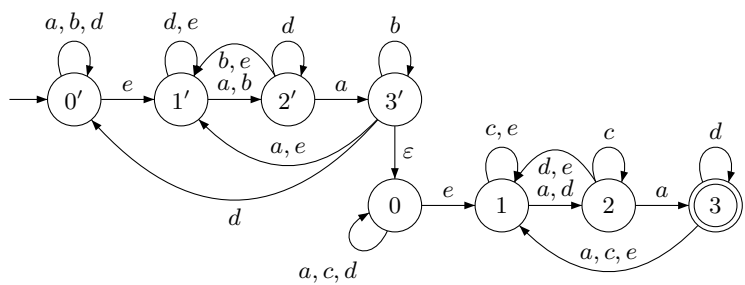

The subset construction  yields sets $\{p'\} \cup S$ where $p' \in Q'_m$ and $S \subseteq Q_n$, as well as sets $S \subseteq Q_n$.
However it is impossible to reach any set containing two or more elements of $Q_n \setminus \{0\}$ since they are only reachable from $0$ by applying $e$, and doing so sends all of $Q_n$ to $1$.
Moreover, in the DFA obtained from the NFA by the subset construction, the states $\{p', q\}$ are equivalent to $\{p', 0, q\}$ for each $p' \in Q'_m$ and $q \in Q_n$, and states $\{q\}$ and $\{0,q\}$ are similarly equivalent.
Hence we consider only the states $\{p'\}$ for $p' \in Q'_{m-1}$, $\{0,q\}$ for $q \in Q_n$, $\{p',0, q\}$ for $p' \in Q'_m$ and $q \in Q_n$, and $\emptyset$; a total of $mn+m+n$ states.

We check reachability of these states.
State $\{0'\}$ is initial and $\{p'\}$ is reached by $ea^{p-1}$ for $1 \le p \le m-2$.
State $\{(m-1)', 0\}$ is reached by $ea^{m-2}$, and $\{p',0\}$ is then reached by $a^p$ for $1 \le p \le m-1$, or by $d$ for $p'=0'$.
Applying $e$ from $\{(m-1)',0\}$ reaches $\{1',1\}$, and we reach $\{1', 0, 1\}$ by $ea^{(m-1)(n-1)}$.
State $\{p',0,2\}$ is then reached by $(ad)^{p-2}a$ for $2 \le p \le m-1$, and $\{1',0,2\}$ is reached from $\{1',0,1\}$ by $d$.
Now states of the form $\{p',0,q\}$ for $p \in Q'_m\setminus \{0'\}$ and $q \in Q_n \setminus \{0\}$ are reached by words in $a^*$ from states of the form $\{p',0, 2\}$.
State $\{0',0,q\}$ is reachable from $\{(m-1)',0,1\}$ by $da^{q-1}$, and
$\{0,q\}$ is reached from $\{0',0,1\}$ by $ca^{q-1}$.
Finally $\emptyset$ is reached from $\{0'\}$ by $c$. Thus, all states are reachable.

All non-empty states are distinguished from $\emptyset$ by $ea^{m-2}ea^{n-2}$.
States $\{p'_1\}$ and $\{p'_2\}$ where $p_1 < p_2$ are distinguished by $a^{m-1-p_2}ea^{n-2}$,
and states $\{0,q_1\}$ and $\{0,q_2\}$ where $q_1 < q_2$ are distinguished by $a^{n-1-q_2}$.
Two states which differ on $Q'_m$ are reduced to $\{p_1'\}$ and $\{p_2'\}$ by $b$,
and two states which differ on $Q_n$ are reduced to $\{0,q_1\}$ and $\{0,q_2\}$ by $c$ or $ac$.
Hence $(L'_m(a,b,-,d,e)$ and $(L_n(a,d,c,-,e)$ meet the upper bound for product.
\end{proof}

\subsection{Boolean Operations on Left Ideals}

\begin{theorem}[(Boolean Operations on Left Ideals)]
\label{thm:leftidealboolean}
For $m,n \ge 4$, the unrestricted complexities of boolean operations on left ideals are the same as those for arbitrary regular languages.
In particular, 
 the left ideals $L'_m(a,-,c,d,e)$ and $L_n(a,b,e,-,c)$  of Definition~\ref{def:leftideal}
meet the bound $(m+1)(n+1)$ for union and symmetric difference,
$L'_m(a,-,c,d,e)$ and
$L_n(a,-,e,-c)$ meet the bound $mn+m$ for difference, and 
$L'_m(a,-,c,-,e)$ and $L_n(a,-,e,-,c)$
meet the bound $mn$ for  intersection.

%For $m,n \ge 4$, let $L'_m$ (respectively, $L_n$) be any left ideal with $m$ (respectively, $n$) quotients over an alphabet $\Sig'$, (respectively, $\Sig$). 
%Then the complexities of boolean operations are the same as those for arbitrary regular languages.
%In particular,  the complexity of union and symmetric difference is $(m+1)(n+1)$ 
%and this bound is met by the left ideals  $L'_m(a,-,c,d,e)$ and $L_n(a,b,e,-,c)$ of Definition~\ref{def:leftideal}; 
%the complexity of difference is $mn+m$, and this bound is met by $L'_m(a,-,c,d,e)$ and
%$L_n(a,-,e,-c)$; the complexity of intersection is $mn$ and this bound is met by  
%$L'_m(a,-,c,-,e)$ and $L_n(a,-,e,-,c)$.

\end{theorem}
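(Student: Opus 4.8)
The plan is to follow the template of Theorem~\ref{thm:RWitBoolean}. The upper bounds $(m+1)(n+1)$, $mn+m$, and $mn$ are precisely the regular-language bounds of Theorem~\ref{thm:boolean}, and since left ideals are regular these bounds apply verbatim; the entire task is therefore to show that the stated dialect pairs meet them. First I would complete the two witnesses to a common alphabet by adjoining an empty state $\emp'$ to $\cD'_m$ that absorbs every transition under the letter of $\Sig\setminus\Sig'$, and an empty state $\emp$ to $\cD_n$ absorbing every transition under the letter of $\Sig'\setminus\Sig$, and then form the direct product $\cP_{m,n}$ on $(Q'_m\cup\{\emp'\})\times(Q_n\cup\{\emp\})$. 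For union and symmetric difference I must reach and distinguish all $(m+1)(n+1)$ states; for difference and intersection I will argue, exactly as in Theorem~\ref{thm:boolean}, that the restricted alphabet of the result makes some empty-component states unreachable, trimming the count to $mn+m$ and $mn$.

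The feature I would exploit throughout is the deliberate swap of the roles of $c$ and $e$ between the two witnesses. Reading off the induced transformations on the product for the union pair $L'_m(a,-,c,d,e)$ and $L_n(a,b,e,-,c)$, the letter $e$ resets the first coordinate to $1$ while acting on the second coordinate only through the near-identity $(n-1\to 1)$, whereas $c$ resets the second coordinate to $1$ while acting on the first only through $(m-1\to 1)$; meanwhile $a$ rotates both coordinates simultaneously, $b$ sends the first coordinate to $\emp'$, and $d$ sends the second to $\emp$. Thus $e$ and $c$ give near-independent resets of the two coordinates, and $b,d$ collapse either coordinate to its empty state. For reachability of the core block $Q'_m\times Q_n$, I would first reach $(1',1)$ by $ec$, then reach an arbitrary $(p',q)$ with $1\le p\le m-1$ and $1\le q\le n-1$ by the scheme: rotate from $(1',1)$ to put the first coordinate at $p-q+1$, reset the second coordinate with $c$, then apply $a^{q-1}$; since the two rotations advance in lockstep under $a$ this lands exactly on $(p',q)$. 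The states with a $0$ coordinate are handled separately, using that $a$ fixes $0$ in each coordinate and that the transition into $0'$ returns the first coordinate from the top state. Finally $(\emp',q)$ is reached by applying $b$ and then adjusting the second coordinate with $c$ and $a$, $(p',\emp)$ symmetrically with $d,e,a$, and $(\emp',\emp)$ from either.

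For distinguishability I would treat the four blocks $Q'_m\times Q_n$, $H=\{(\emp',q)\}$, $V=\{(p',\emp)\}$, and $\{(\emp',\emp)\}$ as in the union case of Theorem~\ref{thm:RWitBoolean}. Two core states differing in one coordinate are separated by collapsing the other coordinate to its empty state with the appropriate foreign letter ($d$ or $b$) and then rotating the surviving coordinate with $a$ to its final state, since a collapsed coordinate can never again contribute an accepting run. States of $H$ accept some word containing $b$ but none containing $d$, while states of $V$ accept some word containing $d$ but none containing $b$, so any two states from different blocks are separated; within $H$ (resp.\ $V$) states are told apart by words in $a^*$ after a reset, and $(\emp',\emp)$ accepts nothing. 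For difference the second witness $L_n(a,-,e,-,c)$ uses only $\{a,c,e\}\subseteq\Sig'$, so $\Sig\setminus\Sig'=\emp$ and no transition ever enters $\emp'$; only the $(p',q)$ and $(p',\emp)$ states survive, giving $m(n+1)=mn+m$, distinguished by driving the first coordinate to its final state. For intersection both witnesses carry the identical alphabet $\{a,c,e\}$, so neither empty state is reachable and exactly the $mn$ core states remain; the $c/e$ role-swap still supplies an independent reset for each coordinate, so all $mn$ are reachable and pairwise distinguishable.

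I expect the main obstacle to be the reachability of the core block $Q'_m\times Q_n$. Because $a$ rotates both coordinates in lockstep and the only decoupling tools are the resets-to-$1$ provided by $c$ and $e$, each of which still perturbs the other coordinate through its $(\,\cdot\to 1)$ action on the top state, no single letter moves one coordinate while fixing the other. The delicate point is therefore to sequence resets and rotations so as to hit every off-diagonal pair, together with a careful case analysis for the coordinates $0$, $m-1$, and $n-1$, where the resets coincide with the states they perturb and where $a$ fails to move a $0$ coordinate. Once these boundary cases are dispatched, the distinguishability arguments and the difference/intersection alphabet reductions are routine adaptations of the right-ideal proof.
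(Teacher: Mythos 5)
Your overall architecture coincides with the paper's: complete both witnesses with empty states, form the direct product, split the states into the blocks $Q'_m\times Q_n$, $H$, $V$, $\{(\emp',\emp)\}$, and obtain the difference and intersection bounds by observing that the restricted alphabet of the result makes the relevant empty-component states removable. The one substantive divergence is how the core block is handled, and that is exactly where your proposal is not yet a proof. The paper does not re-derive reachability and distinguishability of the $mn$ states of $Q'_m\times Q_n$ at all: it cites the restricted-case result of~\cite{BDL15} that $L'_m(a,-,c,-,e)\circ L_n(a,-,e,-,c)$ already has complexity $mn$ for every proper boolean operation, so all core states are reachable and pairwise distinguishable by words in $\{a,c,e\}^*$, and only the states involving $\emp'$ or $\emp$ need new (and easy) arguments via $b$ and $d$. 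You instead attempt an explicit reset-and-rotate construction, and you yourself flag that it breaks precisely at the boundary coordinates: $c$ perturbs the first coordinate exactly when it sits at $(m-1)'$ (so the scheme ``put the first coordinate at $p-q+1$, then apply $c$'' fails when $p-q+1\equiv m-1$), $e$ symmetrically perturbs the second coordinate at $n-1$, and $a$ fixes the coordinate $0$. Leaving these cases ``to be dispatched'' means the central reachability claim is asserted, not proved; either carry out that case analysis or, more economically, invoke the known restricted-case theorem as the paper does.

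Two smaller points. First, your strategy of distinguishing two core states that differ in one coordinate by collapsing the other coordinate with $b$ or $d$ needs care: those letters also act nontrivially on the surviving coordinate ($d$ maps $(m-1)'$ into $Q'_{m}$ nontrivially and $b$ transposes $1,2$), so a pair such as $((m-1)',q)$ and $(0',q)$ can be merged by a careless application of $d$; and for difference, separating $(p',q)$ from $(p',\emp)$ requires a word driving \emph{both} coordinates to final states simultaneously (so that it is accepted from $(p',\emp)$ but rejected from $(p',q)$), not merely the first coordinate. Second, your mechanism for reaching $(0',q)$ ``via the transition into $0'$ from the top state'' cannot work, since that transition is induced by $d$, which simultaneously kills the second coordinate; the correct route is to stay at $0'$ (fixed by $a$ and $c$) and move the second coordinate with $c$ and $a$. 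Your final counts ($mn+m$ for difference, $mn$ for intersection, $(m+1)(n+1)$ for union and symmetric difference) agree with the theorem.
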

\begin{proof}
The upper bounds on the complexity of boolean operations for left ideals are the same as for regular languages.
%$(m+1)(n+1)$ for union and symmetric difference, $mn+m$ for difference, and $mn$ for intersection. 
%of left ideals with $m$ and $n$ quotients, respectively.
We show that the left ideals $L'_m(a,-,c,d,e)$ and $L_n(a,b,e,-,c)$ of Definition~\ref{def:leftideal} meet these bounds.

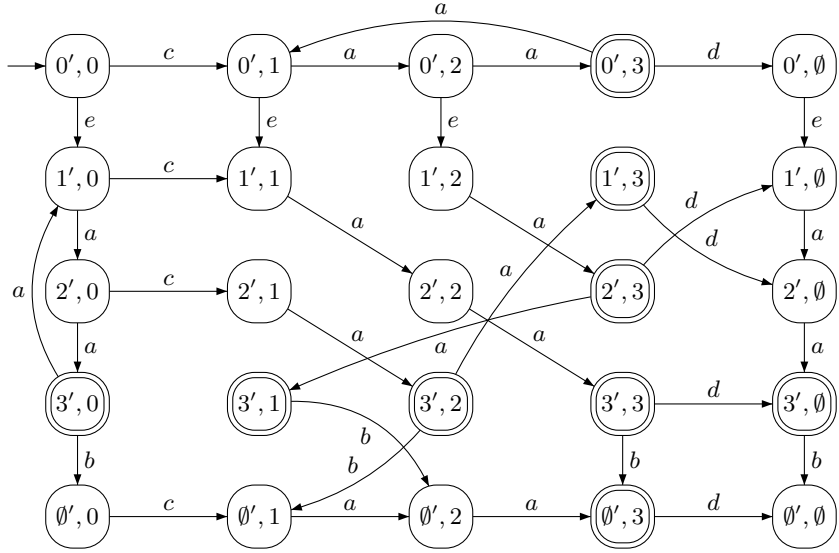
\begin{figure}[ht]
\unitlength 8.5pt
\begin{center}\begin{picture}(35,25)(0,-7)
\gasset{Nh=2.8,Nw=2.8,Nmr=1.2,ELdist=0.3,loopdiam=1.2}
	{\small
\node(0'0)(2,15){$0',0$}\imark(0'0)
\node(1'0)(2,10){$1',0$}
\node(2'0)(2,5){$2',0$}
\node(3'0)(2,0){$3',0$}\rmark(3'0)
\node(4'0)(2,-5){$\emp',0$}

\node(0'1)(10,15){$0',1$}
\node(1'1)(10,10){$1',1$}
\node(2'1)(10,5){$2',1$}
\node(3'1)(10,0){$3',1$}\rmark(3'1)
\node(4'1)(10,-5){$\emp',1$}

\node(0'2)(18,15){$0',2$}
\node(1'2)(18,10){$1',2$}
\node(2'2)(18,5){$2',2$}
\node(3'2)(18,0){$3',2$}\rmark(3'2)
\node(4'2)(18,-5){$\emp',2$}

\node(0'3)(26,15){$0',3$}\rmark(0'3)
\node(1'3)(26,10){$1',3$}\rmark(1'3)
\node(2'3)(26,5){$2',3$}\rmark(2'3)
\node(3'3)(26,0){$3',3$}\rmark(3'3)
\node(4'3)(26,-5){$\emp',3$}\rmark(4'3)

\node(0'4)(34,15){$0',\emp$}
\node(1'4)(34,10){$1',\emp$}
\node(2'4)(34,5){$2',\emp$}
\node(3'4)(34,0){$3',\emp$}\rmark(3'4)
\node(4'4)(34,-5){$\emp',\emp$}

\drawedge(0'0,0'1){$c$}
\drawedge(0'0,1'0){$e$}

\drawedge(1'0,2'0){$a$}
\drawedge(2'0,3'0){$a$}
\drawedge[curvedepth=2](3'0,1'0){$a$}

\drawedge(1'0,1'1){$c$}
\drawedge(2'0,2'1){$c$}
%\drawedge[curvedepth=0.8, ELpos=60](3'0,1'1){$c$}

\drawedge(0'1,0'2){$a$}
\drawedge(0'2,0'3){$a$}
\drawedge[curvedepth=-2, ELside=r](0'3,0'1){$a$}

\drawedge(0'1,1'1){$e$}
\drawedge(0'2,1'2){$e$}
%\drawedge[curvedepth=-0.8, ELpos=60,ELside=r](0'3,1'1){$e$}

\drawedge(1'1,2'2){$a$}
\drawedge(2'2,3'3){$a$}
%\drawedge[curvedepth=2](3'3,1'1){$a$}

\drawedge(1'2,2'3){$a$}
\drawedge[curvedepth=-0.6](2'3,3'1){$a$}
%\drawedge(3'1,1'2){$a$}

%\drawedge(1'3,2'1){$a$}
\drawedge(2'1,3'2){$a$}
\drawedge[curvedepth=0.8](3'2,1'3){$a$}

\drawedge(0'3,0'4){$d$}
%\drawedge[ELpos = 35](1'3,2'4){$d$}
\drawedge[curvedepth=-1,ELdist=.4](1'3,2'4){$d$}
%\drawedge[ELpos = 35, ELside=r](2'3,1'4){$d$}
\drawedge[curvedepth=1, ELdist=.4](2'3,1'4){$d$}
\drawedge(3'3,3'4){$d$}
\drawedge(4'3,4'4){$d$}

\drawedge(3'0,4'0){$b$}
%\drawedge[ELpos = 35, ELside=r](3'1,4'2){$b$}
\drawedge[curvedepth=2, ELside=r](3'1,4'2){$b$}
%\drawedge[ELpos = 35](3'2,4'1){$b$}
\drawedge[curvedepth=1,ELdist=-1.2](3'2,4'1){$b$}
\drawedge(3'3,4'3){$b$}
\drawedge(3'4,4'4){$b$}

\drawedge(4'0,4'1){$c$}
\drawedge(4'1,4'2){$a$}
\drawedge(4'2,4'3){$a$}

\drawedge(0'4,1'4){$e$}
\drawedge(1'4,2'4){$a$}
\drawedge(2'4,3'4){$a$}
}
\end{picture}\end{center}
\caption{Direct product for union of $L'_4(a,b,c,-,e)$ and $L_4(a,d,e,-,c)$ shown partially.}
\label{fig:leftidealcross}
\end{figure}

The direct product for union is illustrated in Figure~\ref{fig:leftidealcross} for the case $m=n=4$.
Let $S = Q'_m \times Q_n$; when the languages have the same alphabet the direct product only contains the $mn$ states of $S$.
It was proved in~\cite{BDL15} that $L'_m(a,-,c,-,e) \circ L_n(a,-,e,-,c)$ has complexity $mn$ for each proper boolean operation $\circ \in \{\cup, \oplus, \setminus, \cap\}$;
hence all the states of $S$ are reachable and pairwise distinguishable by words in $\{a,c,e\}^*$ for each operation.
The remaining states $(p',\emptyset)$ and $(\emptyset',q)$ in the unrestricted direct product are reachable using $b$ and $d$ -- which are not present in both alphabets.

%Reachability of states in the direct product is illustrated in Figure~\ref{leftidealcross}.
%First consider the states $S = \{(p',q) \mid p' \in Q'_m, q \in Q_n\}$.
%State $(0',0)$ is initial, $(0', q)$ is reached by $c a^{q-1}$, and $(p',0)$ is reached by $e a^{p-1}$.
%State $(p', q)$ with $q \ge p$ and $(p', q) \not= (1', n-1)$ is reachable since
%$(0', q-(p-1)) \xrightarrow{e} (1', q-(p-1)) \xrightarrow{a^{p-1}} (p', q)$; $(1', n-1)$ is reached from $((m-1)', n-2)$ by $a$.
%By symmetry all states $(p', q)$ with $p \ge q$ are also reachable; hence the states of $S$ are reachable.
%We reach $(0', \emp)$ from $(0', n-1)$ by $d$, and then reach $(p', \emp)$ by $ea^{p-1}$.
%Similarly we reach $(\emp',0$ from $((m-1)',0)$ by $b$, and then reach $(\emp', q)$ by $ca^{q-1}$.
%The empty state $(\emp', \emp)$ is reached by $bd$.

It remains to determine which states are distinguishable for each operation.
Let $H = \{(\emptyset',q)\mid q \in Q_n\}$ and $V = \{(p', \emptyset) \mid p' \in Q'_m\}$.

\bd

\item[\bf Union] The final states are $\{((m-1)',q) \mid q \in Q_n \cup \{\emptyset\}\}$ and $\{(p', n-1) \mid p' \in Q'_m \cup \{\emptyset'\}\}$.
Every state in $V$ accepts a word with a $b$, whereas  no state in $H$ accepts such words.
Similarly, every state in $H$ accepts a word with a $d$, whereas  no state in $V$ accepts such words.
Every state in $Q'_m \times Q_n$ accepts a word with a $b$ and a word with a $d$. State $(\emp',\emp)$ accepts no words at all.
Hence any two states chosen from different sets (the sets being $S$, $H$, $V$,  and $\{(\emp',\emp)\}$) are distinguishable.
States in $H$ are distinguishable by words in $a^*$, as are those in $V$.
Therefore all $(m+1)(n+1)$ states are pairwise distinguishable.

\item[\bf Symmetric Difference]
The final states here are all the final states for union except $( (m-1)',n-1 )$. The rest of the argument is the same as for union.

\item[\bf Difference]
The final states now are $\{((m-1)', q) \mid q\neq n-1\}$.
The $n$ states of the form $(\emp',q)$, $q \in Q_n$, are now equivalent to the empty state $\{(\emp',\emp)\}$. 
The remaining states are pairwise distinguishable by the arguments used for union. Hence we have $mn+m+1$ distinguishable states. However, 
since $b$ is not in the alphabet of the difference it can be omitted along with the empty states  $(\emp',q)$, $q\in Q_n \cup \{\emp\}$ (reachable only by $b$), giving  a bound of $mn+n$. 
This bound can be reached by $L'_m(a,-,c,d,e)$ and
$L_n(a,-,e,-c)$.

\item[\bf Intersection]
Here only $((m-1)', n-1 )$ is final and all states $(p', \emp)$, $p' \in Q'_m$, and $(\emp',q)$, $q\in Q_n$ are equivalent to $\{(\emp',\emp)\}$, leaving $mn+1$ distinguishable states.
However, the empty state can be reached only by $b$ or $d$, which are not in the alphabet of $L'_m(a,-,c,d,e)\cap L_n(a,b,e,-,c)$. Hence the correct bound is $mn$ and it is reached by $L'_m(a,-,c,-,e)$ and
$L_n(a,-,e,-c)$.
\ed
The complexities of all ten boolean functions on left ideals are given in Remark~\ref{rem:boolean}.
\end{proof}

\subsection{Most Complex Left Ideals}

We now update the results from~\cite{BDL15} to include the unrestricted cases.

\begin{theorem}[(Most Complex Left Ideals)]
\label{thm:leftidealmain}
For each $n\ge 4$, the DFA of Definition~\ref{def:leftideal} is minimal and its 
language $L_n(a,b,c,d,e)$ has complexity $n$.
The stream $(L_m(a,b,c,d,e) \mid m \ge 4)$ with some dialect streams
%$(L_n(a,d,c,-,e) \mid n \ge 4)$ and $(L_n(a,d,e,-,c) \mid n \ge 4)$
is most complex in the class of left ideals.
In particular, it meets all the complexity bounds below, which are maximal for left ideals.
In several cases the bounds can be met with a reduced alphabet.
\begin{enumerate}
\item
The syntactic semigroup of $L_n(a,b,c,d,e)$ has cardinality $n^{n-1}+n-1$. Moreover, fewer than five inputs do not suffice to meet this bound.  
\item
Each quotient of $L_n(a,-,-,d,e)$ has complexity $n$.
\item
The reverse of $L_n(a,-,c,d,e)$ has complexity $2^{n-1}+1$, and $L_n(a,-,c,d,e)$ has $2^{n-1}+1$ atoms.
\item
For each atom $A_S$ of $L_n(a,b,c)$, the complexity $\kappa(A_S)$ satisfies:
\begin{equation*}
	\kappa(A_S) =
	\begin{cases}
		n, 			& \text{if $S =Q_n$;}\\
		2^{n-1}, 		& \text{if $S =\emp$;}\\
		1+ \sum_{x=1}^{|S|}\sum_{y=1}^{n-|S|} \binom{n-1}{x}\binom{n-1-x}{y},
		 			& \text{otherwise.}
		\end{cases}
\end{equation*}
\item
The star of $L_n(a,-,-,-,e)$ has complexity $n+1$.
\item Product
	\be 
	\item
	Restricted case:\\
	The product $L_m(a,-,-,-,e) L_n(a,-,-,-,e)$ has complexity $m+n-1$.
	\item
	Unrestricted case:\\
	The product $L'_m(a,b,-,d,e) L_n(a,d,c,-,e)$ has complexity $mn+m+n$.
	\ee
\item Boolean operations
	\be
	\item Restricted case:\\
	For any proper binary boolean function $\circ$, the complexity of  $L_m(a,-,c,-,e)\\ 		\circ L_n(a,-,e,-,c)$ is $mn$.
	\item Unrestricted case:\\
	The complexity of $L'_m(a,-,c,d,e) \circ L_n(a,b,e,-,c)$  is the same as  for arbitrary regular languages:
$(m+1)(n+1)$ if $\circ\in \{\cup,\oplus\}$, 
$mn+m$ if $\circ=\setminus$, and $mn$ if $\circ=\cap$.
The bound for difference is also met by $L'_m(a,-,c,d,e) \setminus L_n(a,-,e,-,c)$  and the bound for intersection  by 
$L'_m(a,-,c,-,e) \cap L_n(a,-,e,-,c)$.
\ee
\end{enumerate}
\end{theorem}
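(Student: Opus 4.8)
The plan is to treat this as an assembly theorem, exactly as was done for Theorem~\ref{thm:main} (regular languages) and Theorem~\ref{thm:rightideals} (right ideals): each of the seven listed properties is verified by invoking a single already-established result, so no new long argument is needed. The only genuinely novel content in the statement is the unrestricted product bound $mn+m+n$ and the unrestricted boolean bounds, and these are precisely Theorems~\ref{thm:leftidealproduct} and~\ref{thm:leftidealboolean}, both proved in the present section. Everything else is to be cited.

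First I would establish the two foundational claims about the base object. Minimality of $\cD_n(a,b,c,d,e)$ and the fact that $L_n(a,b,c,d,e)$ has complexity exactly $n$ are quick checks: state $0$ is initial and state $k$ is reached by $ea^{k-1}$ for $1 \le k \le n-1$, so all $n$ states are reachable, while distinguishability follows from $n-1$ being the unique final state together with the resets induced by $c,d,e$. That $L_n(a,b,c,d,e)$ is a left ideal is a direct reading of the transition structure of Definition~\ref{def:leftideal} (and is already implicit in~\cite{BrYe11}, where this stream was introduced as a left-ideal stream). With the base object in hand, I would dispatch items~1--5 together with the restricted cases 6(a) and 7(a) by citation: maximality of the syntactic semigroup and the ``fewer than five inputs do not suffice'' lower bound (item~1) come from~\cite{BrSz14a} (conjectured in~\cite{BrYe11}); the quotient complexities (item~2), the reverse and atom count (item~3), and the atom complexities (item~4) come from~\cite{BrDa15,BDL15}; and the star complexity (item~5) with the restricted product and boolean bounds (6(a), 7(a)) come from~\cite{BDL15}. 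The two unrestricted cases 6(b) and 7(b) are then nothing but Theorems~\ref{thm:leftidealproduct} and~\ref{thm:leftidealboolean}.

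The one point requiring genuine care -- and the step I would flag as the main obstacle -- is the \emph{most complex} claim itself: that a single base stream $(L_n(a,b,c,d,e) \mid n \ge 4)$, together with permutational dialects, simultaneously witnesses \emph{all} of these bounds. The witnesses appearing across the items -- $L_n(a,-,-,d,e)$, $L_n(a,-,c,d,e)$, $L'_m(a,b,-,d,e)$, $L_n(a,d,c,-,e)$, $L'_m(a,-,c,d,e)$, $L_n(a,b,e,-,c)$, and their reduced-alphabet variants -- are each obtained from Definition~\ref{def:leftideal} by a partial permutation of $\{a,b,c,d,e\}$. I would therefore verify, case by case, that every one of these is a legitimate dialect in the sense of Section~\ref{sec:basics}, so that they all descend from the one definition rather than from several unrelated automata. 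This matching of each stated witness to its defining partial permutation is routine bookkeeping, but it is the only place where an inconsistency between the cited sources and the present formulation could hide, so it is where I would concentrate the checking before declaring the theorem proved.
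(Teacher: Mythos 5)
Your proposal is correct and follows essentially the same route as the paper: the paper's own proof is a two-line citation, deferring the restricted cases (items 1--5, 6(a), 7(a)) to~\cite{BDL15} and the unrestricted cases 6(b) and 7(b) to Theorems~\ref{thm:leftidealproduct} and~\ref{thm:leftidealboolean}. The extra checks you flag (minimality of $\cD_n$, left-ideal property, and the bookkeeping that each witness is a permutational dialect of Definition~\ref{def:leftideal}) are sound and consistent with what the paper implicitly relies on.
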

\begin{proof}
The proofs for the restricted cases can be found in~\cite{BDL15}, and
the claims about the unrestricted complexities are proved in Theorems~\ref{thm:leftidealproduct} and~\ref{thm:leftidealboolean}.
\end{proof}

\section{Two-Sided Ideals}
The following stream of two-sided ideals was defined in~\cite{BrYe11}, where it was conjectured that the DFAs in this stream have maximal transition semigroups. This conjecture was proved in~\cite{BrSz14a},  and the stream was shown to be most complex for restricted operations in~\cite{BDL15}. We prove that it is also most complex in the unrestricted case.

\begin{definition}
\label{def:2sided}
For $n\ge 5$, let  
$\cD_n =\cD_n(a,b,c,d,e,f)= (Q_n,\Sig,\delta, 0,\{n-1\})$, where
$\Sig=\{a,b,c,d,e,f\}$, 
and $\delta_n$ is defined by the transformations
$a \colon (1,2,\ldots,n-2)$,
$b \colon (1,2)$,
$c \colon (n-2\to 1)$,
$d \colon (n-2\to 0)$,
$e \colon (Q_{n-1}\to 1)$,
and $f \colon (1 \to n-1)$.
The structure of  $\cD_n(a,b,c,d,e,f)$ is shown in Figure~\ref{fig:2sided}. 
\end{definition}

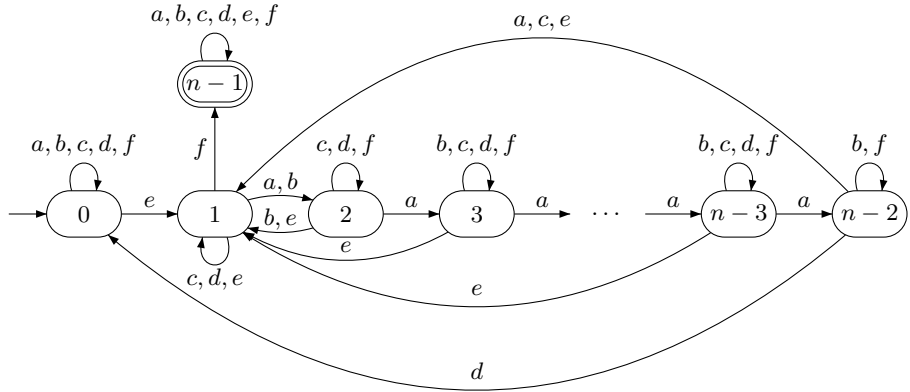
\begin{figure}[h]
\unitlength 7pt
\begin{center}\begin{picture}(43,19)(0,-1)
\gasset{Nh=2.5,Nw=4,Nmr=1.25,ELdist=0.4,loopdiam=1.5}
{\small
\node(n-1)(8,14){$n-1$}\rmark(n-1)
\drawloop(n-1){$a,b,c,d,e,f$}

\node(0)(1,7){$0$}\imark(0)
\node(1)(8,7){$1$}
\node(2)(15,7){$2$}
\node(3)(22,7){$3$}
\node[Nframe=n](4dots)(29,7){$\dots$}
\node(n-3)(36,7){$n-3$}
\node(n-2)(43,7){$n-2$}
\drawedge(1,n-1){$f$}
\drawedge(0,1){$e$}
\drawloop(0){$a,b,c,d,f$}
\drawloop[loopangle=270,ELdist=.2](1){$c,d,e$}
\drawedge[curvedepth= 1,ELdist=.1](1,2){$a,b$}
\drawedge[curvedepth= 1,ELdist=-1.3](2,1){$b,e$}
\drawloop(2){$c,d,f$}
\drawedge(2,3){$a$}
\drawedge[curvedepth= 2.5,ELdist=-1](3,1){$e$}
\drawedge(3,4dots){$a$}
\drawedge(4dots,n-3){$a$}
\drawloop(3){$b,c,d,f$}
\drawloop(n-3){$b,c,d,f$}
\drawedge(n-3,n-2){$a$}
\drawedge[curvedepth= 5,ELdist=-1.2](n-3,1){$e$}
\drawedge[curvedepth= -9.5,ELdist=-1.2](n-2,1){$a,c,e$}
\drawedge[curvedepth= 9.5,ELdist=-1.5](n-2,0){$d$}
\drawloop(n-2){$b,f$}
}
\end{picture}\end{center}
\caption{Minimal DFA $\cD_n(a,b,c,d,e,f)$  of Definition~\ref{def:2sided}.}
\label{fig:2sided}
\end{figure}

\subsection{Product of Two-Sided Ideals}

\begin{theorem}[(Product of Two-Sided Ideals)]
\label{thm:2sidedidealproduct}
For $m,n \ge 5$, let $L'_m$ (respectively, $L_n$) be any two-sided ideal with $m$ (respectively, $n$) quotients over an alphabet $\Sig'$, (respectively, $\Sig$). 
Then $\kappa(L'_m L_n) \le m+2n$, and this bound is met by the two-sided ideals $L'_m(a,b,-,-,e,f)$ and  $L_n(a,c,-,-,e,f)$ of Definition~\ref{def:2sided}.
\end{theorem}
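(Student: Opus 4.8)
The plan is to follow the template of Theorem~\ref{thm:leftidealproduct}, but to exploit that each operand is now a \emph{right} ideal as well as a left ideal; it is this extra structure that collapses the quadratic bound $mn+m+n$ of the purely-left-ideal case down to the linear bound $m+2n$. For the upper bound I would first pass to the common alphabet $\Gamma=\Sig'\cup\Sig$ by adjoining empty quotients, and then write, exactly as in the left-ideal proof,
$$w^{-1}(L'_mL_n)=\bigcup_{i}(u_{i+1}\cdots u_kv)^{-1}L_n\ \cup\ (w^{-1}L'_m)L_n.$$
Because $L_n$ is a left ideal, the quotients appearing in the union form a chain and the whole union collapses to a single quotient $K$ of $L_n$ (or to $\emp$), exactly as was argued for left ideals; this already caps the first summand at one of the $n+1$ quotients of $L_n$ over $\Gamma$.

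The new ingredient is to control the second summand $K'L_n$, where $K'=w^{-1}L'_m$. Here I would intersect the quotient with $\Sig^*$: since $L_n=\Sig^*L_n$ is a right ideal, every word of $K'L_n$ that lies in $\Sig^*$ already lies in $L_n\subseteq K$, so the $\Sig^*$-part of $w^{-1}(L'_mL_n)$ is precisely $K$, while the remaining words -- those using a letter of $\Sig'\setminus\Sig$ -- are governed only by $K'$. The crucial observation, and the step I expect to be the main obstacle, is that because $L'_m$ is a right ideal it has a \emph{unique} accepting quotient, equal to the full language over its alphabet, and that once a word has entered $L'_m$ the accepting sink of $L_n$ behaves absorbingly; consequently all product-quotients that already contain this sink become equivalent irrespective of their remaining coordinates. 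Carefully accounting for these coincidences -- roughly the $m-1$ quotients reached before $L_n$ is entered together with the empty quotient, and two families of about $n$ quotients each after entry (one while the accepting state of $\cD'_m$ is still active, and one after it has been destroyed by a letter of $\Sig\setminus\Sig'$) -- is what yields $m+2n$.

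For tightness I would take the two dialects $L'_m(a,b,-,-,e,f)$ and $L_n(a,c,-,-,e,f)$ of Definition~\ref{def:2sided}, so that $\Sig'=\{a,b,e,f\}$ and $\Sig=\{a,c,e,f\}$, apply the standard $\eps$-NFA product construction, and analyse the subset automaton as in the earlier figures. The reachable and pairwise distinguishable states should be exactly the $m-1$ singletons $\{p'\}$ with $p'$ non-final (reached by words over $\{a,e,f\}$) together with the empty state; the states $\{(m-1)',0\}\cup S$ reached after the letter $f$ drives $\cD'_m$ into its accepting state, where the reset $e$ (inducing $(Q_{n-1}\to1)$) and the cycle $a$ confine $S$ to a single ``low'' token plus possibly the sink $n-1$, giving $n$ classes; and a parallel family of $n$ states with the $\cD'_m$-coordinate deleted, entered by $c\in\Sig\setminus\Sig'$. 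Distinguishability then reduces to separating the low token by words in $a^*$ and separating the three families by the presence of $b$, of $c$, and by reachability of the sink; the states already containing $n-1$ collapse precisely because acceptance is thereafter governed only by the sink token and by whether $(m-1)'$ can regenerate it after a $b$. Since matching this subset count against the upper bound is the delicate part, I would verify the small boundary cases by direct computation, as was done for the regular and right-ideal witnesses.
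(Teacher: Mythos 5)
Your proposal follows essentially the same route as the paper's proof: the same quotient decomposition with the chain collapse for the left-ideal factor and the unique accepting quotient $\Sigma'^*$ of the right-ideal factor, yielding the same three families of sizes $m-1$, $n$ and $n+1$, and the same witnesses analysed via the same subset-automaton structure ($m-1$ singletons, two parallel families of $n$ classes obtained by collapsing every set containing the sink $n-1$, plus the empty set). The only blemishes are cosmetic: $L_n=\Sigma^*L_n$ is the left-ideal (not right-ideal) property, and $L_n\subseteq K$ can fail when $K=\emptyset$, but the identification $L_n\cup{\Sigma'}^*L_n=\emptyset\cup{\Sigma'}^*L_n$ repairs exactly that case, as in the paper.
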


\begin{proof}
We first derive an upper bound on the complexity of $L'_mL_n$, where $L'_m$ is a right ideal over $\Sigma'$ and $L_n$ is a left ideal over $\Sigma$;
as two-sided ideals are both right and left ideals, this upper bound will apply.
%Assume $L'$ has $m$ quotients and $L$ has $n$ quotients.
In taking their product, we consider the languages over the union of their alphabets.
 If $L'_m$ and $L_n$ use different alphabets then they each have an empty quotient when taken over the combined alphabet;
 hence $L'_m$ and $L_n$ have up to $m+1$ and $n+1$ quotients respectively.

We count the number of quotients $w^{-1}(L'_mL_n)$ where $w \in (\Sigma' \cup \Sigma)^*$.
Write $w =  u_1u_2 \cdots u_kv$ where $u_1\cdots u_i \in L'_m$ for $i=1,\dots, k$ and no other prefix of $w$ is in $L'_m$.
Then $wx \in L'_mL_n$ if and only if $u_{i+1} \cdots u_k v x \in L_n$ for some $1 \le i \le k$, or $x=yz$ is such that $wy \in L'_m$ and $z \in L_n$; 
hence 
$$w^{-1}(L'_mL_n) = \bigcup_{i=1}^k (u_{i+1}\cdots u_kv)^{-1}L_n \cup (w^{-1}L'_m)L_n.$$

Consider the case where $w \in {\Sigma'}^*$.
If $k=0$, then $w^{-1}(L'_mL_n)$ simplifies to $(v^{-1}L'_m)L_n$; there are at most $m-1$ quotients of this form since $v^{-1}L'_m$ is one of the non-final quotients of $L'_m$.
Assume $k \ge 1$; then $w^{-1}L'_m = {\Sigma'}^*$ since $L'_m$ is a right ideal with respect to $\Sigma'$. Hence
$$w^{-1}(L'_mL_n) = \bigcup_{i=1}^k (u_{i+1}\cdots u_kv)^{-1}L_n \cup {\Sigma'}^*L_n.$$
Moreover there are only $n+1$ possible values for $\bigcup_{i=1}^k (u_{i+1}\cdots u_kv)^{-1}L_n$:
Notice that $v^{-1}L_n \subseteq (u_kv)^{-1}L_n \subseteq  (u_{k-1}u_kv)^{-1}L_n \subseteq \dots \subseteq (u_{i+1}\cdots u_kv)^{-1}L_n$ as long as $u_{i+1}\cdots u_k \in \Sigma^*$, since $L_n$ is a left ideal with respect to $\Sigma$.
Alternatively if $u_{i+1}\cdots u_kv$ contains a letter from $\Sigma' \setminus \Sigma$ then $(u_{i+1}\cdots u_kv)^{-1}L_n = \emptyset$.
Thus, either $\bigcup_{i=1}^k (u_{i+1}\cdots u_kv)^{-1}L = \emptyset$, or there exists a minimal $j \in \{1, \dots, k\}$ such that $u_{j+1}\cdots u_kv \in \Sigma^*$
in which case $\bigcup_{i=1}^k (u_{i+1}\cdots u_kv)^{-1}L_n =  (u_{j+1}\cdots u_kv)^{-1}L_n$.
Hence $\bigcup_{i=1}^k (u_{i+1}\cdots u_kv)^{-1}L_n$ is one of the $n+1$ quotients of $L_n$.

Now $w^{-1}(L'_mL_n) = K \cup {\Sigma'}^*L_n$, which has at most $n+1$ distinct values as $K$ is a quotient of $L_n$.
However $L_n \cup {\Sigma'}^*L_n = \emptyset \cup {\Sigma'}^*L_n$, so there are only $n$ distinct values.
So far, we have counted a total of $(m-1)+n$ quotients of the product.

Now suppose $w \not\in {\Sigma'}^*$; then $w^{-1}L'_m = \emptyset$. By the same reasoning we have
$w^{-1}(L'_mL_n) = K \cup (w^{-1}L'_m)L_n = K$ for some quotient $K$ of $L_n$.
There are at most $n+1$ quotients of $L_n$, yielding an upper bound of $(m-1)+n + (n+1) = m+2n$ quotients of $L'_mL_n$.
\medskip

We prove this bound is tight using the dialect streams $(L'_m(a,b,-,-,e,f) \mid m \ge 5)$ and $(L_n(a,c,-,-,e,f)  \mid n \ge 5)$ of Definition~\ref{def:2sided}.
We apply the usual NFA construction for product.
%: We begin with DFAs $\cD'_m(a,b,-,-,e,f)$ and $\cD_n(a,c,-,-,e,f)$, the final state $(m-1)'$ of $\cD'_m$ becomes non-final, and an $\varepsilon$-transition is added from that state to the initial state $0$ of $\cD_n$. 
This NFA is illustrated in Figure~\ref{fig:2sidedidealprod} for $m=n=5$.

\begin{figure}[h]
\unitlength 10pt
\begin{center}\begin{picture}(31,17)(-2,-4)
\gasset{Nh=2.2,Nw=2.2,Nmr=1.25,ELdist=0.4,loopdiam=1.5}
%\gasset{loopdiam=1.1}
{\small
\node(0')(1,7){$0'$}\imark(0')
\node(1')(5,7){$1'$}
\node(2')(9,7){$2'$}
\node(3')(13,7){$3'$}
\node(4')(17,7){$4'$}
\node(5')(5,3){$5'$}

\drawloop[loopangle=180,ELdist=.2](5'){$a,b,e,f$}
\drawedge(1',5'){$f$}
\drawedge(0',1'){$e$}
\drawloop(0'){$a,b,f$}
\drawloop(1'){$e$}
\drawedge[curvedepth= 1,ELdist=.1](1',2'){$a,b$}
\drawedge[curvedepth= 1,ELdist=-1.2](2',1'){$b,e$}
\drawloop(2'){$f$}
\drawedge(2',3'){$a$}
\drawedge[curvedepth=2.2,ELside=r](3',1'){$e$}
\drawedge(3',4'){$a$}
\drawloop(3'){$b,f$}
\drawedge[curvedepth=-5,ELside=r](4',1'){$a,e$}
\drawloop(4'){$b,f$}

\node(0)(9,-1){$0$}
\node(1)(13,-1){$1$}
\node(2)(17,-1){$2$}
\node(3)(21,-1){$3$}
\node(4)(13,3){$4$}\rmark(4)

\drawloop[loopangle=20,ELdist=.2](4){$a,c,e,f$}
\drawedge[ELside=r](1,4){$f$}
\drawedge(0,1){$e$}
\drawloop[loopangle=270](0){$a,c,f$}
\drawloop[loopangle=270](1){$e$}
\drawedge[curvedepth= 1,ELdist=.1](1,2){$a,c$}
\drawedge[curvedepth= 1,ELside=r](2,1){$c,e$}
\drawloop[loopangle=270](2){$f$}
\drawedge(2,3){$a$}
\drawedge[curvedepth=-3](3,1){$a,e$}
\drawloop[loopangle=270](3){$c,f$}

\drawedge[curvedepth=-1.2](5',0){$\eps$}
}
\end{picture}\end{center}
\caption{NFA for product of two-sided ideals.}
\label{fig:2sidedidealprod}
\end{figure}
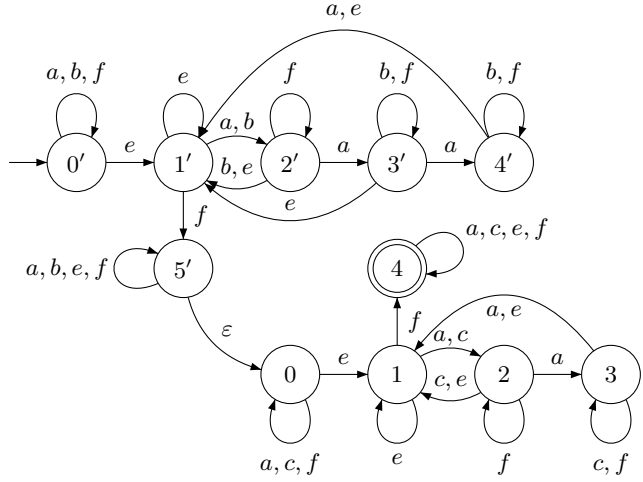

The subset construction yields sets $\{p'\} \cup S$ where $p' \in Q'_m$ and $S \subseteq Q_n$ as well as sets $S \subseteq Q_n$, although many sets of this form are not reachable here.
In $\cD'_m$, states $p' \in Q'_{m-1}$ are not reachable from $(m-1)'$ and so all sets $\{p'\} \cup S$ where $S \not= \emptyset$ are unreachable.
In $\cD_n$, states of $\{1, \dots, n-2\}$ are only reachable from $0$ using $e$, which maps $Q_n \setminus \{n-1\}$ to $1$, and no other state is reachable from $n-1$;
this restricts $S$ to contain only one state from $\{1, \dots, n-2\}$.
Thus, the only potentially reachable sets are
$\{p'\}$ for $p' \in Q'_{m-1}$,
$\{(m-1)',0, q\}$ and $\{(m-1)',0, q, n-1\}$ for $q \in Q_{n-1}$, 
$\{0, q\}$ and $\{0, q, n-1\}$ for $q \in Q_{n-1}$, 
$\{q\}$ and $\{q, n-1\}$ for $q \in Q_{n-1}$, and $\emptyset$.

We reach sets $\{p'\}$ by $ea^{p-1}$ for $1 \le p \le m-2$, and $\{(m-1)', 0\}$ by $ef$.
From $\{(m-1)', 0\}$ we reach $\{(m-1)', 0, q\}$ by $ea^{q-1}$ for $1 \le q \le n-2$.
Set $\{(m-1)',0,n-1\}$ is reached from $\{(m-1)',0,1\}$ by $f$, and $\{(m-1)',0,q,n-1\}$ is then reached by $ea^{q-1}$.
From $\{(m-1)',0\} \cup S$ we reach $\{0\} \cup S$ by $c^2$.
Set $\{q\}$ is reached from $\{0\}$ by $ea^{q-1}$ for $q \in \{1, \dots, n-2\}$,
and $\{q, n-1\}$ is reached from $\{0, n-1\}$ by the same word.
Finally, $\emptyset$ is reached from $\{0'\}$ by $c$.
%In summary, we can reach $m-1$ states $\{p'\}$, $2n-2$ sets $\{(m-1)', 0\} \cup S$, $2n-2$ sets $\{0\} \cup S$, and $2n-2$ sets $S$ where $S \subseteq Q_n \setminus \{0\}$.

In the DFA obtained from the NFA by the subset construction, many of these subsets represent equivalent states. All sets containing $n-1$ and not $(m-1)'$ accept $\{a,c,e,f\}^*$ and are mapped to $\emptyset$ by $b$; hence they are equivalent.
Similarly all sets containing $(m-1)'$ and $n-1$ accept $\{a,c,e,f\}^*$ and are mapped to $\{(m-1)',0\}$ by $b$; hence they are equivalent.
Moreover sets $\{0,q\}$ and $\{q\}$ are equivalent for $1 \le q \le n-2$, since any word that maps $0$ to $n-1$ also maps $q$ to $n-1$.
Therefore we only need to consider sets
$\{p'\}$ for $p' \in Q'_{m-1}$,
$\{(m-1)', 0, q\}$ for $q \in Q_{n-1}$, and
$\{0, q\}$ for $q \in Q_{n-1}$, as well as $\{(m-1)', n-1\}$, $\{n-1\}$, and $\emptyset$.
Every reachable set is equivalent to one of these $m+2n$ sets. We prove that they are pairwise distinguishable.

All non-empty sets are distinguished from $\emptyset$ by $efef$.
States $\{p'_1\}$ and $\{p'_2\}$ where $p_1 < p_2$ are distinguished by $a^{m-1-p_2}fef$.
Sets $\{(m-1)', 0, q_1\}$ and $\{(m-1)', 0, q_2\}$ for $q_1 < q_2$ are distinguished by $a^{n-1-q_2}f$;
sets $\{0, q_1\}$ and $\{0, q_2\}$ are similarly distinguished.
State $\{p\}$ is distinguished from any state containing $q \in Q_n$ by $ef$.
Finally $\{(m-1)', 0\} \cup S_1$ is distinguished from $S_2 \subseteq Q_n$ by $bef$.
Thus all $m+2n$ states are distinguishable.
\end{proof}

\subsection{Boolean Operations on Two-Sided Ideals}

\begin{theorem}[(Boolean Operations on Two-Sided Ideals)]
\label{thm:2sidedidealboolean}
For $m,n \ge 5$, the unrestricted complexities of boolean operations on two-sided ideals are the same as those for arbitrary regular languages.
In particular, 
 the two-sided ideals $L'_m(a,b,c,-,e,f)$ and $L_n(a,e,d,-,b,f)$ of 
 Definition~\ref{def:2sided}
meet the bound $(m+1)(n+1)$ for union and symmetric difference,
$L'_m(a,b,c,-,e,f)$ and
$L_n(a,e,-,-,b,f)$ meet the bound $mn+m$ for difference, and 
$L'_m(a,b,-,-,e,f)$ and $L_n(a,e,-,-,b,f)$
meet the bound $mn$ for  intersection.

%For $m,n \ge 5$, let $L'_m$ (respectively, $L_n$) be a two-sided ideal with $m$ (respectively, $n$) quotients over an alphabet $\Sig'$, (respectively, $\Sig$). 
%Then the complexities of boolean operations are the same as those for arbitrary regular languages.
%In particular,  the complexity of union and symmetric difference is $(m+1)(n+1)$ 
%and this bound is met by $L'_m(a,b,c,-,e,f)$ and $L_n(a,e,d,-,b,f)$; 
%the complexity of difference is $mn+m$, and this bound is met by $L'_m(a,b,c,-,e,f)$ and
%$L_n(a,e,-,-,b,f)$; the complexity of intersection is $mn$ and this bound is met by  
%$L'_m(a,b,-,-,e,f)$ and $L_n(a,e,-,-,b,f)$.

\end{theorem}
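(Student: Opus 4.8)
The plan is to follow the two‑part strategy already used in the proofs of Theorems~\ref{thm:RWitBoolean} and~\ref{thm:leftidealboolean}: first observe that the regular‑language bounds remain upper bounds, then verify that the stated dialect witnesses attain them. Since every two‑sided ideal is in particular a regular language, the bounds of Theorem~\ref{thm:boolean} --- namely $(m+1)(n+1)$ for union and symmetric difference, $mn+m$ for difference, and $mn$ for intersection --- apply verbatim as upper bounds. Thus the whole task reduces to proving tightness of these four bounds for the given witnesses.

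For tightness I first record the alphabets. The witness $L'_m(a,b,c,-,e,f)$ has alphabet $\{a,b,e,f,c\}$ and $L_n(a,e,d,-,b,f)$ has alphabet $\{a,b,e,f,d\}$, so the common alphabet is $\{a,b,e,f\}$, with $c\in\Sig'\setminus\Sig$ and $d\in\Sig\setminus\Sig'$; note that the dialect map swaps the roles of $b$ and $e$ between the two machines, which is exactly the device that forces the full $m\times n$ grid to be reachable and distinguishable. I complete both witnesses by adjoining empty states $\emp'$ and $\emp$ (sending every $c$‑transition of the completed $\cD_n$ to $\emp$ and every $d$‑transition of the completed $\cD'_m$ to $\emp'$) and form the direct product $\cP_{m,n}$, which has $(m+1)(n+1)$ states. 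The grid $S=Q'_m\times Q_n$ is then handled by citation: restricted to $\{a,b,e,f\}$, the two machines are precisely the restricted two‑sided‑ideal witnesses of~\cite{BDL15}, whose boolean complexity is $mn$, so all $mn$ states of $S$ are reachable and pairwise distinguishable by words over $\{a,b,e,f\}$ for every proper operation. Writing $V=\{(p',\emp)\mid p'\in Q'_m\}$ and $H=\{(\emp',q)\mid q\in Q_n\}$, I reach $V$ from $S$ by applying $c$ (which kills the $Q_n$‑coordinate while keeping the $Q'_m$‑coordinate live) and traverse $V$ with common letters, reach $H$ symmetrically with $d$, and reach the dead state $(\emp',\emp)$ by $cd$.

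For distinguishability I treat the four operations as in the earlier ideal proofs. For union, and for symmetric difference (whose only change is that $((m-1)',n-1)$ becomes non‑final), the key asymmetry is that every state of $V$ has an accepting word containing $c$ whereas no state of $H$ does (in $H$ the single live coordinate lies in $Q_n$, and any $c$ sends it to $\emp$), and symmetrically for $d$, $H$ and $V$; states of $S$ accept words containing $c$ and words containing $d$, while $(\emp',\emp)$ accepts nothing, so the four blocks $S$, $V$, $H$, $\{(\emp',\emp)\}$ are mutually separated, and within $V$ and within $H$ states are told apart by words in $a^*$. This gives all $(m+1)(n+1)$ states distinguishable. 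For difference $L'_m\setminus L_n$ the alphabet is a subset of $\Sig'$ and so omits $d$; the column $H$ and the dead state become unreachable, the bound drops to $mn+m$, and the cleaner witness $L_n(a,e,-,-,b,f)$ (already lacking $d$) works. For intersection the alphabet is $\{a,b,e,f\}$, so both $c$ and $d$ are absent; $V$, $H$ and the dead state all disappear and only $S$ survives, yielding $mn$ directly from the cited restricted result, with witnesses $L'_m(a,b,-,-,e,f)$ and $L_n(a,e,-,-,b,f)$.

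I expect the main obstacle to be the bookkeeping of the boundary states rather than the grid itself, which is delivered by~\cite{BDL15}: for each operation one must check precisely which of $c,d$ survive in the operation's alphabet so that the correct rows and columns collapse or become unreachable and the count lands exactly on $(m+1)(n+1)$, $mn+m$, and $mn$. The genuinely delicate verification is the $V$‑versus‑$H$ separation for union: one must confirm both that every live state of $V$ has an accepting word containing $c$ (using that $c$ acts nontrivially but non‑fatally on the $Q'_m$‑coordinate of a two‑sided ideal) and that no state of $H$ admits such a word.
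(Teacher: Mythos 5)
There is a genuine gap in the step that handles the grid $S=Q'_m\times Q_n$. You delegate both reachability and distinguishability of all $mn$ grid states to the restricted-case result of~\cite{BDL15}, claiming that the restricted boolean complexity of two-sided ideals is $mn$ for every proper operation. That is false, and the paper itself records the correct values in Theorem~\ref{thm:twosidedidealmain}(7a): the restricted complexity is $mn$ only for $\cap$ and $\oplus$; it is $mn-(m-1)$ for difference and $mn-(m+n-2)$ for union. The obstruction is structural, not an artifact of the witnesses: a two-sided ideal is in particular a right ideal, so its unique final quotient is all of $\Sig^*$; hence in the direct product over the common alphabet $\{a,b,e,f\}$ every grid state with a final coordinate accepts every word of $\{a,b,e,f\}^*$, and for union the $m+n-1$ states $((m-1)',q)$ and $(p',n-1)$ are indistinguishable by common letters (similarly the $m$ states $(p',n-1)$ are all empty for difference). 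No citation to the same-alphabet case can give you all $mn$ grid states as pairwise distinguishable for union or difference. This is exactly where the paper's proof does real work that your proposal omits: it separates grid states differing in the first coordinate by applying the private letter $c$ (sending them to distinct states of $V$, with the word $ac$ handling the one colliding pair created by $c\colon(m-2\to 1)$), and symmetrically separates states differing in the second coordinate by $d$ or $ad$, landing in $H$ where words in $a^*f$ finish the job. Your analogy with the left-ideal proof breaks precisely here, because for left ideals the restricted complexity genuinely is $mn$ for all proper operations, so the citation strategy is sound there but not for two-sided (or right) ideals.

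A secondary problem is that even where the citation strategy could in principle work ($\cap$ and $\oplus$), the witnesses do not match: the restricted witnesses of~\cite{BDL15} are $L_m(a,b,-,d,e,f)$ and $L_n(b,a,-,d,e,f)$ (with $d$ present and with $a,b$ permuted), whereas your witnesses restricted to the common alphabet are $L'_m(a,b,-,-,e,f)$ and $L_n(a,e,-,-,b,f)$ (no $d$, and with $b,e$ permuted). These are different dialect pairs, so ``precisely the restricted two-sided-ideal witnesses'' is not accurate, and reachability of the grid cannot be imported verbatim either. The paper avoids this by proving reachability of $Q'_{m-1}\times Q_{n-1}$ directly with words over $\{a,b,e\}$ and then extending to the boundary states. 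The block-separation argument you give ($V$ versus $H$ versus $S$ versus $(\emp',\emp)$ via acceptance of words containing $c$ or $d$) and the alphabet bookkeeping for difference and intersection are fine and match the paper; the missing piece is the within-grid distinguishability for union and difference.
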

\begin{proof}
The upper bounds on the complexity of boolean operations for two-sided ideals are the same as for regular languages.
%$mn+m+n+1$ for union and symmetric difference, $mn+m+1$ for difference, and $mn+1$ for intersection of left ideals with $m$ and $n$ quotients, respectively.
We show that the two-sided ideals $L'_m(a,b,c,-,e,f)$ and $L_n(a,e,d,-,b,f)$ of Definition~\ref{def:2sided} meet these bounds.

\begin{figure}[ht]
\unitlength 8.5pt
\begin{center}\begin{picture}(34,33)(0,-3.5)
\gasset{Nh=2.8,Nw=2.8,Nmr=1.2,ELdist=0.3,loopdiam=1.2}
	{\small
\node(0'0)(2,25){$0',0$}\imark(0'0)
\node(1'0)(2,20){$1',0$}
\node(2'0)(2,15){$2',0$}
\node(3'0)(2,10){$3',0$}
\node(4'0)(2,5){$4',0$}\rmark(4'0)
\node(5'0)(2,0){$\emp',0$}

\node(0'1)(8,25){$0',1$}
\node(1'1)(8,20){$1',1$}
\node(2'1)(8,15){$2',1$}
\node(3'1)(8,10){$3',1$}
\node(4'1)(8,5){$4',1$}\rmark(4'1)
\node(5'1)(8,0){$\emp',1$}

\node(0'2)(14,25){$0',2$}
\node(1'2)(14,20){$1',2$}
\node(2'2)(14,15){$2',2$}\
\node(3'2)(14,10){$3',2$}
\node(4'2)(14,5){$4',2$}\rmark(4'2)
\node(5'2)(14,0){$\emp',2$}

\node(0'3)(20,25){$0',3$}
\node(1'3)(20,20){$1',3$}
\node(2'3)(20,15){$2',3$}
\node(3'3)(20,10){$3',3$}
\node(4'3)(20,5){$4',3$}\rmark(4'3)
\node(5'3)(20,0){$\emp',2$}

\node(0'4)(26,25){$0',4$}\rmark(0'4)
\node(1'4)(26,20){$1',4$}\rmark(1'4)
\node(2'4)(26,15){$2',4$}\rmark(2'4)
\node(3'4)(26,10){$3',4$}\rmark(3'4)
\node(4'4)(26,5){$4',4$}\rmark(4'4)
\node(5'4)(26,0){$\emp',4$}\rmark(5'4)

\node(0'5)(32,25){$0',\emp$}
\node(1'5)(32,20){$1',\emp$}
\node(2'5)(32,15){$2',\emp$}
\node(3'5)(32,10){$3',\emp$}
\node(4'5)(32,5){$4',\emp$}\rmark(4'5)
\node(5'5)(32,0){$\emp',\emp$}

\drawedge(0'0,0'1){$b$}
\drawedge(0'0,1'0){$e$}

\drawedge(1'0,2'0){$a$}
\drawedge(2'0,3'0){$a$}
%\drawedge[curvedepth=2](3'0,1'0){$a$}

\drawedge[ELpos=35](1'0,2'1){$b$}
\drawedge[ELpos=35, ELside=r](2'0,1'1){$b$}
\drawedge(3'0,3'1){$b$}

\drawedge(0'1,0'2){$a$}
\drawedge(0'2,0'3){$a$}
%\drawedge[curvedepth=-2, ELside=r](0'3,0'1){$a$}

\drawedge[ELpos=35, ELside=r](0'1,1'2){$e$}
\drawedge[ELpos=35](0'2,1'1){$e$}
\drawedge(0'3,1'3){$e$}

\drawedge(1'1,2'2){$a$}
\drawedge(2'2,3'3){$a$}
\drawedge(1'2,2'3){$a$}
%\drawedge[curvedepth=-0.6](2'3,3'1){$a$}
\drawedge(2'1,3'2){$a$}
%\drawedge[curvedepth=0.69](3'2,1'3){$a$}

\drawedge[curvedepth=-2.5, ELside=r](1'0,4'0){$f$}
\drawedge[curvedepth=2.5](0'1,0'4){$f$}
\drawedge[curvedepth=2.5](4'1,4'4){$f$}
\drawedge[curvedepth=-2.5, ELside=r](1'4,4'4){$f$}
\drawedge[curvedepth=2.5](1'5,4'5){$f$}
\drawedge[curvedepth=-2.5, ELside=r](5'1,5'4){$f$}

\drawedge(4'0,4'1){$b$}
\drawedge(4'1,4'2){$a$}
\drawedge(4'2,4'3){$a$}
\drawedge(0'4,1'4){$e$}
\drawedge(1'4,2'4){$a$}
\drawedge(2'4,3'4){$a$}

\drawedge(0'4,0'5){$c$}
\drawedge(1'4,1'5){$c$}
\drawedge[ELpos = 35, ELside=r](2'4,1'5){$c$}
\drawedge(3'4,3'5){$c$}
\drawedge(4'4,4'5){$c$}
\drawedge(5'4,5'5){$c$}

\drawedge(4'0,5'0){$d$}
\drawedge(4'1,5'1){$d$}
\drawedge[ELpos = 35](4'2,5'1){$d$}
\drawedge(4'3,5'3){$d$}
\drawedge(4'4,5'4){$d$}
\drawedge(4'5,5'5){$d$}

\drawedge(5'0,5'1){$b$}
\drawedge(5'1,5'2){$a$}
\drawedge(5'2,5'3){$a$}

\drawedge(0'5,1'5){$e$}
\drawedge(1'5,2'5){$a$}
\drawedge(2'5,3'5){$a$}
}
\end{picture}\end{center}
\caption{Direct product for union of $L'_5(a,b,c,-,e,f)$ and $L_5(a,e,d,-,b,f)$ shown partially.}
\label{fig:2sidedidealcross}
\end{figure}
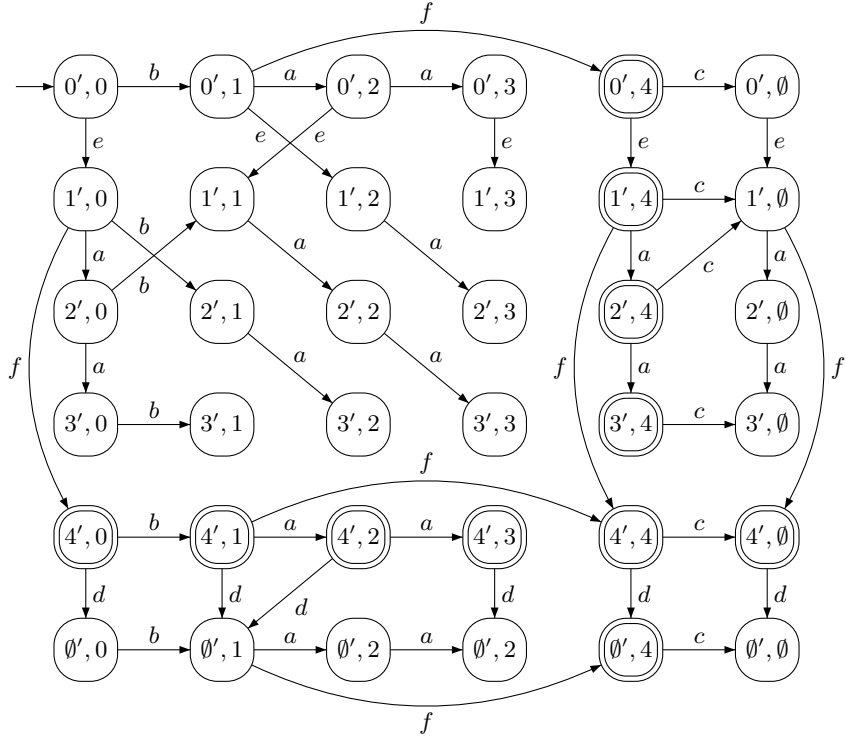

The direct product for union is illustrated in Figure~\ref{fig:2sidedidealcross}.
We check that all $(m+1)(n+1)$ states are reachable.
First consider the states of $Q'_{m-1} \times Q_{n-1}$.
State $(0',0)$ is initial, $(0', q)$ is reached by $b a^{q-1}$, and $(p',0)$ is reached by $e a^{p-1}$.
State $(1',q)$ is reached from $(0',q)$ by $e^2$, and $(p', 1)$ is reached from $(p', 0)$ by $b^2$.
State $(p', q)$ with $q \ge p$ is reached from $(1', q-(p-1))$ by $a^{p-1}$.
By symmetry all states $(p', q)$ with $p \ge q$ are also reachable; hence the states of $Q'_{m-1} \times Q_{n-1}$ are reachable.
The remaining states of $\{(m-1)', \emptyset'\} \times Q_n\cup\{\emptyset\}$ and $Q'_m \cup \{\emptyset'\} \times \{n-1, \emptyset\}$ are easily seen to be reachable in Figure~\ref{fig:2sidedidealcross}.

It remains to determine which states are distinguishable for each operation.
Let $H = \{(\emptyset',q)\mid q \in Q_n\}$ and $V = \{(p', \emptyset) \mid p' \in Q'_m\}$.

\bd

\item[\bf Union] The final states are $\{((m-1)',q) \mid q \in Q_n \cup \{\emptyset\}\}$ and $\{(p', n-1) \mid p' \in Q'_m \cup \{\emptyset'\}\}$.
Every state in $V$ accepts a word with a $c$, whereas no state in $H$ accepts such words.
Similarly, every state in $H$ accepts a word with a $d$, whereas  no state in $V$ accepts such words.
Every state in $Q'_m \times Q_n$ accepts a word with a $b$ and a word with a $d$. State $(\emp',\emp)$ accepts no words at all.
Hence any two states chosen from different sets (the sets being $Q'_m \times Q_n$, $H$, $V$,  and $\{(\emp',\emp)\}$) are distinguishable.
States in $H$ are distinguishable by words in $a^*f$, and those in $V$ by words in $a^*f$.
States that differ in the first coordinate are sent to distinct states of $V$ by $c$; the exceptions are the pair of states $(1', q_1)$ and $(2',q_2)$, which are sent to distinct states of $V$ by $ac$. Similarly, states that differ in the second coordinate are sent to distinct states of $H$ by $d$ or $ad$.
Therefore all $mn+m+n+1$ states are pairwise distinguishable.

\item[\bf Symmetric Difference]
The final states here are all the final states for union except $( (m-1)',n-1 )$. The rest of the argument is the same as for union.

\item[\bf Difference]
The final states now are $\{((m-1)', q) \mid q\neq n-1\}$.
The $n$ states of the form $(\emp',q)$, $q \in Q_n$ are now equivalent to the empty state $\{(\emp',\emp)\}$;
all other states are non-empty as they accept $cef$.
States that differ in the first coordinate are distinguished by words in $ca^*f$;
the exceptions are the pairs of states $(1', q_1)$ and $(2',q_2)$ which are distinguished by words in $aca^*f$.
States $((m-1)', q)$ for $q \in Q_n$ are distinguished by words in $a^*f$, and
$((m-1)', \emptyset)$ is distinguished from $((m-1)', q)$ by $bf$.
States that differ in the second coordinate are mapped by either $ef$ or $eaf$ to distinct states of $\{((m-1)', q) \mid q \in Q_n \cup \{\emptyset\}\}$, and are hence distinguishable.
Hence we have $mn+m+1$ distinguishable states.
However, $d$ is not in the alphabet of the difference, and so $d$ and the empty state can be removed, giving bound $mn+m$ reached by 
$L'_m(a,b,c,-,e,f)$ and $L_n(a,e,-,-,b,f)$.

\item[\bf Intersection]
Here only $((m-1)', n-1 )$ is final and all states $(p', \emp)$, $p' \in Q'_m$, and $(\emp',q)$, $q\in Q_n$ are equivalent to $\{(\emp',\emp)\}$, leaving $mn+1$ states.
All other states are non-empty as they accept $efbf$.
States $(p', n-1)$ for $p' \in Q'_m$, as well as 
states $((m-1)', q)$ for $q \in Q_n$ are distinguished by words in $a^*f$.
%, and states $((m-1)', q)$ for $q \in Q_n$ are distinguished by words in $a^*f$.
States that differ in the first coordinate are mapped by either $bf$ or $baf$ to distinct states of $\{((m-
1)', q) \mid q \in Q_n\}$, and are hence distinguishable.
States that differ in the second coordinate are mapped by either $ef$ or $eaf$ to distinct states of $\{(p', n-1) \mid p' \in Q'_m\}$, and are hence distinguishable.
However, $c$ and $d$ are not in the alphabet of the intersection, and so $c$, $d$ and the empty state can be removed, giving bound $mn$ reached by 
$L'_m(a,b,-,-,e,f)$ and $L_n(a,e,-,-,b,f)$.
\ed
The complexities of all ten boolean functions on two-sided ideals are given in Remark~\ref{rem:boolean}.
\end{proof}
\subsection{Most Complex Two-Sided Ideals}

\begin{theorem}[(Most Complex Two-Sided Ideals)]
\label{thm:twosidedidealmain}
For each $n\ge 5$, the DFA of Definition~\ref{def:leftideal} is minimal and its 
language $L_n(a,b,c,d,e,f)$ has complexity $n$.
The stream $(L_n(a,b,c,d,e,f) \mid n \ge 5)$  with some dialect streams
%$(L_n(a,c,-,-,e,f) \mid n \ge 5)$ and $(L_n(a,e,d,-,b,f) \mid n \ge 5)$
is most complex in the class of regular two-sided ideals.
In particular, this stream meets all the complexity bounds listed below, which are maximal for two-sided ideals.
In several cases the bounds can be met with a reduced alphabet.

\begin{enumerate}
\item
The syntactic semigroup of $L_n(a,b,c,d,e,f)$ has size $n^{n-2}+(n-2)2^{n-2}+1$.  Moreover, fewer than six inputs do not suffice to meet this bound.
\item
Each quotient of $L_n(a,-,-,d,e,f)$ has complexity $n$.
\item
The reverse of the language $L_n(a,-,-,d,e,f)$ has complexity $2^{n-1}+1$, and $L_n(a,-,-,d,e,f)$ has $2^{n-1}+1$ atoms.
\item
For each atom $A_S$ of $L_n(a,b,c,d,e,f)$, the complexity $\kappa(A_S)$ satisfies:
\begin{equation*}
	\kappa(A_S) =
	\begin{cases}
		 n, 			& \text{if $S=Q_n$;}\\
		 2^{n-2}+n-1,		& \text{if $S=Q_n\setminus \{1\} $;}\\
		  1 + \sum_{x=1}^{|S|}\sum_{y=1}^{n-|S|}\binom{n-2}{x-1}\binom{n-x-1}{y-1},
		 			& \text{otherwise.}
		\end{cases}
\end{equation*}
\item
The star of $L_n(a,-,-,-,e,f)$ has complexity $n+1$.
\item Product
	\be
	\item 
	Restricted case:\\
	The product $L_m(a,-,-,-,e,f) L_n(a,-,-,-,e,f)$ has complexity $m+n-1$.
	\item
	Unrestricted case:\\
	The product $L'_m(a,b,-,-,e,f) L_n(a,c,-,-,e,f)$ has complexity $m+2n$.
	\ee
\item Boolean operations
	\be
	\item
	Restricted case:\\
	For any proper binary boolean function $\circ$, the complexity of $L_m(a,b,-,d,e,f) \\ \circ L_n(b,a,-,d,e,f)$
is maximal. In particular,
	\be
	\item
	 $L_m(a,b,-,d,e,f) \cap L_n(b,a,-,d,e,f)$ has complexity $mn$, as does\\ 
	 $L_m(a,b,-,d,e,f) \oplus L_n(b,a,-,d,e,f)$.
	 \item
	$L_m(a,b,-,d,e,f) \setminus L_n(b,a,-,d,e,f)$ has complexity 	$mn
	- (m-1)$.
	\item
	$L_m(a,b,-,d,e,f) \cup L_n(b,a,-,d,e,f)$ has complexity 	$mn
	- (m+n-2)$.
%	\item 
%	If $m\neq n$, the bounds are met by $L_m(a,b,-,d,e,f)$ and $ L_n(a,b,-,d,e,f)$.
	\ee

	\item
	Unrestricted case:\\
	The complexity of $L'_m(a,b,c,-,e,f) \circ L_n(a,e,d,-,b,f)$  is the same as  for arbitrary regular languages:
$(m+1)(n+1)$ if $\circ\in \{\cup,\oplus\}$, 
$mn+m$ if $\circ=\setminus$, and $mn$ if $\circ=\cap$.
The bound for difference is also met by $L'_m(a,b,c,-,e,f) \setminus L_n(a,e,-,-,b,f)$ and the bound for intersection  by $L'_m(a,b,-,-,e,f) \cap L_n(a,e,-,-,b,f)$.

%	The complexity of $L'_m(a,b,c,-,e,f) \circ L_n(a,e,d,-,b,f)$ 
%	is $mn+m+n+1$ if $\circ\in \{\cup,\oplus\}$, 
%	of $L'_m(a,b,c,-,e,f) \setminus L_n(a,e,-,-,b,f)$ it is
% 	$mn+m$, and of $L'_m(a,b,-,-,e,f) \setminus L_n(a,e,-,-,b,f)$ it is $mn$. 
	\ee
\end{enumerate}
\end{theorem}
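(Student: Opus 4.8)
The plan is to assemble this theorem from results already established, in exactly the manner used for the three preceding most-complex theorems (Theorems~\ref{thm:main}, \ref{thm:rightideals}, and~\ref{thm:leftidealmain}): each individual bound has either been proved in earlier work or appears as a separate theorem in this section, so the real task is to collect them and confirm that one stream, together with its dialects, realises them all.

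First I would establish the basic facts that open the statement. The DFA $\cD_n(a,b,c,d,e,f)$ of Definition~\ref{def:2sided} is minimal because all $n$ states are reachable---state $1$ from the initial state $0$ by $e$, states $2,\dots,n-2$ from $1$ by powers of $a$, and $n-1$ from $1$ by $f$---and pairwise distinguishable, since $n-1$ is the unique accepting state and is a sink, while any two other states are separated by a word in $a^*$ followed by $f$ (preceded by $e$ for state $0$). That $L_n(a,b,c,d,e,f)$ is a two-sided ideal of complexity $n$ follows because $n-1$ absorbs every input and $e$ maps $Q_{n-1}$ into the cycle on $\{1,\dots,n-2\}$; these structural facts are precisely those recorded in~\cite{BrYe11,BDL15}.

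Next I would dispatch items (1)--(5) and the restricted halves of (6) and (7) by citation. The syntactic-semigroup size $n^{n-2}+(n-2)2^{n-2}+1$, with the accompanying claim that six inputs are necessary, is the transition-semigroup result of~\cite{BrSz14a}; the quotient complexities, the reversal complexity $2^{n-1}+1$ and atom count, the atom complexities, the star complexity $n+1$, the restricted product bound $m+n-1$, and the restricted boolean bounds of items (6a) and (7a) are the content of~\cite{BDL15} (atoms having also been analysed in~\cite{BrDa15}). For the genuinely new unrestricted case I would invoke the two theorems just proved: item (6b) is Theorem~\ref{thm:2sidedidealproduct}, giving complexity $m+2n$ for $L'_m(a,b,-,-,e,f)\,L_n(a,c,-,-,e,f)$, and item (7b) is Theorem~\ref{thm:2sidedidealboolean}, giving $(m+1)(n+1)$, $mn+m$, and $mn$ for union/symmetric difference, difference, and intersection of $L'_m(a,b,c,-,e,f)$ and $L_n(a,e,d,-,b,f)$, together with the stated reduced-alphabet witnesses.

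The main obstacle is not any single bound but the consolidation: one must verify that \emph{every} witness named above is a permutational dialect of the one DFA $\cD_n(a,b,c,d,e,f)$, in the sense of the dialect definition of Section~\ref{sec:basics}. The various results are proved with their own relabellings and alphabet restrictions---for instance $(a,b,-,-,e,f)$ and $(a,c,-,-,e,f)$ for product, but $(a,b,c,-,e,f)$ and $(a,e,d,-,b,f)$ for the boolean operations---so I would check that each such labelling arises from $\{a,b,c,d,e,f\}$ by a partial permutation $\pi$, i.e.\ by permuting and deleting input letters. Once this bookkeeping confirms that all witnesses descend from a single stream, and since each bound coincides with the corresponding upper bound (and is therefore maximal for two-sided ideals), the theorem follows at once from the cited results.
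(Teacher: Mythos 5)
Your proposal matches the paper's own proof, which is exactly this assembly argument: the restricted bounds (items 1--5, 6a, 7a) are cited from~\cite{BDL15} (with the semigroup result from~\cite{BrSz14a}), and the unrestricted claims (6b, 7b) are exactly Theorems~\ref{thm:2sidedidealproduct} and~\ref{thm:2sidedidealboolean}. The extra details you supply---minimality of $\cD_n$ and the check that every witness is a permutational dialect of the single stream---are correct bookkeeping that the paper leaves implicit.
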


\begin{proof}
The proofs for the restricted cases can be found in~\cite{BDL15}, and
the claims about the unrestricted complexities are proved in Theorems~\ref{thm:2sidedidealproduct} and~\ref{thm:2sidedidealboolean}.
\end{proof}
%\noin

\section{Conclusions}

Two complete DFAs over different alphabets $\Sig'$ and $\Sig$ are incomplete DFAs over  $\Sig' \cup \Sig$. Each DFA can be completed by adding  an empty state and sending all transitions induced by letters not in the DFA's alphabet to that state.
This results in an $(m+1)$-state DFA and an $(n+1)$-state DFA. 
%From the theory about DFAs over the same alphabet we know that $(m+1)(n+1)$ is an upper bound for all boolean operations on the original DFAs, and that $m2^{n+1} +2^n$ is an upper bound for product. 
We have shown that the tight bounds for boolean operations are
$(m+1)(n+1)$ for union and symmetric difference, $mn+m$ for difference, and $mn$ for intersection, while
the tight bound for product is $m2^n+2^{n-1}$. In the same-alphabet case the tight bound is $mn$ for all boolean operations and it is $(m-1)2^n+2^{n-1}$ for product. 
 In the case of all three types of ideals, the unrestricted bounds for boolean operations are the same as those for arbitrary regular languages.
The bounds for product are higher that in the restricted case for all three kinds of ideals.

In summary,  the restriction of identical alphabets is unnecessary and leads to incorrect results. 
It  should be noted that if the two languages in question already have empty quotients, then making the alphabets the same does not require the addition of any states, and the traditional same-alphabet methods are correct. This is the case, for example, for prefix-free, suffix-free and finite languages.
\medskip

%\noin
%{\bf  Acknowledgment}
%I am very grateful to Sylvie Davies, Bo Yang Victor Liu and Corwin Sinnamon for careful proofreading and constructive comments.
%I thank Marek Szyku{\l}a for contributing the important Remark~1 and Proposition~1.

%\bibliographystyle{jalc.bst}
%\biblio{alphabets}

\providecommand{\noopsort}[1]{}

\end{document}